\DeclareMathOperator*{\argmin}{arg\,min}
\newcommand{\ft}{f_{t}\left(\ln\left(S_{t}\right),X_{t},m_{t}\right)}
\newcommand{\ftj}{f_{t}\left(\ln\left(S_{t}\right),X_{t}+\mathbf{e},m_{t}\right)}
\newcommand{\ftzero}{f_{t_0}\left(\ln\left(S_{t_0}\right),X_{t_0},m_{t_0}\right)}
\newcommand{\ftm}{f_{t}\left(\ln\left(S_{t^{-}}\right),X_{t},m_{t^{-}}\right)}
\newcommand{\cflp}{\mathbb{E}^{\mathbb{Q}}\left[e^{iu\ln\left(S_{T}\right)}\left|\mathcal{F}_{t_0}\right.\right]}
\journal{Journal}
\begin{document}

\title{Option Pricing with a Compound CARMA(p,q)-Hawkes}

\author[1]{Lorenzo Mercuri}

\author[1]{Andrea Perchiazzo}

\author[2]{Edit Rroji}

\authormark{Mercuri \textsc{et al.}}
\titlemark{Option Pricing with a Compound CARMA(p,q)-Hawkes}

\address[1]{\orgdiv{Department of Economics, Management and Quantitative Methods}, \orgname{University of Milan}, \orgaddress{\country{Italy}}}

\address[2]{\orgdiv{Department of Statistics and Quantitative Methods}, \orgname{University of Milano-Bicocca}, \orgaddress{\country{Italy}}}

\corres{Corresponding author Lorenzo Mercuri, This is sample corresponding address. \email{lorenzo.mercuri@unimi.it}}



\abstract[Abstract]{

A self-exciting point process with a continuous-time autoregressive moving average intensity process, named CARMA(p,q)-Hawkes model, has recently  been introduced. The model generalizes the Hawkes process by substituting the Ornstein-Uhlenbeck intensity with a CARMA(p,q) model where the associated state process is driven by the counting process itself. The proposed model preserves the same degree of tractability as the Hawkes process, but it can reproduce more complex time-dependent structures observed in several market data. The paper presents a new model of asset price dynamics based on the CARMA(p,q) Hawkes model. It is constructed using a compound version of it with a random jump size that is independent of both the counting and the intensity processes and can be employed as the main block for pure jump and (stochastic volatility) jump-diffusion processes. The numerical results for pricing European options illustrate that the new model can replicate the volatility smile observed in financial markets. Through an empirical analysis, which is presented as a calibration exercise, we highlight the role of higher order autoregressive and moving average parameters in pricing options.
}

\keywords{Jump processes, CARMA(p,q)-Hawkes model, Volatility smile, Gauss-Laguerre quadrature}


\maketitle


\renewcommand\thefootnote{\arabic{footnote}}
\setcounter{footnote}{-0}

\section{Introduction}\label{intro}
In both academic and financial industry literature it is well established that the Black-Scholes model is inconsistent with empirical observations; e.g., the volatility smile, skewness, and sudden large price fluctuations (intended as jumps in prices as done in \cite{arias2024heston}). Several models have been developed and studied in order to address these limitations and to ensure consistency with market data. Examples of these stochastic processes, but are not limited to, include local volatility models, stochastic volatility models, and (stochastic volatility) jump-diffusion models. In this paper we focus on the framework of financial models with jumps. 

It is widely acknowledged that the role of jumps is of great importance within the context of financial market dynamics. To illustrate, as explained in \cite[Section 1.1]{tankov2015financial}, in the event of sudden price movements, a diffusion model is unable to capture this behaviour (i.e., the presence of jumps in prices), whereas a model with jumps can. Moreover, the pricing of derivatives is significantly influenced by the presence of jumps. Models incorporating jumps are able to replicate the implied volatility smile and skew that is observed in financial markets. In this context, two notable milestones are the jump-diffusion processes proposed by Merton \cite{merton1976option} and Kou \cite{kou2002jump, kou2004option}. However, as pointed out in \cite{du2019pricing} and \cite{jing2021consistent}, standard jump-diffusion models\footnote{For the sake of clarity, the term ``standard jump-diffusions'' is used to describe a model composed of two components: a diffusive part and a jump part. Here the jumps represent rare events that can be modelled with a compound Poisson process with constant intensity. This implies that increments are independent and identically distributed (i.i.d.),  precluding the possibility of serial dependence in jumps.
A list of standard jump-diffusion models is presented in \cite{du2019pricing}; here, we report \cite{bates1996jumps, bates2000post, pan2002jump, broadie2007model}.} are insufficient for accurately representing empirical observations due to their inability to account for the clustering behaviour of jumps; that is, the phenomenon in which large and sudden  fluctuations in prices, frequently triggered by unexpected events, tend to concentrate (cluster) in short time intervals. As explained in \cite{fulop2015self, du2019pricing, jing2021consistent},  the occurrence of a single jump appears to stimulate the occurrence of subsequent jumps (i.e., the probability of future jumps increases, thereby giving rise to a self-excitation effect), culminating in the formation of clusters of jumps (especially pronounced during periods of financial turbulence). In the presence of jump propagation/contagion, future jumps would affect option trading in a manner distinct from the future jumps under traditional jump-diffusion models. For example, as argued in \cite{du2019pricing}, the jump propagation/contagion effect prompts forward-looking investors to perceive future jumps as a trigger for a severe financial turmoil (or market collapse). Consequently, as the expectation of occurrence of jumps increases, investors become motivated to purchase deep out-of-the-money options as a hedging strategy. 

Due to the fact that standard jump-diffusion models are not well-equipped to adequately represent the self-excitation effect and the clustering pattern of jumps observed in market data, many authors have chosen to employ the Hawkes process \cite{Hawkes1971}, a parsimonious and self-exciting model, as an alternative. In the literature, the Hawkes process has been used and extended for the purpose of a) modelling the default risk of a credit portfolio \cite{errais2010affine}; b) market microstructure analysis and high-frequency data \cite{bacry2013modelling, bacry2014hawkes, maneesoonthorn2017inference}; c) modelling interest rates \cite{hainaut2016model, hainaut2016bivariate}; d) simultaneous modelling of market indices \cite{ait2015modeling}; e) and derivatives pricing and hedging \cite{ma2017pricing, hainaut2018hedging, brignone2020asian, arias2024heston}.  For a survey regarding the Hawkes process, the Hawkes jump-diffusions model, and its application in finance, refer to \cite{hawkes2018hawkes, hawkes2022hawkes}.

Although the Hawkes process with an exponential kernel is Markovian and exhibits a high degree of tractability that thereby makes it a valuable tool for real applications (see \cite{errais2010affine}), the specific form of the kernel constraints the shape of the temporal dependence structure of the number of jumps observed in intervals of equal length. Specifically, the autocorrelation in a Hawkes model is a decaying function of the time lags (refer to \cite{da2014hawkes}). This makes the exponential Hawkes model unsuitable for capturing the dependence structures observed in a multitude of real-world data sets (e.g., wind speed data \cite{benth2019non}, mortality rates \cite{hitaj2019levy}, earthquake and intra-day orders of an Italian government bond data \cite{MERCURI20241}). It should be noted that the empirical analysis presented in this work similarly reveals that the Hawkes jump-diffusion model is not able to accurately represent empirical data, especially deep out-of-the-money options in which there is a high level of trading volume. 

In consideration of the shortcomings of the Hawkes model, an extension and generalisation of the aforementioned self-exciting point process, namely the CARMA(p,q)-Hawkes model \cite{MERCURI20241}, has been proposed in the literature as a potential alternative. The CARMA(p,q)-Hawkes model is a Hawkes process in which the intensity of the model is governed by a continuous-time autoregressive moving average (CARMA) process,\footnote{For a survey regarding CARMA(p,q) models and its applications, we suggest \cite{doob1944elementary, brockwell2001levy, brockwell2004representations, andresen2014carma, tomasson2015some, mercuri2021finite}.} rather than by an Ornstein–Uhlenbeck model. In order to generate the self-exciting effect, the driving noise of the CARMA(p,q) model is the counting process itself. The model maintains a similar analytical structure of the Hawkes process with the exponential kernel that allows to compute, for example, the infinitesimal generator, the high-order moments, the autocorrelation function, and the joint characteristic function. However, as documented in \cite{MERCURI20241}, it is capable of reproducing a greater range of time-dependence structures observed in the data due to the fact that the kernel is no longer exponentially decreasing. Starting from the CARMA(p,q)-Hawkes model, this paper introduces a new model for asset price dynamics within the context of financial models with jumps and is constructed using a compound CARMA(p,q)-Hawkes model with a random jump size independent of both the counting process and the intensity process. 
The novel model can be used as the basis for the construction of a pure jump model, a jump-diffusion model (e.g., a CARMA(p,q)-Hawkes jump-diffusion model in contrast to well-known Hawkes jump-diffusion model), and a stochastic volatility jump-diffusion model (e.g., Heston CARMA(p,q)-Hawkes jump-diffusion model), serving as the underlying asset price dynamics in a new option pricing model. 

The contribution of this paper is threefold. Firstly, it presents a novel financial model based on CARMA(p,q)-Hawkes jumps and a framework that encompasses both the real-world probability and risk-neutral probability measures. Specifically, we propose a dynamics for the log-price that has log-affine characteristic function under the real-world probability measure. Given that the market in question is incomplete, we introduce a change of measure that produces a distortion in the distribution of the jump size that is exactly the Esscher transform \cite{gerber1993option} of the jump size. The introduced change of measure preserves the structure of the log-price under the risk-neutral probability measure in which the characteristic function of the log-price remains log-affine. 
We expect that, for a given distribution of the jump size, the model with a compound CARMA(p,q)-Hawkes is able to capture the features in option prices due to the more complex memory structure of time arrivals of jumps. 

Secondly, a new approach for pricing European options using the characteristic function is introduced. Instead of the Carr-Madan formula \cite{carr1999option}, which requires a suitable choice of the damping factor, or the COS method \cite{fang2009novel} that introduces three approximation errors (see \cite[Chapter 6.2.3]{oosterlee2019mathematical} for the error analysis), the proposed approach for computing European option prices is based on the Gauss-Laguerre quadrature. The approximation error is controlled by the order of Laguerre polynomials and the approach appears to be stable and less time-consuming in the calibration exercise.

In order to assess the effectiveness of the proposed approach, option prices obtained through the Gauss-Laguerre quadrature are compared with Monte Carlo prices. Thirdly, we present a simulation algorithm for our model, which is inspired by the thinning algorithm proposed in \cite{Ogata1981}. This algorithm is based on the possibility of bounding from above the intensity of a CARMA(p,q)-Hawkes by the intensity of a Hawkes with exponential kernel. 

For the sake of clarity, we would like to reiterate that the present work restricts its focus to the structure of the time arrivals of jumps. As a matter of fact, we wish to emphasise that all ingredients introduced in the paper, namely the compound CARMA(p,q)-Hawkes process, the Radon-Nikodym derivative, the Gauss-Laguerre method for pricing European options numerically, and the simulation algorithm, can be readily applied to the case of (stochastic volatility) jump-diffusion processes where the jumps are modelled by a compound CARMA(p,q)-Hawkes process. 

In light of the aforementioned considerations, the paper is organised as follows. Section~\ref{sec2} presents the framework for the construction of a financial model with jumps using a compound CARMA(p,q)-Hawkes model. It specifically shows the dynamics for the log-price process and its characteristic function under the real-world probability measure, the change of measure, and finally, the dynamics for the log-price process and its characteristic function under the risk-neutral probability measure. The diffusion component is excluded from this section as the primary emphasis is on the jump component. This is due to the fact that the construction of a jump-diffusion model is a relatively straightforward calculation. 
Finally, we generalize the pricing model by adding a diffusion term. 
In Section~\ref{GaussLaguerre}, the novel approach for pricing European options using the characteristic function through the Gauss-Laguerre quadrature is introduced. Section~\ref{sim_algorithm} discusses a numerical method for simulating the underlying asset, modelled as a CARMA(p,q)-Hawkes jump-diffusion process, under the risk-neutral probability measure in which an efficient thinning simulation algorithm for the CARMA(p,q)-Hawkes model is developed. In Section~\ref{num_analysis}, we present the numerical results in terms of call option prices obtained through the Gauss-Laguerre approach and the Monte Carlo simulation, as well as the implied volatilities by means of the Gauss-Laguerre approach, for Hawkes jump-diffusion, CARMA(2,1)-Hawkes jump-diffusion, and CARMA(3,1)-Hawkes jump-diffusion models that confirm the theoretical results. Furthermore, a sensitivity analysis is performed. Section~\ref{emp_analysis} presents the empirical results using options written on \textit{GameStop} company and Section~\ref{conclusions} concludes. Appendix~\ref{app1} contains a simplified version of the model presented in Section~\ref{sec2} where the expectation of the exponential of jump size is constrained. In this simplified version, a closed-form solution for option prices similar to Merton's jump-diffusion model is available. Three distributions for the jump size are considered.

\section{Asset price dynamics with a compound CARMA(p,q)-Hawkes jump component}\label{sec2}

Let $\left(\Omega,\mathcal{F},\mathbb{F},\mathbb{P}\right)$ be a filtered probability space with a filtration $\mathbb{F}=\left(\mathcal{F}_{t}\right)_{t\in \left[0,\mathbb{T}\right]}$ with $\mathbb{T}>0$ where all the processes defined below are adapted. Consider an underlying asset with price $S_t$ that, under the real-world probability measure $\mathbb{P}$, satisfies the following stochastic differential equation (SDE):
\begin{equation}
	\mbox{d}S_t=\alpha_t S_{t^{-}} \mbox{d}t+S_{t^{-}} \mbox{d}\tilde{Y}_t
	\label{pureJumpwithnounitary}
\end{equation}
given an initial condition for $S_{t_0}$ where $t_0 \in \left[0,\mathbb{T}\right)$ and $\left\{\alpha_{t}\right\}_{t \in \left[0,\mathbb{T}\right]}$ is a predictable process.\footnote{Note that while we assume that the price originates from a dynamic that starts at time 0, $t_0$ will coincide with the day of the market quote.}
The c\`adl\`ag stochastic process $\left\{\tilde{Y}_t\right\}_{t \in \left[0,\mathbb{T}\right]}$, named compound CARMA(p,q)-Hawkes, has 
piecewise sample paths starting at 0
and is defined as
\begin{equation}
\tilde{Y}_t =\sum_{k=1}^{N_t}\epsilon_k, 
\label{CompCH}
\end{equation}
where $N_{t}$ is a counting process and $\left\{\epsilon_{k}\right\}_{k\geq 1}$ is a sequence of i.i.d. random variables that represents the jump size in the price dynamics. The processes $N_{t}$ and $\left\{\epsilon_{k}\right\}_{k\geq 1}$ are independent and $\epsilon_k \overset{d}{=} \bar{\epsilon}, \forall k \in \mathbb{N}_{+} $. The domain of the distribution $\bar{\epsilon}$ is bounded from below, i.e., $\left(-1,+\infty\right)$. \newline
Let $\left\{J_{k}\right\}_{k\geq 1}$ be a sequence representing the jump size in the log-price. The quantity $J_k$ can be defined using $\epsilon_{k}$ as follows:\footnote{The sequence of $\left\{J_{k}\right\}_{k\geq 1}$ inherits the independence structure of $\left\{\epsilon_{k}\right\}_{k\geq 1}$ and  the distribution of each $J_k$ is obtained from the distribution of $\bar{\epsilon}$ as
$J_k \overset{d}{=} \bar{J}:=\ln\left(1+\bar{\epsilon}\right)$.
}
\begin{equation}
J_k := \log\left(1+\epsilon_k\right).
\label{eq:J_k}
\end{equation}
The jump size of the log-price in \eqref{eq:J_k} is useful for the construction of the compound  CARMA(p,q)-Hawkes process $\left\{Y_{t}\right\}_{t\in\left[0,\mathbb{T}\right]}$ that drives the log-price dynamics, that is:
\begin{equation}
Y_t:=\sum_{k=1}^{N_t}J_k \text{ with } J_0=0.
\label{YtDef}
\end{equation}
Denoting by $\left\{T_i\right\}_{i\geq1}$ a non-decreasing non-negative process that represents the time arrival of a specific event, $N_t$ is defined as:
\begin{equation}
N_{t}:= \sum_{i\geq1}\mathbbm{1}_{\left\{T_i\leq t\right\}}
\label{Def:Nt}
\end{equation}
with conditional intensity process $\left\{\lambda_t\right\}_{t\in\left[0,\mathbb{T}\right]}$ modelled as:
\begin{equation}
\lambda_t = \mu+ \mathbf{b}^{\top}X_t,
\label{Def:Lambda}
\end{equation}
where $\mu\in \mathbb{R_+}$; the $p$-dimensional vector $\mathbf{b}$ contains the moving average parameters $\left\{b_i\right\}_{i=0}^{q}$ and has the following form:
\begin{equation}\label{bbold}
\mathbf{b}=\left\{
	\begin{array}{ll}
	\left[b_0,\ldots,b_{q}\right]^{\top} & \text{ if } p-q=1 \\
	\left[b_0,\ldots,b_{q},b_{q+1}, \ldots, b_{p-1}\right]^{\top}\text{, with } b_{q+1} = \ldots = b_{p-1}=0 &\text{ if } p-q\geq 2.
	\end{array}
	\right. 
\end{equation} 
The predictable state process $X_t$ has the following representation:\footnote{The process $\left\{X_{t}\right\}_{t\in\left[0,\mathbb{T}\right)}$ has c\`agl\`ad trajectories that implies $X_{t}=X_{t^{-}}:=\underset{h\rightarrow 0^-}{\lim}X_{t+h}$.}
\begin{equation}\label{eq:CarLambda}
X_t=\int_{\left[0,t\right)} e^{\mathbf{A}\left(t-s\right)}\mathbf{e} \mbox{d}N_s,
\end{equation}
where \(e^\mathbf{A}:=\underset{h=0}{\stackrel{+\infty}{\sum}}\frac{1}{h!}\mathbf{A}^h\). The process $X_t$ in \eqref{eq:CarLambda} satisfies the SDE:
\begin{equation}\label{eq:state_process}
\mbox{d}X_t=\mathbf{A}X_{t}\mbox{d}t+\mathbf{e}\mbox{d}N_t, \text{ with } X_{0}=\mathbf{0}.
\end{equation}  
The \(p\times p\) matrix $\mathbf{A}$, named companion matrix, has the following form: 
\begin{equation}
\mathbf{A}=\left[
\begin{array}{ccccc}
0 & 1 & 0 & \ldots & 0\\
0 & 0 & 1 & \ldots & 0\\
\vdots & \vdots & \vdots & \ddots & \vdots\\
0 & 0 & 0 & \ldots & 1\\
-a_p & -a_{p-1} & -a_{p-2} & \ldots & -a_1\\
\end{array}
\right]_{p\times p},
\label{Acomp}
\end{equation}
where $a_{1},\ldots a_{p}$ are the autoregressive parameters.
The $p$-dimensional column vector $\mathbf{e}$ is defined as: 
\begin{equation}
\mathbf{e}=\left[0,\ldots,1\right]^\top.
\label{ebold} 
\end{equation} 

The vector process $\left(X_t, N_T\right)$, introduced in \cite{MERCURI20241}, is named CARMA(p,q)-Hawkes model.

\begin{remark}\label{remark1}
In this framework we require that the conditions for the stationarity of the CARMA(p,q)-Hawkes and the positivity of the conditional intensity $\lambda_t$ to be satisfied; see \cite{MERCURI20241} for further details.
\end{remark}

Let $S_{t_0}$ be the price of the underlying asset at the day $t_0$ of the market quote and $T\in\left(t_0,\mathbb{T}\right]$ the maturity of a financial derivative. Using the Dol\'eans-Dade exponential \cite{doleans1970quelques}, the solution of \eqref{pureJumpwithnounitary} reads:
\begin{equation}
S_{T}=S_{t_0}\exp\left[\int_{t_0}^{T} \alpha_t \mbox{d}t + \int_{t_0}^{T} J_t \mbox{d}N_t\right].
\label{eq:sol}
\end{equation}

\begin{remark}\label{Remark2} 
The quantity \eqref{eq:sol} can be rewritten using the compound CARMA(p,q)-Hawkes increment over the interval $\left[t_0, T\right]$, thus we have:
\begin{equation*}
S_T=S_{t_0}\exp\left[\int_{t_0}^{T} \alpha_t \mbox{d}t + \left(Y_{T}-Y_{t_0}\right)\right],
\end{equation*}
where $Y_{t}$ is defined in \eqref{YtDef}.
\end{remark}
\begin{theorem}\label{characteristicP}
Let $r$ be the constant risk-free interest rate and $\varphi \in \mathbb{R}$. Assuming that $\alpha_t$ in \eqref{pureJumpwithnounitary} is defined as
\begin{equation}
\alpha_t:=r+\varphi\lambda_{t}, \quad \forall t\in\left[t_0, \ T\right],
\label{condAlphat}
\end{equation}
the conditional characteristic function of the log-price $S_T$ at maturity $T$ given $\mathcal{F}_{t_0}$ is log-affine in the couple $\left[\ln\left(S_{t_0}\right),X_{t_0}\right]$. That is, 
\begin{equation}\label{cf_theo}
\mathbb{E}\left[e^{iu_1\ln\left(S_{T}\right)}\left|\mathcal{F}_{t_0}\right.\right]:=\exp\left[u_{0,T}\left(t_0\right)+u_{1,T}\left(t_0\right)\ln\left(S_{t_0}\right)+u_{2,T}\left(t_0\right)^{\top}X_{t_0}\right],
\end{equation}
where $\forall t \in \left[t_0, \ T\right],$ $u_{1,T}\left(t\right) = i u_1$ and the time-dependent coefficients $u_{0,T}\left(t\right):\left[t_0, \ T\right]\rightarrow \mathbb{C}$ and $u_{2,T}\left(t\right):\left[t_0, \ T\right]\rightarrow \mathbb{C}^p$ satisfy the following system of ordinary differential equations:\footnote{Let $f\left(x\right): A \subseteq  \mathbb{R}^{n} \rightarrow B \subseteq  \mathbb{R}^{m}$ be a differentiable function.
We denote its Jacobian matrix by $\mathbb{J}_{f\left(x\right)}$.}
\begin{equation}
\left\{
\begin{array}{l}
\frac{\partial u_{0,T}\left(t\right)}{\partial t} = \mathbf{\mu}\left[1 - \mathbb{E}\left(e^{i u_1 \bar{J}}\right)e^{u_{2,T}\left(t\right)^{\top}\mathbf{e}}\right] - i u_1 \left(r+\varphi\mu\right) \\
\mathbb{J}_{u_{2,T}\left(t\right)}^{\top} = \left[1 - \mathbb{E}\left(e^{i u_1 \bar{J}}\right)e^{u_{2,T}\left(t\right)^{\top}\mathbf{e}}\right]\mathbf{b}^{\top} - u_{2,T}\left(t\right)^{\top}\mathbf{A} -i u_1 \varphi\mathbf{b}^{\top}
\end{array}
\right.,
\label{b01}
\end{equation}
with final conditions
\begin{equation}
\left\{
\begin{array}{l}
u_{0,T}\left(T\right)=0\\
u_{2,T}\left(T\right)= \mathbf{0}\\
\end{array}
\right. .
\label{finalCondLogPrice}
\end{equation}
\end{theorem}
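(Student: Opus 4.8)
The plan is to realize the conditional characteristic function \eqref{cf_theo} as the time-$t_0$ value of a $\mathbb{P}$-martingale, convert the martingale property into an integro-differential equation by the It\^o formula for pure-jump semimartingales, and then collapse that equation to the ODE system \eqref{b01}--\eqref{finalCondLogPrice} by inserting an exponential-affine ansatz; a verification argument in the converse direction identifies the candidate with the genuine characteristic function.

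First I would fix $u_1\in\mathbb{R}$ and set $M_t:=\mathbb{E}[\,e^{iu_1\ln(S_T)}\mid\mathcal{F}_t\,]$ for $t\in[t_0,T]$. Under the specification \eqref{condAlphat} the triple $(t,\ln(S_t),X_t)$ is Markov: away from event times $X_t$ follows the linear flow $\dot X_t=\mathbf{A}X_t$ of \eqref{eq:state_process} and $\ln(S_t)$ drifts at the $X_t$-measurable rate $\alpha_t=r+\varphi\mu+\varphi\,\mathbf{b}^{\top}X_t$, while at each event time $X$ increases by $\mathbf{e}$ and $\ln(S)$ by an independent copy of $\bar{J}$, the intensity $\lambda_t$ being the $X_t$-measurable quantity \eqref{Def:Lambda}. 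Hence $M_t=f(t,\ln(S_t),X_t)$ for a deterministic $f$ with $f(T,x,\mathbf{x})=e^{iu_1x}$; and since, by Remark~\ref{Remark2}, $\ln(S_T)=\ln(S_t)+\int_t^T\alpha_s\,\mathrm{d}s+(Y_T-Y_t)$ depends on $\ln(S_t)$ only additively, $f$ factors as $e^{iu_1\ln(S_t)}$ times a function of $(t,X_t)$. This already forces $u_{1,T}(t)\equiv iu_1$ and motivates the ansatz $f(t,x,\mathbf{x})=\exp[\,u_{0,T}(t)+iu_1x+u_{2,T}(t)^{\top}\mathbf{x}\,]$.

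Next I would apply the It\^o/Dynkin formula to $f(t,\ln(S_t),X_t)$: its absolutely continuous part is $\{\partial_tf+\alpha_t\,\partial_xf+(\mathbf{A}X_t)^{\top}\nabla_{\mathbf{x}}f+\lambda_t\,\mathbb{E}_{\bar{J}}[\,f(t,\ln(S_t)+\bar{J},X_t+\mathbf{e})-f(t,\ln(S_t),X_t)\,]\}\,\mathrm{d}t$, and the remainder is a $\mathbb{P}$-martingale assembled from the compensated counting process $\mathrm{d}N_t-\lambda_t\,\mathrm{d}t$ of \eqref{Def:Nt}--\eqref{Def:Lambda} and from the jump-size randomness (here one uses that each $J_k$ is independent of $\mathcal{F}_{t^-}$, so its conditional law is that of $\bar{J}$). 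Since $M_t$ is a martingale the absolutely continuous part vanishes identically along the path; substituting the ansatz, dividing by $f$, and using $\alpha_t=r+\varphi\mu+\varphi\mathbf{b}^{\top}X_t$ and $\lambda_t=\mu+\mathbf{b}^{\top}X_t$ (the shift $\mathbf{x}\mapsto\mathbf{x}+\mathbf{e}$ contributing the factor $e^{u_{2,T}(t)^{\top}\mathbf{e}}$ and the $\bar{J}$-shift the factor $\mathbb{E}(e^{iu_1\bar{J}})$), one is left with an affine function of $X_t$ that must be zero. Because the support of $X_t$ is not contained in any proper affine subspace of $\mathbb{R}^p$ (the pair $(\mathbf{A},\mathbf{e})$ being controllable), its constant part and its $X_t$-coefficient vanish separately, which are exactly the two equations of \eqref{b01}, while $f(T,\cdot)=e^{iu_1\cdot}$ yields \eqref{finalCondLogPrice}. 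For the converse I would start from a solution $(u_{0,T},u_{2,T})$ of \eqref{b01}--\eqref{finalCondLogPrice} — unique on $[t_0,T]$ since the right-hand sides are locally Lipschitz in $u_{2,T}$ and, under Remark~\ref{remark1}, stay bounded — run the computation backwards to see that the associated $f(t,\ln(S_t),X_t)$ is a local $\mathbb{P}$-martingale with terminal value $e^{iu_1\ln(S_T)}$, and, after the true-martingale check below, conclude $M_{t_0}=f(t_0,\ln(S_{t_0}),X_{t_0})$, i.e.\ \eqref{cf_theo}.

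I expect the delicate points to be, first, the bookkeeping in the It\^o formula together with the replacement of the realized jump size $J_{N_t}$ by $\bar{J}$ inside the conditional expectation, which rests on the independence of $\{\epsilon_k\}$ (hence $\{J_k\}$) from the counting and intensity processes, and, second and more substantively, the passage from local to genuine martingale: the candidate $f(t,\ln(S_t),X_t)$ carries the possibly unbounded factor $e^{u_{2,T}(t)^{\top}X_t}$, so one must dominate $\sup_{t\in[t_0,T]}\mathbb{E}|f(t,\ln(S_t),X_t)|$ using the boundedness of $u_{0,T},u_{2,T}$ on $[t_0,T]$, the bound $|X_t|\lesssim N_t$ available when $\mathbf{A}$ is stable, and the finiteness of exponential moments of $N_T$ on the finite horizon guaranteed by the stationarity hypotheses of Remark~\ref{remark1} — conveniently one may localize along $\tau_n:=\inf\{t:N_t\ge n\}\wedge T$, where $f$ is bounded, and let $n\to\infty$. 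The separation-of-coefficients step and the ODE solvability are then routine.
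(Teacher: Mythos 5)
Your proposal is correct and follows essentially the same route as the paper: realize the conditional characteristic function as a $\mathbb{P}$-martingale, apply It\^o's formula for semimartingales with the compensated counting process, impose the zero-drift condition, insert the exponential-affine ansatz, and separate the constant and $X_t$-linear coefficients to obtain \eqref{b01} with the final conditions \eqref{finalCondLogPrice}. The only differences are cosmetic or supplementary: the paper works with the joint characteristic function in $(\ln(S_T),X_T)$ and sets $u_2=\mathbf{0}$ at the end, whereas you work with the marginal directly, and you additionally supply the coefficient-separation justification via controllability of $(\mathbf{A},\mathbf{e})$ and the local-to-true martingale verification, both of which the paper leaves implicit.
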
 
\begin{proof}
We consider the joint conditional characteristic function of $\left\{\ln\left(S_{t}\right),X_t\right\}_{t \in \left[t_0, \ T\right]}$ defined as:
\begin{equation}
f_{t}\left(\ln\left(S_T\right),X_T\right) = \mathbb{E}\left[\exp\left[i u_1\ln\left(S_T\right)+i u_2^{\top} X_{T}\right]\left|\mathcal{F}_{t}\right.\right].
\label{eq:jointCharLnST}
\end{equation}
The process $\left\{f_t\left(\ln\left(S_T\right),X_T\right)\right\}_{t \in \left[t_0, \ T\right]}$ is a complex-valued martingale and its dynamics can be determined using the It\^o's Lemma for semimartingales \cite{tankov2015financial} as follows:\footnote{By $\frac{\partial{f\left(x_{t^{-}},y_{t^{-}}\right)}}{\partial x_t}$ we denote the partial derivative of function $f\left(x_t,y_t\right)$ with respect to $x_t$ evaluated at $\left(x_{t^{-}},y_{t^{-}}\right)$, i.e.:
\begin{equation*}
\frac{\partial f\left(x_{t^{-}},y_{t^{-}}\right)}{\partial x_t}:=\left.\frac{\partial f\left(x_{t},y_{t}\right)}{\partial x_t}\right|_{\left(x_t,y_t\right)=\left(x_{t^{-}},y_{t^{-}}\right)}.
\end{equation*}
}
\begin{eqnarray}
\mbox{d}f_t\left(\ln\left(S_t\right),X_t\right) &=& \frac{\partial f_t\left(\ln\left(S_{t^{-}}\right),X_t\right)}{\partial t}\mbox{d}t + \frac{\partial f_t\left(\ln\left(S_{t^{-}}\right),X_t\right)}{\partial \ln\left(S_t\right)}\mbox{d}\ln\left(S_t\right)+\left[\nabla_{X_t}f_t\left(\ln\left(S_{t^{-}}\right),X_t\right)\right]^{\top}\mbox{d}X_{t}\nonumber\\
&+& \left[f_{t}\left(\ln\left(S_{t^{-}}\right)+J_t,X_t+\mathbf{e}\right)-f_{t}\left(\ln\left(S_{t^{-}}\right),X_t\right)-\frac{\partial f_t\left(\ln\left(S_{t^{-}}\right),X_t\right)}{\partial \ln\left(S_t\right)}J_t-\left[\nabla_{X_t}f_t\left(\ln\left(S_{t^{-}}\right),X_t\right)\right]^{\top}\mathbf{e}\right]\mbox{d}N_t\nonumber\\
&=&\left[\frac{\partial f_t\left(\ln\left(S_{t^{-}}\right),X_t\right)}{\partial t} + \frac{\partial f_t\left(\ln\left(S_{t^{-}}\right),X_t\right)}{\partial \ln\left(S_t\right)}\alpha_t + \left[\nabla_{X_t}f_t\left(\ln\left(S_{t^{-}}\right),X_t\right)\right]^{\top}\mathbf{A}X_{t}\right]\mbox{d}t\nonumber\\
&+& \left[f_{t}\left(\ln\left(S_{t^{-}}\right)+J_t,X_t+\mathbf{e}\right)-f_{t}\left(\ln\left(S_{t^{-}}\right),X_t\right)\right]\mbox{d}N_t.
\label{term_add_sub}
\end{eqnarray}
Adding and subtracting  the quantity $\left(\mathbf{\mu}+\mathbf{b}X_{t}\right)\mathbb{E}\left[f_{t}\left(\ln\left(S_{t^{-}}\right)+J_t,X_t+\mathbf{e}\right)-f_{t}\left(\ln\left(S_{t^{-}}\right),X_t\right)\left|\mathcal{F}_{t^{-}}\right.\right]\mbox{d}t$ in the right-hand side (rhs) of \eqref{term_add_sub}, we obtain:  
\begin{eqnarray*}
\mbox{d}f_t\left(\ln\left(S_t\right),X_t\right) &=& \left[\frac{\partial f_t\left(\ln\left(S_{t^{-}}\right),X_t\right)}{\partial t} + \frac{\partial f_t\left(\ln\left(S_{t^{-}}\right),X_t\right)}{\partial \ln\left(S_t\right)}\alpha_t + \left[\nabla_{X_t}f_t\left(\ln\left(S_{t^{-}}\right),X_t\right)\right]^{\top}\mathbf{A}X_{t}\right]\mbox{d}t\\
&+&\left(\mathbf{\mu}+\mathbf{b}X_{t}\right)\mathbb{E}\left[f_{t}\left(\ln\left(S_{t^{-}}\right)+J_t,X_t+\mathbf{e}\right)-f_{t}\left(\ln\left(S_{t^{-}}\right),X_t\right)\left|\mathcal{F}_{t^{-}}\right.\right]\mbox{d}t+\mbox{d}\bar{M}_t
\end{eqnarray*}
where $\bar{M}_t$ is a process with dynamics
\begin{equation*}
\mbox{d}\bar{M}_t = \left[f_{t}\left(\ln\left(S_{t^{-}}\right)+J_t,X_t+\mathbf{e}\right)-f_{t}\left(\ln\left(S_{t^{-}}\right),X_t\right)\right]\mbox{d}N_t-\left(\mathbf{\mu}+\mathbf{b}^{\top}X_{t}\right)\mathbb{E}\left[f_{t}\left(\ln\left(S_{t^{-}}\right)+J_t,X_{t}+\mathbf{e}\right)-f_{t}\left(\ln\left(S_{t^{-}}\right),X_t\right)\left|\mathcal{F}_{t^{-}}\right.\right]\mbox{d}t.
\end{equation*}
The martingale condition for the joint characteristic function implies that:
\begin{eqnarray}
0&=&\frac{\partial f_t\left(\ln\left(S_{t^{-}}\right),X_t\right)}{\partial t} + \frac{\partial f_t\left(\ln\left(S_{t^{-}}\right),X_t\right)}{\partial \ln\left(S_t\right)}\alpha_t + \left[\nabla_{X_t}f_t\left(\ln\left(S_{t^{-}}\right),X_t\right)\right]^{\top}\mathbf{A}X_{t}\nonumber\\
&+&\left(\mathbf{\mu}+\mathbf{b}X_{t}\right)\mathbb{E}\left[f_{t}\left(\ln\left(S_{t^{-}}\right)+J_t,X_{t}+\mathbf{e}\right)-f_{t}\left(\ln\left(S_{t^{-}}\right),X_{t}\right)\left|\mathcal{F}_{t^{-}}\right.\right].
\label{eq:jcf}
\end{eqnarray}
Under the assumption that the conditional joint characteristic function in \eqref{eq:jointCharLnST} has a log linear affine form as in \eqref{cf_theo},
the partial derivatives in \eqref{eq:jcf} read:
\begin{equation*}
\frac{\partial f_t\left(\ln\left(S_{t^{-}}\right),X_t\right)}{\partial t} = f_t\left(\ln\left(S_{t^{-}}\right),X_t\right) \left[ \frac{\partial u_{0,T}\left(t\right)}{\partial t} + \frac{\partial u_{1,T}\left(t\right)}{\partial t}\ln\left(S_{t^{-}}\right) + \mathbb{J}_{u_{2,T}\left(t\right)}^{\top} X_t \right],
\end{equation*} 
\begin{equation*}
\frac{\partial f_t\left(\ln\left(S_{t^{-}}\right),X_t\right)}{\partial \ln\left(S_t\right)} = f_t\left(\ln\left(S_{t^{-}}\right),X_t\right) u_{1,T}\left(t\right)
\end{equation*}
and
\begin{equation*}
\nabla_{X_t}f_t\left(\ln\left(S_{t^{-}}\right),X_t\right)= f_t\left(\ln\left(S_{t^{-}}\right),X_t\right) u_{2,T}\left(t\right).
\end{equation*}
As the last term in \eqref{eq:jcf} can be expressed in the following way:
\begin{equation*}
\mathbb{E}\left[f_{t}\left(\ln\left(S_{t^{-}}\right)+J_t,X_{t}+\mathbf{e}\right)-f_{t}\left(\ln\left(S_{t^{-}}\right),X_{t}\right)\left|\mathcal{F}_{t^{-}}\right.\right]=f_{t}\left(\ln\left(S_{t^{-}}\right),X_{t}\right)\mathbb{E}\left[\exp\left[u_{1,T}\left(t\right) J_t + u_{2,T}\left(t\right)^{\top}\mathbf{e}\right]-1\left|\mathcal{F}_{t^{-}}\right.\right],
\end{equation*}
the martingale condition for the joint characteristic function becomes
\begin{eqnarray*}
0 &=& \frac{\partial u_{0,T}\left(t\right)}{\partial t}+ \frac{\partial u_{1,T}\left(t\right)}{\partial t}\ln\left(S_{t^{-}}\right)+\mathbb{J}_{u_{2,T}\left(t\right)}^{\top}X_t + \alpha_t u_{1,T}\left(t\right) \nonumber\\
&+& u_{2,T}\left(t\right)^{\top}\mathbf{A}X_t + \left(\mathbf{\mu}+\mathbf{b}^{\top}X_{t}\right) \mathbb{E}\left[\exp\left[u_{1,T}\left(t\right) J_t + u_{2,T}\left(t\right)^{\top}\mathbf{e}\right]-1\left|\mathcal{F}_{t^{-}}\right.\right].
\end{eqnarray*}
Rearranging, we have:
\begin{eqnarray*}
0 &=& \frac{\partial u_{0,T}\left(t\right)}{\partial t}+ \alpha_t u_{1,T}\left(t\right)  + \mathbf{\mu}\mathbb{E}\left[\exp\left[u_{1,T}\left(t\right) J_t + u_{2,T}\left(t\right)^{\top}\mathbf{e}\right]-1\left|\mathcal{F}_{t^{-}}\right.\right] + \frac{\partial u_{1,T}\left(t\right)}{\partial t}\ln\left(S_{t^{-}}\right)  \nonumber\\
&+& \mathbb{J}_{u_{2,T}\left(t\right)}^{\top}X_t + u_{2,T}\left(t\right)^{\top}\mathbf{A}X_t + \mathbb{E}\left[\exp\left[u_{1,T}\left(t\right) J_t + u_{2,T}\left(t\right)^{\top}\mathbf{e}\right]-1\left|\mathcal{F}_{t^{-}}\right.\right] \mathbf{b}^{\top}X_{t} .
\end{eqnarray*}
Recalling that $\alpha_t$ as in \eqref{condAlphat} with $\lambda_t$ as in \eqref{Def:Lambda}, we get:
\begin{eqnarray}
0 &=& \frac{\partial u_{0,T}\left(t\right)}{\partial t}+ \left(r+\varphi\mu\right) u_{1,T}\left(t\right)  + \mathbf{\mu}\mathbb{E}\left[\exp\left[u_{1,T}\left(t\right) J_t + u_{2,T}\left(t\right)^{\top}\mathbf{e}\right]-1\left|\mathcal{F}_{t^{-}}\right.\right] + \frac{\partial u_{1,T}\left(t\right)}{\partial t}\ln\left(S_{t^{-}}\right)  \nonumber\\
&+& \mathbb{J}_{u_{2,T}\left(t\right)}^{\top}X_t + \varphi u_{1,T}\left(t\right)\mathbf{b}^{\top}X_t  + u_{2,T}\left(t\right)^{\top}\mathbf{A}X_t + \mathbb{E}\left[\exp\left[u_{1,T}\left(t\right) J_t + u_{2,T}\left(t\right)^{\top}\mathbf{e}\right]-1\left|\mathcal{F}_{t^{-}}\right.\right] \mathbf{b}^{\top}X_{t}.
\label{res2logpriceP}
\end{eqnarray}
Equation \eqref{res2logpriceP} leads to the following system of ordinary differential equations:
\begin{equation}
\left\{
\begin{array}{l}
\frac{\partial u_{0,T}\left(t\right)}{\partial t} = \mathbf{\mu}\mathbb{E}\left[1 - \exp\left[u_{1,T}\left(t\right) J_t + u_{2,T}\left(t\right)^{\top}\mathbf{e}\right]\left|\mathcal{F}_{t^{-}}\right.\right] - \left(r+\varphi\mu\right) u_{1,T}\left(t\right)\\
\frac{\partial u_{1,T}\left(t\right)}{\partial t} = 0\\
\mathbb{J}_{u_{2,T}\left(t\right)}^{\top} = \mathbb{E}\left[1-\exp\left[u_{1,T}\left(t\right) J_t + u_{2,T}\left(t\right)^{\top}\mathbf{e}\right]\left|\mathcal{F}_{t^{-}}\right.\right] \mathbf{b}^{\top} - u_{2,T}\left(t\right)^{\top}\mathbf{A} -\varphi u_{1,T}\left(t\right)\mathbf{b}^{\top}
\end{array}
\right.,
\label{b000AA}
\end{equation}
with the final conditions that arise by substituting $t = T$ in \eqref{eq:jointCharLnST}:
\begin{equation}
\left\{
\begin{array}{l}
u_{0,T}\left(T\right)=0\\
u_{1,T}\left(T\right)= i u_1\\
u_{2,T}\left(T\right)= i u_2\\
\end{array}
\right. .
\label{eq:FinalCond}
\end{equation}
Notice that, since $\frac{\partial u_{1,T}\left(t\right)}{\partial t} = 0$, we have $u_{1,T}\left(t\right)=i u_1 \ \forall t\leq T$. Exploiting the i.i.d. assumption of the sequence $\left\{J_{k}\right\}_{k\geq 1}$, system \eqref{b000AA} can be rewritten as:
\begin{equation}
\left\{
\begin{array}{l}
\frac{\partial u_{0,T}\left(t\right)}{\partial t} = \mathbf{\mu}\left[1 - \mathbb{E}\left(e^{i u_1 \bar{J}}\right)e^{u_{2,T}\left(t\right)^{\top}\mathbf{e}}\right] - i u_1\left(r+\varphi\mu\right) \\
\mathbb{J}_{u_{2,T}\left(t\right)}^{\top} = \left[1 - \mathbb{E}\left(e^{i u_1 \bar{J}}\right)e^{u_{2,T}\left(t\right)^{\top}\mathbf{e}}\right]\mathbf{b}^{\top} - u_{2,T}\left(t\right)^{\top}\mathbf{A} -i u_1 \varphi \mathbf{b}^{\top}
\end{array}
\right.,
\label{b00}
\end{equation}
with the final conditions in \eqref{eq:FinalCond}. System \eqref{b01} is obtained by choosing $u_2=\mathbf{0}$ and the final conditions in \eqref{eq:FinalCond} write as in \eqref{finalCondLogPrice}.
\end{proof}




\subsection{Change of measure}

The model presented in the previous section describes an incomplete market. Consequently, it is essential to introduce a Radon-Nikodym derivative in order to ensure that, under the risk-neutral probability measure $\mathbb{Q}$, the discounted price is a (local) martingale process. \newline Let $\left\{\theta_{t}\right\}_{t\in\left[0,\mathbb{T}\right]}$ be a predictable process belonging to the domain of the moment generating function (mgf) of $\bar{J}$, i.e. $\theta_{t} \in \left\{c \in \mathbb{R} : \mathbb{E}\left[e^{c\bar{J}}\right]<+\infty\right\}$ $\forall t\in\left[0, \ \mathbb{T}\right]$. We consider the Radon-Nikodym derivative $\left\{L_t\right\}_{t\in\left[0,\mathbb{T}\right]}$ as defined by the following SDE:
\begin{equation}
\mbox{d}L_{t}=L_{t^{-}}\left(\frac{e^{\theta_t J_t}}{\mathbb{E}\left[e^{\theta_t J_t}\left|\mathcal{F}_{t^{-}}\right.\right]}-1\right)\mbox{d}N_{t}, \text{ with } L_0 = 1.
\label{eq:L}
\end{equation}
Using the Dol\'eans-Dade exponential, the solution of the SDE \eqref{eq:L} for all maturities $T \in \left[0,\mathbb{T}\right]$ reads:
\begin{equation}
L_T = \exp \left\{\int_{0}^{T} \left[\theta_{t}J_{t}-\ln\left(\mathbb{E}\left(e^{\theta_{t}J_t}\left|\mathcal{F}_{t^{-}}\right.\right)\right)\right]\mbox{d}N_t\right\}. 
\label{LT}
\end{equation}
Now we need to find a measure $\mathbb{Q}$ such that the martingale condition for the discounted asset price is satisfied $\forall t_0, T \in \left[0,\mathbb{T}\right]$ 
with $t_0<T$,  i.e.:
\begin{equation}
\mathbb{E}^{\mathbb{Q}}\left[e^{-r T}S_{T}\left|\mathcal{F}_{t_0}\right.\right]=e^{-r t_0}S_{t_0}.
\label{eq:QCond}
\end{equation}
Applying the Bayes' theorem \cite{bjork2009arbitrage}, \eqref{eq:QCond} becomes:
\begin{equation}
\mathbb{E}^{\mathbb{Q}}\left[e^{-r T}S_{T}\left|\mathcal{F}_{t_0}\right.\right]=\mathbb{E}\left[e^{-r T}S_{T}\frac{L_{T}}{L_{t_0}}\left|\mathcal{F}_{t_0}\right.\right]= e^{-r t_0}S_{t_0},
\label{eq:underQPrice}
\end{equation}
implying that $\left\{e^{-rt}S_t L_t\right\}_{t\in\left[0,\mathbb{T}\right]}$ is a $\mathbb{P}$-martingale.\newline
Let $\tilde{Z}_{t}:=e^{-r t}S_{t}L_{t}$, we obtain:
\begin{equation}
\mbox{d}\tilde{Z}_{t}=-r \tilde{Z}_{t} \mbox{d}t + e^{-r t}\mbox{d}\left(S_{t}L_{t}\right).
\label{eq:ZetaTilde}
\end{equation} 
Focusing on the term $S_{t}L_{t}$, we have:
\begin{eqnarray}
\mbox{d}\left(S_{t}L_{t}\right) &=& S_{t^{-}}\mbox{d}L_{t}+L_{t^{-}}\mbox{d}S_{t}+ \Delta L_{t} \Delta S_{t}\nonumber\\
&=& S_{t^{-}}\mbox{d}L_{t}+L_{t^{-}}S_{t^{-}}\left[\alpha_t \mbox{d}t+\left(e^{J_t}-1\right)\mbox{d}N_t\right]+S_{t^{-}}L_{t^{-}}\left(e^{J_t}-1\right)\left[\frac{e^{\theta_t J_t}}{\mathbb{E}\left[e^{\theta_t J_t}\left|\mathcal{F}_{t^{-}}\right.\right]}-1\right]\mbox{d}N_t\nonumber\\
&=& S_{t^{-}}\mbox{d}L_{t}+L_{t^{-}}S_{t^{-}}\alpha_t \mbox{d}t+S_{t^{-}}L_{t^{-}}\left[\frac{e^{\left(\theta_t+1\right) J_t}-e^{\theta_t J_t}}{\mathbb{E}\left[e^{\theta_t J_t}\left|\mathcal{F}_{t^{-}}\right.\right]}\right]\mbox{d}N_t.     
\label{eq:intermstep2}
\end{eqnarray}
Defining $M_{t}$ and $\tilde{M}_{t}$ as follows
\begin{eqnarray*}
\mbox{d}M_{t} &:=& \left(\frac{e^{\theta_{t}J_{t}}}{\mathbb{E}\left[e^{\theta_{t}J_t}\left|\mathcal{F}_{t^{-}}\right.\right]}-1\right)\mbox{d}N_{t}\\
\mbox{d}\tilde{M}_{t} &:=& \left(\frac{e^{\left(\theta_{t}+1\right)J_{t}}-e^{\theta_{t}J_{t}}}{\mathbb{E}\left[e^{\theta_{t}J_t}\left|\mathcal{F}_{t^{-}}\right.\right]}\right)\mbox{d}N_{t}-\lambda_t\left(\frac{\mathbb{E}\left[e^{\left(\theta_{t}+1\right)J_{t}}\left|\mathcal{F}_{t^{-}}\right.\right]}{\mathbb{E}\left[e^{\theta_{t}J_t}\left|\mathcal{F}_{t^{-}}\right.\right]}-1\right)\mbox{d}t,
\end{eqnarray*}
the stochastic differential equation in \eqref{eq:intermstep2} becomes:
\begin{equation}
\mbox{d}\left(S_{t}L_{t}\right)= L_{t^{-}}S_{t^{-}}\left[\alpha_t + \lambda_t\left(\frac{\mathbb{E}\left[e^{\left(\theta_{t}+1\right)J_{t}}\left|\mathcal{F}_{t^{-}}\right.\right]}{\mathbb{E}\left[e^{\theta_{t}J_t}\left|\mathcal{F}_{t^{-}}\right.\right]}-1\right)\right] \mbox{d}t+ L_{t^{-}}S_{t^{-}}\left[\mbox{d}M_{t}+\mbox{d}\tilde{M}_{t}\right].
\label{eq:LSandLocalMart}
\end{equation}
Combining \eqref{eq:LSandLocalMart} with \eqref{eq:ZetaTilde}, we obtain:
\begin{equation*}
\mbox{d}\tilde{Z}_{t}=\left[\alpha_t-r + \lambda_t\left(\frac{\mathbb{E}\left[e^{\left(\theta_{t}+1\right)J_{t}}\left|\mathcal{F}_{t^{-}}\right.\right]}{\mathbb{E}\left[e^{\theta_{t}J_t}\left|\mathcal{F}_{t^{-}}\right.\right]}-1\right)\right] \tilde{Z}_{t} \mbox{d}t + \tilde{Z}_{t^{-}}\left[\mbox{d}M_{t}+\mbox{d}\tilde{M}_{t}\right].
\end{equation*}
Note that the requirement of zero drift for the process $\tilde{Z}_t$ implies:
\begin{equation}
r-\alpha_t = \lambda_t\left(\frac{\mathbb{E}\left[e^{\left(\theta_{t}+1\right)J_{t}}\left|\mathcal{F}_{t^{-}}\right.\right]}{\mathbb{E}\left[e^{\theta_{t}J_t}\left|\mathcal{F}_{t^{-}}\right.\right]}-1\right).
\label{eq:LSandLocalMart0}
\end{equation}
Suppose $\alpha_t$ to be specified as in \eqref{condAlphat}, then \eqref{eq:LSandLocalMart0} can be equivalently expressed as
\begin{equation}
1-\varphi = \frac{\mathbb{E}\left[e^{\left(\theta_{t}+1\right)J_{t}}\left|\mathcal{F}_{t^{-}}\right.\right]}{\mathbb{E}\left[e^{\theta_{t}J_t}\left|\mathcal{F}_{t^{-}}\right.\right]}.
\label{eq:LSandLocalMart1}
\end{equation}
If equation \eqref{eq:LSandLocalMart1} admits a solution with respect to the variable $\theta_t$, then this solution is independent of time. That is to say, $\forall t \in \left[0,\ \mathbb{T}\right]$, we have that $\theta_{t}=\theta^{\star}$. For example, if $\bar{J}\sim \mathcal{N}\left(\mu_J,\sigma^2_{J}\right)$, then $\theta^{\star} =\frac{\ln\left(1-\varphi\right)-\mu_J}{\sigma^2_J}-\frac12$ with $\varphi<1$.

Let $\left\{L_{t}^{\star}\right\}_{t\in\left[0, \ \mathbb{T}\right]}$ be the Radon-Nikodym derivative in \eqref{LT} at $\theta^{\star}$ solution of \eqref{eq:LSandLocalMart1} and suppose that $\sup_{t\in\left[0,\mathbb{T}\right]}\mathbb{E}\left[ L_{t}^{\star}\right]<+\infty$. Then $\left\{L_{t}^{\star}\right\}_{t\in\left[0, \ \mathbb{T}\right]}$ is a strictly positive martingale over the interval $\left[0, \ \mathbb{T}\right]$. Indeed, we have:
\begin{eqnarray*}
\mathbb{E}\left[L_{T}^{\star}\left|\mathcal{F}_{t_0}\right.\right]&=&L_{t_0}^{\star}\mathbb{E}\left\{\exp\left[\int_{t_0}^{T}\left(\theta^{\star}J_t-\ln\left(\mathbb{E}\left[e^{\theta^{\star}\bar{J}}\right]\right)\right)\mbox{d}N_{t}\right]\left|\mathcal{F}_{t_0}\right.\right\}\nonumber\\
&=&L_{t_0}^{\star}\mathbb{E}\left\{\exp\left[\sum_{k=N_{t_0}+1}^{N_{T}}\left(\theta^{\star}J_k-\ln\left(\mathbb{E}\left[e^{\theta^{\star}\bar{J}}\right]\right)\right)\right]\left|\mathcal{F}_{t_0}\right.\right\}.
\end{eqnarray*}
Consider the $\sigma$-field $\mathcal{G}_{t_0}^{T}:=\sigma\left(\mathcal{F}_{t_0} \cup \sigma\left(N_{T}\right)\right)$. Through the tower rule we obtain: 
\begin{eqnarray*}
\mathbb{E}\left[L_{T}^{\star}\left|\mathcal{F}_{t_0}\right.\right]&=&L_{t_0}^{\star}\mathbb{E}\left[\mathbb{E}\left[\prod_{k=N_{t_0}+1}^{N_T} \frac{e^{\theta^{\star}J_k}}{\mathbb{E}\left(e^{\theta^{\star}\bar{J}}\right)} \left|\mathcal{G}^T_{t_0}\right.\right]\left|\mathcal{F}_{t_0}\right.\right].
\end{eqnarray*}
Exploiting the i.i.d. assumption for the jump size and its independence from the counting process $N_T$, we get:
\begin{equation*}
\mathbb{E}\left[L_{T}^{\star}\left|\mathcal{F}_{t_0}\right.\right]=L_{t_0}^{\star}\mathbb{E}\left[\prod_{k=N_{t_0}+1}^{N_T} \mathbb{E}\left[ \frac{e^{\theta^{\star}J_k}}{\mathbb{E}\left(e^{\theta^{\star}\bar{J}}\right)} \right]\left|\mathcal{F}_{t_0}\right.\right] = L_{t_0}^{\star}.
\end{equation*}
%
 %

\subsection{Asset price dynamics under the risk-neutral measure}


The Radon-Nikodym derivative, determined in the previous section, makes the discounted underlying asset price a (local) martingale. In this section we present the characteristic function of the log-price under the risk-neutral measure $\mathbb{Q}$.  

\begin{theorem}\label{cf_Q}
The characteristic function of the log-price under $\mathbb{Q}$ measure reads
	\begin{equation}
		\mathbb{E}^{\mathbb{Q}}\left[e^{iu\ln\left(S_{T}\right)}\left|\mathcal{F}_{t_0}\right.\right]=S_{t_0}^{iu}\exp\left[u_{0,T}\left(t_0\right)+u_{2,T}\left(t_0\right)^{\top}X_{t_0}\right],
	\end{equation}
where the time coefficients $u_{0,T}\left(\cdot\right)$ and $u_{2,T}\left(\cdot\right)$ satisfy the following system of ordinary differential equations 
\begin{equation}
	\left\{\begin{array}{l}
		\frac{\partial u_{0,T}\left(t\right)}{\partial t}= \mu \left[1-\mathbb{E}^{\mathbb{Q}}\left(e^{i u \bar{J}^{\mathbb{Q}}}\right) e^{u_{2,T}\left(t\right)^{\top}\mathbf{e}}\right]-  i u \left\{r-\mu\left[\mathbb{E}^{\mathbb{Q}}\left(e^{\bar{J}^{\mathbb{Q}}}\right) -1\right]\right\} \\
		\mathbb{J}_{u_{2,T}\left(t\right)}^{\top} = \left[1-\mathbb{E}^{\mathbb{Q}}\left(e^{i u \bar{J}^{\mathbb{Q}}} \right) e^{u_{2,T}\left(t\right)^{\top}\mathbf{e}}\right]\mathbf{b}^{\top} - u_{2,T}\left(t\right)^{\top}\mathbf{A} +i u \left[\mathbb{E}^{\mathbb{Q}}\left(e^{\bar{J}^{\mathbb{Q}}}\right) -1\right] \mathbf{b}^{\top},
	\end{array}
	\right.
	\label{MainResultForPricing}
\end{equation}	
with final conditions
\begin{equation}
	\left\{
	\begin{array}{l}
		u_{0,T}\left(T\right)=0\\
		u_{2,T}\left(T\right)= \mathbf{0}\\
	\end{array}
	\right. .
	\label{finalCondLogPrice_Q}
\end{equation}
The density of the jump size under the risk-neutral measure is obtained using the Esscher transform as follows
\begin{equation}
f_{\bar{J}^{\mathbb{Q}}}\left(j\right) =f_{\bar{J}}\left(j\right)\frac{e^{\theta^{\star}j}}{\mathbb{E}\left[e^{\theta^{\star}\bar{J}}\right]}.
\label{EsschJdens}
\end{equation}
\end{theorem}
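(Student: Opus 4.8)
The plan is to re-run the martingale/PIDE argument of the proof of Theorem~\ref{characteristicP}, but now under $\mathbb{Q}$, after first making precise how the change of measure $L^{\star}$ acts on the counting process and on the jump marks.

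\textbf{Step 1 (action of the change of measure).} Write the $\mathbb{P}$-compensator of the integer-valued random measure of $\bigl(N,\{J_k\}\bigr)$ as $\nu(\mathrm{d}t,\mathrm{d}x)=\lambda_t\,\mathrm{d}t\,f_{\bar J}(x)\,\mathrm{d}x$ with $\lambda_t=\mu+\mathbf b^{\top}X_t$. Since $\mathrm{d}L^{\star}_t=L^{\star}_{t^{-}}\bigl(e^{\theta^{\star}J_t}/\mathbb{E}[e^{\theta^{\star}\bar J}]-1\bigr)\mathrm{d}N_t$ and, by the i.i.d.\ assumption, $\mathbb{E}[e^{\theta^{\star}J_t}\mid\mathcal F_{t^{-}}]=\mathbb{E}[e^{\theta^{\star}\bar J}]$, the Girsanov theorem for marked point processes gives that under $\mathbb{Q}$ the compensator becomes $\bigl(e^{\theta^{\star}x}/\mathbb{E}[e^{\theta^{\star}\bar J}]\bigr)\nu(\mathrm{d}t,\mathrm{d}x)$; integrating in $x$ and using $\mathbb{E}[e^{\theta^{\star}\bar J}]/\mathbb{E}[e^{\theta^{\star}\bar J}]=1$ shows the $\mathbb{Q}$-intensity of $N$ is still $\lambda_t=\mu+\mathbf b^{\top}X_t$, while the normalized mark kernel is exactly the Esscher density~\eqref{EsschJdens}, so $\{J_k\}$ remains i.i.d.\ under $\mathbb{Q}$ with law $\bar J^{\mathbb{Q}}$ and $\mathbb{E}^{\mathbb{Q}}[e^{\bar J^{\mathbb{Q}}}]=\mathbb{E}[e^{(\theta^{\star}+1)\bar J}]/\mathbb{E}[e^{\theta^{\star}\bar J}]$. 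Equation~\eqref{eq:LSandLocalMart1} then reads $\varphi=1-\mathbb{E}^{\mathbb{Q}}[e^{\bar J^{\mathbb{Q}}}]$, so that the drift in~\eqref{condAlphat} rewrites as $\alpha_t=r-\lambda_t\bigl(\mathbb{E}^{\mathbb{Q}}[e^{\bar J^{\mathbb{Q}}}]-1\bigr)$; this is the form we carry into the computation, together with the representation~\eqref{eq:sol} of $S_T$ (which is pathwise and hence valid under $\mathbb{Q}$).

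\textbf{Step 2 (martingale argument and log-affine ansatz).} As in~\eqref{eq:jointCharLnST}, consider $g_t:=\mathbb{E}^{\mathbb{Q}}\bigl[\exp(iu\ln(S_T)+iu_2^{\top}X_T)\mid\mathcal F_t\bigr]$, a complex-valued $\mathbb{Q}$-martingale, and apply It\^o's lemma for semimartingales to $g$ along $\mathrm{d}\ln(S_t)=\alpha_t\,\mathrm{d}t+J_t\,\mathrm{d}N_t$ and $\mathrm{d}X_t=\mathbf A X_t\,\mathrm{d}t+\mathbf e\,\mathrm{d}N_t$. Adding and subtracting the $\mathbb{Q}$-compensator $\lambda_t\,\mathbb{E}^{\mathbb{Q}}\!\bigl[g_t(\ln(S_{t^{-}})+J_t,X_t+\mathbf e)-g_t(\ln(S_{t^{-}}),X_t)\mid\mathcal F_{t^{-}}\bigr]\,\mathrm{d}t$ of the jump term and imposing the vanishing of the drift of the $\mathbb{Q}$-martingale $g$ yields
\[
0=\frac{\partial g_t}{\partial t}+\alpha_t\frac{\partial g_t}{\partial \ln(S_t)}+\bigl(\nabla_{X_t}g_t\bigr)^{\top}\mathbf A X_t+\bigl(\mu+\mathbf b^{\top}X_t\bigr)\,\mathbb{E}^{\mathbb{Q}}\!\left[g_t\bigl(\ln(S_{t^{-}})+J_t,X_t+\mathbf e\bigr)-g_t\bigl(\ln(S_{t^{-}}),X_t\bigr)\,\big|\,\mathcal F_{t^{-}}\right].
\]
Inserting the ansatz $g_t=\exp\!\bigl[u_{0,T}(t)+u_{1,T}(t)\ln(S_{t^{-}})+u_{2,T}(t)^{\top}X_t\bigr]$, the $\ln(S_{t^{-}})$-coefficient forces $\partial_t u_{1,T}=0$, hence $u_{1,T}\equiv iu$, and the conditional expectation of the jump increment factorizes as $g_t\bigl(\mathbb{E}^{\mathbb{Q}}(e^{iu\bar J^{\mathbb{Q}}})e^{u_{2,T}(t)^{\top}\mathbf e}-1\bigr)$ by the i.i.d.\ property under $\mathbb{Q}$.

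\textbf{Step 3 (identification of the ODEs).} Substitute $\alpha_t=r-\bigl(\mu+\mathbf b^{\top}X_t\bigr)\bigl(\mathbb{E}^{\mathbb{Q}}[e^{\bar J^{\mathbb{Q}}}]-1\bigr)$ from Step~1, divide by $g_t$, and match, on one side, the terms not multiplying $X_t$ and, on the other, the row-vector coefficient of $X_t$; this produces exactly the two ODEs of~\eqref{MainResultForPricing}. The final conditions come from evaluating $g_T=\exp(iu\ln(S_T)+iu_2^{\top}X_T)$ at $t=T$, and specializing to $u_2=\mathbf 0$ gives~\eqref{finalCondLogPrice_Q} and the stated form $S_{t_0}^{iu}\exp[u_{0,T}(t_0)+u_{2,T}(t_0)^{\top}X_{t_0}]$ of the characteristic function. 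The Esscher identity~\eqref{EsschJdens} itself is the by-product of Step~1 (it can also be obtained directly by computing $\mathbb{E}^{\mathbb{Q}}[h(J_k)]$ for bounded test functions $h$ via Bayes' rule with $L^{\star}$).

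\textbf{Main obstacle.} The It\^o expansion and the algebraic matching of the $X_t$-coefficients are essentially a line-by-line transcription of the proof of Theorem~\ref{characteristicP}, hence routine. The delicate point is Step~1: one must justify rigorously that $L^{\star}$ leaves the intensity of $N$ unchanged while Esscher-tilting the mark law --- equivalently that $(X_t,N_t)$ remains a CARMA$(p,q)$-Hawkes under $\mathbb{Q}$ with the same $\mathbf A,\mathbf b,\mu$ --- and that all the conditional expectations of $e^{iu\bar J^{\mathbb{Q}}}$ and $e^{\bar J^{\mathbb{Q}}}$ appearing above are finite; the latter requires $\theta^{\star}$ and $\theta^{\star}+1$ to lie in the interior of the domain of the mgf of $\bar J$, which is precisely where the standing assumptions on $\theta_t$ and $\sup_{t\in[0,\mathbb{T}]}\mathbb{E}[L^{\star}_t]<+\infty$ are used.
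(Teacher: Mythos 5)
Your proof is correct, but it takes a genuinely different route from the paper's. The paper never changes measure explicitly: it augments the state with $m_t=\ln L_t^{\star}$, applies It\^o's lemma under $\mathbb{P}$ to the $\mathbb{P}$-martingale $f_t=\mathbb{E}^{\mathbb{P}}[e^{iu\ln S_T}L_T^{\star}\mid\mathcal{F}_t]$ with a log-affine ansatz containing an extra coefficient $u_{3,T}(t)$ (which turns out to be identically $1$), and only at the very end recognizes the ratios $\mathbb{E}[e^{(iu+\theta^{\star})\bar J}]/\mathbb{E}[e^{\theta^{\star}\bar J}]$ and $\mathbb{E}[e^{(1+\theta^{\star})\bar J}]/\mathbb{E}[e^{\theta^{\star}\bar J}]$ as the characteristic function and first exponential moment of the Esscher-tilted law, then strips off $e^{-m_{t_0}}$ to recover the $\mathbb{Q}$-expectation via Bayes. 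You instead front-load the measure change: you invoke the Girsanov theorem for marked point processes to show that under $\mathbb{Q}$ the intensity of $N$ is unchanged while the mark law is Esscher-tilted, and then transcribe the PIDE argument of Theorem~\ref{characteristicP} directly under $\mathbb{Q}$ with $\varphi=-(\mathbb{E}^{\mathbb{Q}}[e^{\bar J^{\mathbb{Q}}}]-1)$. I checked the coefficient matching in your Step~3 and it reproduces both ODEs of \eqref{MainResultForPricing} exactly. What each approach buys: yours is conceptually cleaner and makes explicit the structural fact (only implicit in the paper, and stated after the theorem) that $(X_t,N_t)$ remains a CARMA$(p,q)$-Hawkes with the same $\mathbf{A},\mathbf{b},\mu$ under $\mathbb{Q}$; the price is reliance on the Girsanov theorem for random measures as an external black box, whereas the paper's density-augmentation argument is self-contained modulo It\^o's lemma and yields the Esscher identification as a by-product of the algebra rather than as a hypothesis. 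You correctly flag the integrability requirements ($\theta^{\star}$ and $\theta^{\star}+1$ in the domain of the mgf of $\bar J$) as the point where the standing assumptions are consumed.
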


\begin{proof}
To determine the characteristic function of the log-price under $\mathbb{Q}$ we need the Radon-Nikodym derivative $\left\{L_t^{\star}\right\}_{t\in\left[0,\mathbb{T}\right]}$ as:\footnote{We remind that $L_t^{\star}$ is the process defined in \eqref{LT} at $\theta^{\star}$ solution of \eqref{eq:LSandLocalMart1} with the assumption $\sup_{t\in\left[0,\mathbb{T}\right]}\mathbb{E}\left[ L_{t}^{\star}\right]<+\infty$.}
\begin{equation}
	\mathbb{E}^{\mathbb{Q}}\left[e^{iu\ln\left(S_{T}\right)}\left|\mathcal{F}_{t_0}\right.\right]= \mathbb{E}\left[e^{iu\ln\left(S_{T}\right)}\frac{L_{T}^{\star}}{L_{t_0}^{\star}}\left|\mathcal{F}_{t_0}\right.\right].
\end{equation}
We define 
\begin{equation}
	\mbox{d}m_t := \mbox{d} \ln\left(L_{t}^{\star}\right)  = \left\{\theta^{\star}J_t-\ln\left[\mathbb{E}\left(e^{\theta^{\star}\bar{J}}\right)\right]\right\}\mbox{d}N_t,
	\label{baba}
\end{equation} 
and, $\forall t \in \left[t_0, \ T\right]$, we consider the following $\mathbb{P}$-martingale process
\begin{equation}
	f_{t}\left(\ln\left(S_{t}\right),X_{t},m_{t}\right):=\mathbb{E}^{\mathbb{P}}\left[e^{iu\ln\left(S_{T}\right)}L_{T}^{\star}\left|\mathcal{F}_{t}\right.\right].
	\label{aaaaaaa}
\end{equation}

\noindent Applying It\^o's lemma to \eqref{aaaaaaa}, we obtain
\begin{eqnarray}
	\mbox{d}\ft &=& \frac{\partial \ftm}{\partial t}\mbox{d}t+ \frac{\partial \ftm}{\partial \ln\left(S_t\right)}\mbox{d}\ln\left(S_t\right)+\nabla_{X_t}\ftm^{\top}\mbox{d}X_t\nonumber\\
	&+&\frac{\partial \ftm}{\partial m_t}\mbox{d}m_t+\left[\ftj-f_{t}\left(\ln\left(S_{t^{-}}\right),X_{t},m_{t^{-}}\right)\right]\mbox{d}N_{t}\nonumber\\
	&-&\left[\frac{\partial \ftm}{\partial \ln\left(S_t\right)}\Delta \ln\left(S_t\right) +\frac{\partial \ftm}{\partial m_t}\Delta m_t   \right]\mbox{d}N_{t}\nonumber\\
	&-&\nabla_{X_t}\ftm^{\top} \Delta X_t \mbox{d}N_{t}.
	\label{eq:ft1}
\end{eqnarray}
From \eqref{pureJumpwithnounitary} we get $\mbox{d}\ln \left(S_t\right)$ and, inserting \eqref{eq:state_process} and \eqref{baba} into \eqref{eq:ft1}, we obtain:
\begin{eqnarray*}
	\mbox{d}\ft &=& \left\{\frac{\partial \ftm}{\partial t}+\alpha_t\frac{\partial \ftm}{\partial \ln\left(S_t\right)} + \nabla_{X_t}\ftm^{\top}\mathbf{A}X_{t}\right\}\mbox{d}t\nonumber\\
	&+&\left[\ftj-f_{t}\left(\ln\left(S_{t^{-}}\right),X_{t},m_{t^{-}}\right)\right]\mbox{d}N_{t}\nonumber\\
	&=& \left\{\frac{\partial \ftm}{\partial t}+\alpha_t\frac{\partial \ftm}{\partial \ln\left(S_t\right)}+\nabla_{X_t}\ftm^{\top}\mathbf{A}X_{t}\right\}\mbox{d}t\nonumber\\
	&+&\left(\mu+\mathbf{b}^{\top}X_{t}\right)\mathbb{E}\left[\ftj-f_{t}\left(\ln\left(S_{t^{-}}\right),X_{t},m_{t^{-}}\right)\left|\mathcal{F}_{t^{-}}\right.\right]\mbox{d}t\nonumber\\
	&+&\left\{\left[\ftj-f_{t}\left(\ln\left(S_{t^{-}}\right),X_{t},m_{t^{-}}\right)\right]\mbox{d}N_{t}\right.\nonumber\\
	&-&\left.\left(\mu+\mathbf{b}^{\top}X_{t}\right)\mathbb{E}\left[\ftj-f_{t}\left(\ln\left(S_{t^{-}}\right),X_{t},m_{t^{-}}\right)\left|\mathcal{F}_{t^{-}}\right.\right]\right\}\mbox{d}t.
\end{eqnarray*}
The zero drift condition for the process $\left\{f_{t}\left(\ln\left(S_{t}\right),X_{t},m_{t}\right)\right\}_{t \in \left[t_0, \ T\right]}$ in \eqref{aaaaaaa}  becomes:
\begin{eqnarray}
	0 &=& \frac{\partial \ftm}{\partial t}+\alpha_t\frac{\partial \ftm}{\partial \ln\left(S_t\right)} + \nabla_{X_t}\ftm^{\top}\mathbf{A}X_{t}\nonumber\\
	&+&\left(\mu+\mathbf{b}^{\top}X_{t}\right)\mathbb{E}\left[\ftj-f_{t}\left(\ln\left(S_{t^{-}}\right),X_{t},m_{t^{-}}\right)\left|\mathcal{F}_{t^{-}}\right.\right]\nonumber\\
	&=& \frac{\partial \ftm}{\partial t}+\alpha_t\frac{\partial \ftm}{\partial \ln\left(S_t\right)} +\mu \mathbb{E}\left[\ftj-f_{t}\left(\ln\left(S_{t^{-}}\right),X_{t},m_{t^{-}}\right)\left|\mathcal{F}_{t^{-}}\right.\right] \nonumber\\
	&+&\left\{\nabla_{X_t}\ftm^{\top}\mathbf{A}+\mathbb{E}\left[\ftj-f_{t}\left(\ln\left(S_{t^{-}}\right),X_{t},m_{t^{-}}\right)\left|\mathcal{F}_{t^{-}}\right.\right]\mathbf{b}^{\top}\right\}X_{t}.
\end{eqnarray}
We assume a log-affine linear form for the function $\ftzero$ that reads:
\begin{equation*}
	\ftzero = \exp\left[u_{0,T}\left(t_0\right)+u_{1,T}\left(t_0\right)\ln\left(S_{t_0}\right)+u_{2,T}\left(t_0\right)^{\top}X_{t_0}+u_{3,T}\left(t_0\right)m_{t_0}\right].
\end{equation*}
We compute the following partial derivatives
\begin{eqnarray*}
	\frac{\partial \ftm}{\partial t}&=&\ftm\left[\frac{\partial u_{0,T}\left(t\right)}{\partial t}+\frac{\partial u_{1,T}\left(t\right)}{\partial t}\ln\left(S_{t^{-}}\right) + \mathbb{J}_{u_{2,T}\left(t\right)}^{\top}X_{t} + \frac{\partial u_{3,T}\left(t\right)}{\partial t}m_{t^{-}} \right]\nonumber\\ \\
	\frac{\partial \ftm}{\partial \ln\left(S_t\right)}&=& \ftm u_{1,T}\left(t\right)\nonumber\\\\
	\nabla_{X_{t}}\ftm &=& \ftm u_{2,T}\left(t\right)\nonumber\\\\
	\frac{\partial \ftm}{\partial m_t} &=& \ftm u_{3,T}\left(t\right),\nonumber\\ \\
\end{eqnarray*}
and we observe that
\begin{eqnarray*}
	\mathbb{E}\left[\ftj - f_{t}\left(\ln\left(S_{t^{-}}\right),X_{t},m_{t^{-}}\right)\left|\mathcal{F}_{t^{-}}\right.\right] &=&\ftm \mathbb{E}\left\{ \exp\left[ \left[u_{1,T}\left(t\right) +   u_{3,T}\left(t\right) \theta^{\star}\right] J_t \right.\right.\nonumber\\
	&+& \left.\left. u_{2,T}\left(t\right)^{\top}\mathbf{e} - u_{3,T}\left(t\right)\ln\left[\mathbb{E}\left(e^{\theta^{\star}\bar{J}}\right) \right]\right]-1\left|\mathcal{F}_{t^{-}}\right.\right\}\nonumber\\
	&=& \left\{\mathbb{E}\left[\frac{e^{\left(u_{1,T}\left(t\right) +   u_{3,T}\left(t\right) \theta^{\star}\right) J_t}}{\left[\mathbb{E}\left(e^{\theta^{\star}\bar{J}} \right)\right]^{u_{3,T}\left(t\right)}} \left|\mathcal{F}_{t^{-}}\right.\right] e^{u_{2,T}\left(t\right)^{\top}\mathbf{e}}-1\right\} \nonumber\\
	&\times& \ftm.
\end{eqnarray*}
Therefore, we have:
\begin{eqnarray*}
	0 &=& \frac{\partial u_{0,T}\left(t\right)}{\partial t}+\frac{\partial u_{1,T}\left(t\right)}{\partial t}\ln\left(S_{t^{-}}\right) + \mathbb{J}_{u_{2,T}\left(t\right)}^{\top}X_t + \frac{\partial u_{3,T}\left(t\right)}{\partial t}m_{t^{-}} \nonumber\\
	&+&\alpha_t u_{1,T}\left(t\right) +\mu \left\{\mathbb{E}\left[\frac{e^{\left(u_{1,T}\left(t\right) +   u_{3,T}\left(t\right) \theta^{\star}\right) J_t}}{\left[\mathbb{E}\left(e^{\theta^{\star}\bar{J}}\right)\right]^{u_{3,T}\left(t\right)}} \left|\mathcal{F}_{t^{-}}\right.\right] e^{u_{2,T}\left(t\right)^{\top}\mathbf{e}}-1\right\} \nonumber\\
	&+&\left\{u_{2,T}\left(t\right)^{\top}\mathbf{A}+\left\{\mathbb{E}\left[\frac{e^{\left(u_{1,T}\left(t\right) +   u_{3,T}\left(t\right) \theta^{\star}\right) J_t}}{\left[\mathbb{E}\left(e^{\theta^{\star}\bar{J}}\right)\right]^{u_{3,T}\left(t\right)}} \left|\mathcal{F}_{t^{-}}\right.\right] e^{u_{2,T}\left(t\right)^{\top}\mathbf{e}}-1\right\}\mathbf{b}^{\top}\right\}X_{t}\nonumber\\
\end{eqnarray*}
and, using \eqref{eq:LSandLocalMart0}, it can be rearranged as follows:
\begin{eqnarray}\label{bicocca}
	0 &=& \frac{\partial u_{1,T}\left(t\right)}{\partial t}\ln\left(S_{t^{-}}\right)  + \frac{\partial u_{3,T}\left(t\right)}{\partial t}m_{t^{-}} + \frac{\partial u_{0,T}\left(t\right)}{\partial t} + \left\{r-\mu\left\{\mathbb{E}\left[\frac{e^{\left(1 +  \theta^{\star}\right) J_t}}{\mathbb{E}\left(e^{\theta^{\star}\bar{J}}\right)} \left|\mathcal{F}_{t^{-}}\right.\right] -1\right\}\right\} u_{1,T}\left(t\right) \nonumber\\
	&+& \mu \left\{\mathbb{E}\left[\frac{e^{\left(u_{1,T}\left(t\right) +   u_{3,T}\left(t\right) \theta^{\star}\right) J_t}}{\left[\mathbb{E}\left(e^{\theta^{\star}\bar{J}}\right)\right]^{u_{3,T}\left(t\right)}} \left|\mathcal{F}_{t^{-}}\right.\right] e^{u_{2,T}\left(t\right)^{\top}\mathbf{e}}-1\right\} \nonumber\\
	&+&\left\{u_{2,T}\left(t\right)^{\top}\mathbf{A}+\left\{\mathbb{E}\left[\frac{e^{\left(u_{1,T}\left(t\right) +   u_{3,T}\left(t\right) \theta^{\star}\right) J_t}}{\left[\mathbb{E}\left(e^{\theta^{\star}\bar{J}}\right)\right]^{u_{3,T}\left(t\right)}} \left|\mathcal{F}_{t^{-}}\right.\right] e^{u_{2,T}\left(t\right)^{\top}\mathbf{e}}-1\right\}\mathbf{b}^{\top}+ \mathbb{J}_{u_{2,T}\left(t\right)}^{\top}\right\}X_{t}\nonumber\\
	&-&\left\{\mathbb{E}\left[\frac{e^{\left(1 +  \theta^{\star}\right) J_t}}{\mathbb{E}\left(e^{\theta^{\star}\bar{J}}\right)} \left|\mathcal{F}_{t^{-}}\right.\right] -1\right\} u_{1,T}\left(t\right)\mathbf{b}^{\top}X_t.
\end{eqnarray}
Exploiting the i.i.d. assumption for the jump size, the quantity $\mathbb{E}\left[\frac{e^{\left(1 +  \theta^{\star}\right) J_t}}{\mathbb{E}\left(e^{\theta^{\star}\bar{J}}\right)} \left|\mathcal{F}_{t^{-}}\right.\right]$ in \eqref{bicocca} can be rewritten as
\begin{equation*}
	\mathbb{E}\left[\frac{e^{\left(1 +  \theta^{\star}\right) J_t}}{\mathbb{E}\left(e^{\theta^{\star}\bar{J}}\right)} \left|\mathcal{F}_{t^{-}}\right.\right] =\frac{\mathbb{E}\left[e^{\left(1 + \theta^{\star}\right)\bar{J}}\right]}{\mathbb{E}\left[e^{\theta^{\star}\bar{J}}\right]}.
\end{equation*}
We note that the time-coefficients $u_{0,T}\left(t\right)$, $u_{1,T}\left(t\right)$, $u_{2,T}\left(t\right)$ and $u_{3,T}\left(t\right)$ satisfy the following system of ordinary differential equations:
\begin{equation}
	\left\{\begin{array}{l}
		\frac{\partial u_{0,T}\left(t\right)}{\partial t}= \mu \left\{1-\mathbb{E}\left[\frac{e^{\left(u_{1,T}\left(t\right) +   u_{3,T}\left(t\right) \theta^{\star}\right) J_t}}{\left[\mathbb{E}\left(e^{\theta^{\star}\bar{J}}\right)\right]^{u_{3,T}\left(t\right)}} \left|\mathcal{F}_{t^{-}}\right.\right] e^{u_{2,T}\left(t\right)^{\top}\mathbf{e}}\right\}- \left\{r-\mu\left\{ \frac{\mathbb{E}\left[e^{\left(1 + \theta^{\star}\right)\bar{J}}\right]}{\mathbb{E}\left[e^{\theta^{\star}\bar{J}}\right]} -1\right\}\right\} u_{1,T}\left(t\right) \\
		\frac{\partial u_{1,T}\left(t\right)}{\partial t} = 0 \\  
		\mathbb{J}_{u_{2,T}\left(t\right)}^{\top} = \left\{1-\mathbb{E}\left[\frac{e^{\left(u_{1,T}\left(t\right) +   u_{3,T}\left(t\right) \theta^{\star}\right) J_t}}{\left[\mathbb{E}\left(e^{\theta^{\star}\bar{J}}\right)\right]^{u_{3,T}\left(t\right)}} \left|\mathcal{F}_{t^{-}}\right.\right] e^{u_{2,T}\left(t\right)^{\top}\mathbf{e}}\right\}\mathbf{b}^{\top} - u_{2,T}\left(t\right)^{\top}\mathbf{A} \\
		\ \ \ \ \ \ \ \ \ \ \ \ \ \ \ \  + \left\{\frac{\mathbb{E}\left[e^{\left(1 + \theta^{\star}\right)\bar{J}}\right]}{\mathbb{E}\left[e^{\theta^{\star}\bar{J}}\right]}-1\right\} u_{1,T}\left(t\right)\mathbf{b}^{\top}\\
		\frac{\partial u_{3,T}\left(t\right)}{\partial t} = 0 
	\end{array}
	\right.
	\label{a1}
\end{equation}
with the final conditions that arise by substituting $t = T$ in \eqref{aaaaaaa}:
\begin{equation*}
	\left\{
	\begin{array}{l}
		u_{0,T}\left(T\right)= 0  \\
		u_{1,T}\left(T\right)= iu \\ 
		u_{2,T}\left(T\right)= 0  \\ 
		u_{3,T}\left(T\right)= 1 
	\end{array}
	\right. .
\end{equation*}
Noticing that $u_{1,T}\left(t\right)$ and $u_{3,T}\left(t\right)$ do not depend on time, we can rewrite the conditions in \eqref{a1} as follows:
\begin{equation*}
	\left\{\begin{array}{l}
		\frac{\partial u_{0,T}\left(t\right)}{\partial t}= \mu \left\{1-\mathbb{E}\left[\frac{e^{\left(i u  +   \theta^{\star}\right) J_t}}{\left[\mathbb{E}\left(e^{\theta^{\star}\bar{J}}\right)\right]} \left|\mathcal{F}_{t^{-}}\right.\right] e^{u_{2,T}\left(t\right)^{\top}\mathbf{e}}\right\}-  i u \left\{r-\mu\left\{\frac{\mathbb{E}\left[e^{\left(1 + \theta^{\star}\right)\bar{J}}\right]}{\mathbb{E}\left[e^{\theta^{\star}\bar{J}}\right]} -1\right\}\right\} \\
		u_{1,T}\left(t\right) = iu \\  
		\mathbb{J}_{u_{2,T}\left(t\right)}^{\top} = \left\{1-\mathbb{E}\left[\frac{e^{\left(i u +  \theta^{\star}\right) J_t}}{\mathbb{E}\left[e^{\theta^{\star}\bar{J}}\right]} \left|\mathcal{F}_{t^{-}}\right.\right] e^{u_{2,T}\left(t\right)^{\top}\mathbf{e}}\right\}\mathbf{b}^{\top} - u_{2,T}\left(t\right)^{\top}\mathbf{A} +i u \left\{\frac{\mathbb{E}\left[e^{\left(1 + \theta^{\star}\right)\bar{J}}\right]}{\mathbb{E}\left[e^{\theta^{\star}\bar{J}}\right]} -1\right\} \mathbf{b}^{\top}\\
		u_{3,T}\left(t\right) = 1 
	\end{array}
	\right. .
\end{equation*}
Exploiting the i.i.d. assumption for the jump size, finally we get:
\begin{equation}
	\left\{\begin{array}{l}
		\frac{\partial u_{0,T}\left(t\right)}{\partial t}= \mu \left\{1-\frac{\mathbb{E}\left[e^{\left(i u  +   \theta^{\star}\right) \bar{J}}\right]}{\left[\mathbb{E}\left(e^{\theta^{\star}\bar{J}}\right)\right]}  e^{u_{2,T}\left(t\right)^{\top}\mathbf{e}}\right\}-  i u \left\{r-\mu\left\{\frac{\mathbb{E}\left[e^{\left(1 + \theta^{\star}\right)\bar{J}}\right]}{\mathbb{E}\left[e^{\theta^{\star}\bar{J}}\right]} -1\right\}\right\} \\
		u_{1,T}\left(t\right) = iu \\  
		\mathbb{J}_{u_{2,T}\left(t\right)}^{\top} = \left\{1-\frac{\mathbb{E}\left[e^{\left(i u  +   \theta^{\star}\right) \bar{J}}\right]}{\left[\mathbb{E}\left(e^{\theta^{\star}\bar{J}}\right)\right]} e^{u_{2,T}\left(t\right)^{\top}\mathbf{e}}\right\}\mathbf{b}^{\top} - u_{2,T}\left(t\right)^{\top}\mathbf{A} +i u \left\{\frac{\mathbb{E}\left[e^{\left(1 + \theta^{\star}\right)\bar{J}}\right]}{\mathbb{E}\left[e^{\theta^{\star}\bar{J}}\right]} -1\right\} \mathbf{b}^{\top}\\
		u_{3,T}\left(t\right) = 1 
	\end{array}
	\right. .
	\label{a2}
\end{equation}
We observe that the quantities $\frac{\mathbb{E}\left[e^{\left(i u  +   \theta^{\star}\right) \bar{J}}\right]}{\left[\mathbb{E}\left(e^{\theta^{\star}\bar{J}}\right)\right]}$ and $\frac{\mathbb{E}\left[e^{\left(1 + \theta^{\star}\right)\bar{J}}\right]}{\mathbb{E}\left[e^{\theta^{\star}\bar{J}}\right]}$ are respectively the characteristic function and the first exponential moment of a random variable $\bar{J}^{\mathbb{Q}}$ whose density is defined in \eqref{EsschJdens}. 
This implies that $\frac{\partial u_{0,T}\left(t\right)}{\partial t}$ and $\mathbb{J}_{u_{2,T}\left(t\right)}^{\top}$ coincide with quantities in \eqref{MainResultForPricing}.

Finally, the conditional characteristic function of the log-price at maturity $T$ under $\mathbb{Q}$ can be written as 
\begin{eqnarray}
	\mathbb{E}^\mathbb{Q}\left[e^{iu \ln S_{T}}\left| \mathcal{F}_{t_0}\right.\right] &=& f_{t_0} \left(\ln S_{t_0}, X_{t_0}, m_{t_0}\right)e^{-m_{t_0}} \nonumber \\ 
	 &=&  S_{t_0}^{iu}\exp\left[u_{0,T}\left(t_0\right)+u_{2,T}\left(t_0\right)^{\top}X_{t_0}\right], 
	\label{cf_fform}
\end{eqnarray}
where the time coefficients $u_{0,T}\left(\cdot\right)$ and $u_{2,T}\left(\cdot\right)$ satisfy the system in \eqref{MainResultForPricing} and this concludes the proof. 
\end{proof}

Posing $\varphi=-\left[\mathbb{E}^{\mathbb{Q}}\left(e^{\bar{J}^{\mathbb{Q}}}\right)-1\right]$ and introducing a sequence $\left\{J_{k}^{\mathbb{Q}}\right\}_{k\geq1}$ of i.i.d. random variables with $J^{\mathbb{Q}}_{k}\overset{d}{=} \bar{J}^{\mathbb{Q}}$, the  conditions in \eqref{b01} and \eqref{MainResultForPricing} meet. This implies that, under $\mathbb{Q}$, the price process $S_{t}$ satisfies the following SDE:
\begin{equation}\label{sde_simplied_version}
	\mbox{d} S_{t}=\left\{r-\left(\mu+\mathbf{b}^{\top}X_t\right)\left[\mathbb{E}^{\mathbb{Q}}\left(e^{\bar{J}^{\mathbb{Q}}}\right)-1\right]\right\}S_{t^{-}}\mbox{d}t+S_{t^{-}}\mbox{d}\tilde{Y}^{\mathbb{Q}}_t,
\end{equation}
where the compound CARMA(p,q)-Hawkes $\tilde{Y}^{\mathbb{Q}}_t$ is defined as:
\begin{equation*}
\tilde{Y}^{\mathbb{Q}}_t:= \sum_{k=1}^{N_t}\left(e^{J_{k}^{\mathbb{Q}}}-1\right) \text{ with } \tilde{Y}^{\mathbb{Q}}_0=0.
\end{equation*}

\begin{remark}
	We present in Appendix~\ref{app1} a simplified version of the model in \eqref{pureJumpwithnounitary} where we assume $\mathbb{E}^{\mathbb{Q}}\left[e^{\bar{J}^\mathbb{Q}}\right] = 1$. In particular, the formula of an European call option turns out to be a weighted sum of call option prices where the weights are the conditional probabilities of events $\left\{N_{T}-N_{t_0}=n\right\}_{n\geq 0}$. We also discuss two numerical methods for the computation of the latter conditional probabilities based on the conditional characteristic function of $\left[X_T,N_T\right]$ derived in the same section.
\end{remark}

The price dynamics in \eqref{pureJumpwithnounitary},  under the physical measure $\mathbb{P}$, can be generalized by adding a diffusion term $\sigma_t S_{t}\mbox{d}W_t$ where  $\left\{\sigma_{t}\right\}_{t\in \left[0,\mathbb{T}\right]}$ is a positive adapted stochastic process and the couple $\left(\sigma_t,W_t\right)$ is independent of the compound CARMA(p,q) process $\tilde{Y}^{\mathbb{Q}}_t$. The resulting Radon-Nikodym derivative for the  price dynamics is the product of \eqref{LT} with the Radon-Nikodym derivative for a Geometric Brownian Motion that is computed by means of the Girsanov Theorem \citep[see eq. 54 page 682 in][]{shiryaev1999essentials}. Under $\mathbb{Q}$, the resulting price dynamics writes
\begin{equation}
	\mbox{d} S_{t}=\left\{r-\left(\mu+\mathbf{b}^{\top}X_t\right)\left(\mathbb{E}^{\mathbb{Q}}\left[e^{\bar{J}^{\mathbb{Q}}}\right]-1\right)\right\}S_{t^{-}}\mbox{d}t+\sigma_t S_{t^{-}}\mbox{d}W_t+S_{t^{-}}\mbox{d}\tilde{Y}^{\mathbb{Q}}_t.
	\label{eq:general_sde}
\end{equation}
The solution of the SDE in \eqref{eq:general_sde} reads
\begin{equation}
	S_{T}= S_{t_0}\exp\left[r\left(T-t_0\right) - \frac{1}{2}\int_{t_0}^{T}\sigma_t^2\mbox{d}t +  \int_{t_0}^{T}\sigma_t\mbox{d}W_t - \left(\mathbb{E}^{\mathbb{Q}}\left[e^{\bar{J}^{\mathbb{Q}}}\right]-1\right)\int_{t_0}^{T}\lambda_t \mbox{d}t + \left(Y^{\mathbb{Q}}_{T}- Y^{\mathbb{Q}}_{t_0}\right)\right]
	\label{eq:solutionS_generalunderQ}
\end{equation}
where $\lambda_t$ is defined in \eqref{Def:Lambda}. The process $\left\{Y^{\mathbb{Q}}_{t}\right\}_{t \in \left[0, \ \mathbb{T}\right]}$ has the same structure as in \eqref{YtDef} with the jump sequence $\left\{J_k\right\}_{k \geq  1}$ substituted with the jump sequence $\left\{J^{\mathbb{Q}}_k\right\}_{k \geq 1}$ introduced in Theorem~\ref{cf_Q}. 
From \eqref{eq:solutionS_generalunderQ} the conditional characteristic function of the log-price at maturity $T$ is: 
\begin{equation}
	\mathbb{E}^{\mathbb{Q}}\left[e^{iu\ln\left(S_{T}\right)}\left|\mathcal{F}_{t_0}\right.\right]=S_{t_0}^{iu}e^{u_{0,T}\left(t_0\right)+u_{2,T}\left(t_0\right)^{\top}X_{t_0}}\mathbb{E}^{\mathbb{Q}}\left[e^{iu\left(\int_{t_0}^{T}\sigma_t\mbox{d}W_t-\frac{1}{2}\int_{t_0}^{T}\sigma_t^2\mbox{d}t\right)} \left|\mathcal{F}_{t_0}\right. \right],
	\label{eq:FG}
\end{equation} 
where the time-dependent coefficients $u_{0,T}\left(t\right)$ and $u_{2,T}\left(t\right)$ are determined in Theorem \eqref{cf_Q}. A pricing formula based on the conditional characteristic function of the log-price can be obtained if the expectation in the right hand side of \eqref{eq:FG} has a log-affine form wrt $\sigma^2_{t_0}$. 

For $\sigma_{t} = \sigma>0$, the price dynamics in \eqref{eq:general_sde} reads
\begin{equation}
	\mbox{d} S_{t}=\left\{r-\left(\mu+\mathbf{b}^{\top}X_t\right)\left(\mathbb{E}^{\mathbb{Q}}\left[e^{\bar{J}^{\mathbb{Q}}}\right]-1\right)\right\}S_{t^{-}}\mbox{d}t+\sigma S_{t^{-}}\mbox{d}W_t+S_{t^{-}}\mbox{d}\tilde{Y}^{\mathbb{Q}}_t,
	\label{eq:STQ}
\end{equation}
and the conditional characteristic function in \eqref{eq:FG} becomes
\begin{equation}\label{cf_general}
	\mathbb{E}^{\mathbb{Q}}\left[e^{iu\ln\left(S_{T}\right)}\left|\mathcal{F}_{t_0}\right.\right]=S_{t_0}^{iu}\exp\left[-iu\frac12\sigma^2\left(T-t_0\right)-\frac{u^2}{2}\sigma^2\left(T-t_0\right) + u_{0,T}\left(t_0\right)+u_{2,T}\left(t_0\right)^{\top}X_{t_0}\right].
\end{equation}

In this paper we emphasize the role of previous jumps in the asset price dynamics. We will present the numerical and empirical analysis in a framework where we assume $\sigma$ to be constant, that is a CARMA(p,q)-Hawkes jump-diffusion framework. The analysis can be extended to the general case as in \eqref{eq:general_sde} specifying the dynamics of the process $\left\{\sigma_t\right\}_{t \in \left[0, \ \mathbb{T}\right]}$.

\section{Gauss-Laguerre quadrature for option pricing}
\label{GaussLaguerre}

The characteristic function of the log-price in \eqref{cf_general} can be used to evaluate European option prices.
In Proposition~\ref{theo_gl} we introduce a numerical pricing formula for European put options.
\begin{proposition}\label{theo_gl}
Let $K$ be the strike price of a European put option and assume that the price dynamics under $\mathbb{Q}$ satisfies the SDE in \eqref{eq:STQ} with characteristic function $\phi(u) := \cflp$ as in \eqref{cf_general}. The European put option price $p\left(K, t_0, T\right)$ at time $t_0$  can be evaluated as follows:
\begin{equation}\label{put_numerical}
	p\left(K, t_0, T\right) \approx e^{-r\left(T - t_0\right)} K\sum_{j=1}^{m}\left[\frac12 - \frac{1}{\pi}\sum_{k=1}^m \mathsf{Re}\left(\frac{e^{-i u_k \left(\ln\left(K\right) - u_{j}\right)}\phi(u_k)}{i u_k}\right)e^{u_k} \omega\left(u_k\right) \right]\omega\left(u_j\right),
\end{equation}
where $\mathsf{Re}\left(\cdot\right)$ returns the real part of the argument, the sequence $\left\{u_h\right\}_{h = 1,\ldots, m}$ contains the roots of the Laguerre polynomial $L_{m}$ of $m$-th order and the weight function $\omega\left(u_h\right)$ is defined as  
\begin{equation}\label{weight_fun}
	\omega\left(u_h\right):= \frac{u_h}{\left(m + 1\right)^2 \left[L_{m+1}\left(u_h\right)\right]^2}.
\end{equation}
\end{proposition}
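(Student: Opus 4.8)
The plan is to derive \eqref{put_numerical} by composing four ingredients: the risk-neutral valuation identity, the ``layer-cake'' representation of the put payoff, the Gil--Pelaez inversion formula, and two applications of the $m$-point Gauss--Laguerre quadrature rule, with a pair of changes of variable inserted to cast the relevant integrals into the canonical form $\int_0^\infty e^{-x}g(x)\,\mathrm{d}x$.

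First I would write $p(K,t_0,T)=e^{-r(T-t_0)}\,\mathbb{E}^{\mathbb{Q}}[(K-S_T)^{+}\,|\,\mathcal{F}_{t_0}]$ and use the pathwise identity $(K-S_T)^{+}=\int_0^K\mathbbm{1}_{\{S_T\le z\}}\,\mathrm{d}z$, which follows from $(K-S_T)^{+}=\int_0^\infty\mathbbm{1}_{\{K-S_T>y\}}\,\mathrm{d}y$ after the substitution $z=K-y$ together with the fact that $S_T>0$ almost surely. Since the dynamics \eqref{eq:STQ} carries a nondegenerate diffusion term, the conditional law of $S_T$ given $\mathcal{F}_{t_0}$ is absolutely continuous, so the distribution function $z\mapsto\mathbb{Q}(S_T\le z\mid\mathcal{F}_{t_0})$ is continuous; and because the integrand is nonnegative, Tonelli's theorem lets me interchange $\mathbb{E}^{\mathbb{Q}}[\,\cdot\,|\,\mathcal{F}_{t_0}]$ with $\int_0^K$, giving $p(K,t_0,T)=e^{-r(T-t_0)}\int_0^K\mathbb{Q}(S_T\le z\mid\mathcal{F}_{t_0})\,\mathrm{d}z$.

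Next, since $\phi$ in \eqref{cf_general} is the conditional characteristic function of $\ln S_T$, the Gil--Pelaez theorem yields
\[
\mathbb{Q}(S_T\le z\mid\mathcal{F}_{t_0})=\tfrac12-\tfrac{1}{\pi}\int_0^\infty\mathsf{Re}\!\left(\frac{e^{-iu\ln z}\phi(u)}{iu}\right)\mathrm{d}u ,
\]
whose hypotheses hold because $\phi(u)-1=O(u)$ as $u\to0$ (the conditional first moment of $\ln S_T$ is finite under the integrability assumptions of Remark~\ref{remark1}), which cancels the apparent $1/u$ singularity at the origin, while the Gaussian factor $e^{-u^{2}\sigma^{2}(T-t_0)/2}$ in \eqref{cf_general} forces rapid decay at infinity. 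I would then substitute $z=Ke^{-t}$ in the outer integral, which maps $t\in(0,\infty)$ onto $z\in(0,K)$, turns $\mathrm{d}z$ into $Ke^{-t}\,\mathrm{d}t$ and $\ln z$ into $\ln K-t$, and in the inner integral I would insert the neutral factor $e^{-u}e^{u}$. The price then reads
\[
p(K,t_0,T)=e^{-r(T-t_0)}K\int_0^\infty e^{-t}\!\left[\tfrac12-\tfrac{1}{\pi}\int_0^\infty e^{-u}\,\mathsf{Re}\!\left(\frac{e^{-iu(\ln K-t)}\phi(u)}{iu}\right)e^{u}\,\mathrm{d}u\right]\mathrm{d}t .
\]

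Finally, I would replace each integral $\int_0^\infty e^{-x}(\,\cdot\,)\,\mathrm{d}x$ by the $m$-point Gauss--Laguerre rule, whose abscissae are the roots $\{u_h\}_{h=1,\dots,m}$ of the Laguerre polynomial $L_m$ and whose weights are the $\omega(u_h)$ of \eqref{weight_fun}, evaluating the inner rule at the outer abscissae $t=u_j$; collecting terms reproduces exactly \eqref{put_numerical}. The only genuinely analytic inputs are the Tonelli interchange and the validity of Gil--Pelaez, both secured by the Brownian component; I expect the main obstacle --- and the likeliest source of a slip --- to be the bookkeeping across the two substitutions, making sure the prefactor $K$, the exponential correction $e^{u_k}$, and the two quadrature weights $\omega(u_k)$ and $\omega(u_j)$ all land in their stated positions. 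If one insists on a rigorous error estimate rather than the displayed ``$\approx$'', what remains is to invoke the classical Gauss--Laguerre remainder bound (controlled by a $2m$-th derivative of the integrand) for each of the two nested rules.
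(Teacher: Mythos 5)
Your argument is correct and follows essentially the same route as the paper's proof: both reduce the put price to $K\int_0^\infty e^{-y}F(\ln K - y)\,\mathrm{d}y$ (you via the layer-cake identity on the price scale and the substitution $z=Ke^{-t}$, the paper via $e^{\kappa}-e^{u}=\int_u^{\kappa}e^{t}\,\mathrm{d}t$ with Fubini on the log-price scale --- the same integral after a change of variable), then apply Gil--Pelaez with the inserted factor $e^{u}e^{-u}$ and two nested $m$-point Gauss--Laguerre rules. Your added remarks on the integrability hypotheses for Gil--Pelaez and on the Tonelli interchange are a harmless strengthening of what the paper leaves implicit.
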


\begin{proof}
Let 
\begin{equation}\label{condexp_priceput}
	p\left(K, t_0, T\right) = e^{-r\left(T - t_0\right)}\mathbb{E}^{\mathbb{Q}}\left[\left(K-S_T\right)^+\left|\mathcal{F}_{t_0}\right.\right].
\end{equation}	
Denoting with $F\left(x\right)$ the conditional cumulative distribution function  of the log-price at maturity, the conditional expectation in \eqref{condexp_priceput} becomes
\begin{equation}\label{eq:IntegPut}
	 \mathbb{E}^{\mathbb{Q}}\left[\left(K-S_T\right)^+\left|\mathcal{F}_{t_0}\right.\right] = \int_{-\infty}^{+\infty}\left(e^{\kappa}-e^{u}\right)\mathbbm{1}_{\left\{u\leq k\right\}}\mbox{d}F\left(u\right), 
\end{equation}	
with $\kappa:=\ln\left(K\right)$ and $u:= \ln\left(S_T\right)$.

\noindent The right-hand side of \eqref{eq:IntegPut} can be written as 
\begin{eqnarray}
\int_{-\infty}^{+\infty}\left(e^{\kappa}-e^{u}\right)\mathbbm{1}_{\left\{u\leq k\right\}}\mbox{d}F\left(u\right)&=& \int_{-\infty}^{+\infty}\int_{u}^{\kappa}e^{t}\mbox{d}t\mbox{d}F\left(u\right)\nonumber\\
&=& \int_{-\infty}^{+\infty}e^{t}\left[\int_{-\infty}^{+\infty}\mathbbm{1}_{\left\{u\leq t \leq \kappa \right\}}\mbox{d}F\left(u\right)\right]\mbox{d}t\nonumber\\
&=& \int_{-\infty}^{+\infty}e^{t}\mathbbm{1}_{\left\{t\leq \kappa\right\}}\left[\int_{-\infty}^{t}\mbox{d}F\left(u\right)\right]\mbox{d}t\nonumber\\
&=& \int_{-\infty}^{\kappa}e^{t}F\left(t\right)\mbox{d}t.
\label{Intaaa}
\end{eqnarray}
Defining $y :=\kappa - t$, then the integral in \eqref{Intaaa} reads:
\begin{equation}
	\int_{-\infty}^{+\infty}\left(e^{\kappa}-e^{u}\right)\mathbbm{1}_{\left\{u\leq k\right\}}\mbox{d}F\left(u\right)=  e^{\kappa}\int_0^{+\infty}e^{-y}F\left(\kappa-y\right)\mbox{d}y.
	\label{cccc}
\end{equation}
The integral in \eqref{cccc} can be computed using the Gauss-Laguerre quadrature that leads to the following approximation for the European put price:
\begin{equation}\label{approx1aaa}
	p\left(K, t_0, T\right)  \approx e^{-r\left(T - t_0\right)}  K\sum_{j=1}^{m}F\left(\kappa-u_j\right)\omega\left(u_j\right),
\end{equation}
where the sequence $\left\{u_j\right\}_{j = 1,\ldots, m}$ contains the roots of the Laguerre polynomial $L_{m}$ of $m$-th order and $\omega\left(\cdot\right)$ is defined as in \eqref{weight_fun}. \newline
The cumulative distribution function of the log-price in \eqref{eq:STQ} has not a closed-form expression. However, it can be computed numerically by the inversion of the characteristic function \cite{Gil1951}. That is, 
\begin{equation}
	F\left(x\right)=\frac12  - \frac{1}{\pi}\int_0^{+\infty}\mathsf{Re}\left(\frac{e^{-iux}\phi(u)}{iu}\right)\mbox{d}u.
	\label{eq:Gil-Peleaz}
\end{equation}
Note that \eqref{eq:Gil-Peleaz} can be approximated by the Gauss-Laguerre quadrature as follows 
\begin{eqnarray}
	F\left(x\right) &=&\frac12  - \frac{1}{\pi}\int_0^{+\infty}\mathsf{Re}\left(\frac{e^{-iux} \phi(u)}{iu}\right)e^{u}e^{-u}\mbox{d}u \nonumber \\ 
	&\approx& \frac12 - \frac{1}{\pi}\sum_{k=1}^m \mathsf{Re}\left(\frac{e^{-i u_k x}\phi\left(u_k\right)}{i u_k}\right)e^{u_k} \omega\left(u_k\right).
	\label{approx2aaa}
\end{eqnarray}
Combining the approximations in \eqref{approx1aaa} and \eqref{approx2aaa}, we obtain the result in \eqref{put_numerical} that approximates the price of an European put option at time $t_0$.
\end{proof}
Note that the proposed formula for European put option \eqref{put_numerical} does not require the truncation of the range integration in the risk-neutral valuation formula as done in \cite{carr1999option} for the Carr-Madan formula and in \cite{fang2009novel} for the COS method. Furthermore, the Carr-Madan formula needs the identification of a damping factor which is necessary for the accuracy of the pricing formula, while the COS method introduces three approximation errors, namely i) the truncation of the range integration, ii) the truncation of Fourier cosine series, and iii) the approximation formula for the coefficients in the Fourier cosine series, as reported in \cite[p.~174]{oosterlee2019mathematical}. In our approach, the approximation error term in \eqref{put_numerical} arises from the Gauss-Laguerre quadrature and is controlled by the order $m$ of the Laguerre polynomial. European call option prices can obtained by means of put-call parity.



\section{Simulation algorithm}\label{sim_algorithm}

In this section, we discuss a numerical method for simulating the underlying asset under the risk-neutral measure $\mathbb{Q}$ in which it is assumed that $\sigma_t$ is constant. Thus, the undelying asset price in \eqref{eq:solutionS_generalunderQ} writes:
\begin{equation}
S_{T}= S_{t_0}\exp\left[\left(r-\frac12 \sigma^2\right)\left(T-t_0\right)+\sigma\left(W_{T}-W_{t_0}\right) - \mathbb{E}^{\mathbb{Q}}\left[e^{\bar{J}^{\mathbb{Q}}}-1\right]\int_{t_0}^{T}\lambda_t \mbox{d}t + \left(Y^\mathbb{Q}_{T}-Y^\mathbb{Q}_{t_0}\right)\right].
\label{eq:SolutionSunderQ}
\end{equation}
The terms $\mathbb{E}^{\mathbb{Q}}\left[e^{\bar{J}^{\mathbb{Q}}}-1\right]\int_{t_0}^{T}\lambda_t \mbox{d}t$ and $\left(Y^\mathbb{Q}_{T}-Y^\mathbb{Q}_{t_0}\right)$ in \eqref{eq:SolutionSunderQ} require the simulation of the sequence of time arrivals and the counting process $N_t$.
In Section \ref{uCARMA}, we show that the intensity of a CARMA(p,q)-Hawkes process can be bounded by the intensity of a Hawkes process with an exponential kernel. This result enables us to construct an efficient thinning algorithm for simulating the time arrivals of the CARMA(p,q)-Hawkes process. Once the time arrivals are simulated and the final value of the counting process $N_T$ is obtained, we can compute the increment $Y^\mathbb{Q}_{T}-Y^\mathbb{Q}_{t_0}$ of the compound CARMA(p,q)-Hawkes process by generating $N_T - N_{t_0}$ jump sizes.
In Section~\ref{sim_compensator} we derive a recursive procedure for the evaluation of $\int_{t_0}^{T}\lambda_t \mbox{d}t$ once the time arrivals from $t_0$ to $T$ are observed.

\subsection{Simulation of  a CARMA(p,q)-Hawkes model through a thinning algorithm}\label{uCARMA}

We assume matrix $\mathbf{A}$ in \eqref{Acomp} to be diagonalizable, corresponding to the assumption that all eigenvalues of $\mathbf{A}$ are distinct. The eigenvectors of $\mathbf{A}$, i.e.,
$\left[1,\tilde{\lambda}_j, \tilde{\lambda}_j^2,\ldots,\tilde{\lambda}_J^{p-1}\right]^{\top}$ for $j=1,\ldots,p$, are used to define a $p\times p$ matrix $\mathbf{S}$. Specifically,
\begin{equation}
	\mathbf{S} :=\left[
	\begin{array}{ccc}
		1 & \ldots & 1\\
		\tilde{\lambda}_1 & \ldots & \tilde{\lambda}_p\\
		\tilde{\lambda}_1^2 & \ldots & \tilde{\lambda}_p^2\\
		\vdots & & \vdots\\
		\tilde{\lambda}_1^{p-1} & \ldots & \tilde{\lambda}_p^{p-1}\\
	\end{array}
	\right].
	\label{BoldS}
\end{equation}
It follows that $\mathbf{S}$ satisfies $\mathbf{S}^{-1}\mathbf{A}\mathbf{S}=\mathbf{\Lambda}$, where the diagonal matrix $\Lambda \in \mathbb{R}^{p \times p}$ is
\begin{equation}
	\mathbf{\Lambda} = \mathsf{diag}\left(\tilde{\lambda}_1,\ldots,\tilde{\lambda}_p\right).
	\label{BoldLambda}
\end{equation}
Quantities \eqref{BoldS} and \eqref{BoldLambda} play a crucial role in the construction of our thinning simulation algorithm. In Theorem \ref{thm1} we determine an upper bound for the intensity of a CARMA(p,q)-Hawkes process.
\begin{theorem}\label{thm1}
The intensity $\lambda_t$ of a CARMA(p,q)-Hawkes process is bounded from the following quantity
\begin{equation}
\bar{\lambda}_t := \mu + \sum_{T_{i}<t} \left\|\mathbf{b}^{\top}\mathbf{S}\right\|_2\left\|\mathbf{S}^{-1}\mathbf{e}\right\|_2 e^{\lambda\left(\mathbf{A}\right)\left(t-T_i\right)},
\label{ubint}
\end{equation}
where $T_i$ is the time arrival, $\left\|\cdot\right\|_{2}$ denotes the $\mathbb{L}^2$-norm, and $\lambda\left(\mathbf{A}\right)$ represents the largest real part eigenvalue of $\mathbf{A}$. 
\end{theorem}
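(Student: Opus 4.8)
The plan is to start from the explicit integral representation of the state process in \eqref{eq:CarLambda}, namely $X_t=\int_{[0,t)} e^{\mathbf{A}(t-s)}\mathbf{e}\,\mbox{d}N_s$, and rewrite it as a finite sum over the observed time arrivals strictly before $t$: $X_t=\sum_{T_i<t} e^{\mathbf{A}(t-T_i)}\mathbf{e}$. Substituting into $\lambda_t=\mu+\mathbf{b}^{\top}X_t$ gives $\lambda_t=\mu+\sum_{T_i<t}\mathbf{b}^{\top}e^{\mathbf{A}(t-T_i)}\mathbf{e}$. Since the conditional intensity is nonnegative under the assumptions of Remark~\ref{remark1}, each term is real, and I can bound $\lambda_t$ by replacing each summand with its absolute value, so it suffices to bound $\left|\mathbf{b}^{\top}e^{\mathbf{A}(t-T_i)}\mathbf{e}\right|$ for $t-T_i\geq 0$.

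Next I would use the diagonalization $\mathbf{A}=\mathbf{S}\mathbf{\Lambda}\mathbf{S}^{-1}$ introduced in \eqref{BoldS}--\eqref{BoldLambda}, which is available because $\mathbf{A}$ is assumed diagonalizable with distinct eigenvalues. This yields $e^{\mathbf{A}\tau}=\mathbf{S}e^{\mathbf{\Lambda}\tau}\mathbf{S}^{-1}$, hence $\mathbf{b}^{\top}e^{\mathbf{A}\tau}\mathbf{e}=\left(\mathbf{b}^{\top}\mathbf{S}\right)e^{\mathbf{\Lambda}\tau}\left(\mathbf{S}^{-1}\mathbf{e}\right)$ with $\tau=t-T_i\geq 0$. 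Applying Cauchy--Schwarz, $\left|\mathbf{b}^{\top}e^{\mathbf{A}\tau}\mathbf{e}\right|\leq \left\|\mathbf{b}^{\top}\mathbf{S}\right\|_2\,\left\|e^{\mathbf{\Lambda}\tau}\mathbf{S}^{-1}\mathbf{e}\right\|_2\leq \left\|\mathbf{b}^{\top}\mathbf{S}\right\|_2\,\left\|e^{\mathbf{\Lambda}\tau}\right\|_2\,\left\|\mathbf{S}^{-1}\mathbf{e}\right\|_2$, where $\left\|e^{\mathbf{\Lambda}\tau}\right\|_2$ is the spectral (operator) norm of the diagonal matrix $e^{\mathbf{\Lambda}\tau}$. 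Since $\mathbf{\Lambda}=\mathsf{diag}(\tilde\lambda_1,\dots,\tilde\lambda_p)$, that operator norm equals $\max_j\left|e^{\tilde\lambda_j\tau}\right|=\max_j e^{\mathsf{Re}(\tilde\lambda_j)\tau}=e^{\lambda(\mathbf{A})\tau}$ for $\tau\geq 0$, using that $\lambda(\mathbf{A})$ is defined as the largest real part among the eigenvalues of $\mathbf{A}$ (and that stationarity forces these real parts to be negative, so the bound is also decreasing, though that is not needed for the inequality itself). Combining these estimates over all $T_i<t$ gives $\lambda_t\leq \mu+\sum_{T_i<t}\left\|\mathbf{b}^{\top}\mathbf{S}\right\|_2\left\|\mathbf{S}^{-1}\mathbf{e}\right\|_2 e^{\lambda(\mathbf{A})(t-T_i)}=\bar\lambda_t$, which is exactly \eqref{ubint}.

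The only subtlety I anticipate is bookkeeping about which norm is meant at each step: one must be careful that the inner factor $\left\|e^{\mathbf{\Lambda}\tau}\right\|_2$ is the induced matrix (spectral) norm rather than a Frobenius norm, since only the spectral norm collapses to the single exponential $e^{\lambda(\mathbf{A})\tau}$; with the Frobenius norm one would instead get $\bigl(\sum_j e^{2\mathsf{Re}(\tilde\lambda_j)\tau}\bigr)^{1/2}$, a weaker (larger) but still valid bound. A second minor point is justifying the passage from the stochastic integral $\int_{[0,t)}e^{\mathbf{A}(t-s)}\mathbf{e}\,\mbox{d}N_s$ to the pathwise finite sum, which is immediate because $N$ is a pure-jump counting process with jumps of size one at the times $T_i$. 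No genuine obstacle arises; the result is essentially a direct consequence of diagonalization plus Cauchy--Schwarz, so the proof is short.
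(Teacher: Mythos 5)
Your proof is correct and follows essentially the same route as the paper: rewrite the intensity as a sum of kernel evaluations at the arrival times, diagonalize $\mathbf{A}$ via $\mathbf{S}$, apply Cauchy--Schwarz, and bound the induced (spectral) norm of $e^{\mathbf{\Lambda}\tau}$ by $e^{\lambda(\mathbf{A})\tau}$. The only cosmetic difference is that the paper invokes the non-negativity of the kernel to write the sum of absolute values as an equality, whereas you use it merely to justify the upper bound, which is equally valid.
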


\begin{proof}
The intensity of a CARMA(p,q)-Hawkes 
\begin{equation}\label{eq1_proof1}
\lambda_t = \mu +\mathbf{b}^\top\sum_{T_{i}<t} e^{\mathbf{A}\left(t-T_i\right)}\mathbf{e},
\end{equation}
can be rewritten using  the non-negativity condition for the CARMA(p,q)-Hawkes kernel \cite[Proposition 2]{MERCURI20241} as follows 
\begin{equation}
\lambda_t  =  \mu+ \sum_{T_{i}<t}\left|\mathbf{b}^\top e^{\mathbf{A}\left(t-T_i\right)}\mathbf{e}\right|.
\end{equation}
Using the fact that the companion matrix $\mathbf{A}$ is diagonizable, we get
\begin{equation}
\lambda_t  =  \mu+ \sum_{T_{i}<t}\left|\mathbf{b}^\top \mathbf{S} e^{\mathbf{\Lambda}\left(t-T_i\right)}\mathbf{S}^{-1}\mathbf{e}\right|.
\end{equation}
Using the matrix norm $\left\|\cdot\right\|_{M,2}$ induced by the Euclidean norm\footnote{
Let $A \in \mathbb{C}^{p\times p}$ and $\left\| \cdot \right\|_2 $ be the standard Euclidean norm in $\mathbb{C}^{p}$. The induced matrix norm $\left\| \cdot \right\|_{M,2}$ is defined as 
\begin{equation*}
\left\| A \right\|_{M,2}:= \sup_{x\neq 0}\frac{\left\|Ax\right\|_2}{\left\|x\right\|_2} = \sup_{\left\|y\right\|_2=1} \left\|Ay\right\|_2.
\end{equation*}
The matrix norm $\left\|\cdot\right\|_{M,2}$ satisfies the following two properties used in this paper: i) $\left\|Ax\right\|_2\leq \left\| A \right\|_{M,2} \left\| x \right\|_{2}$ where $x \in \mathbb{C}^p$ and ii) $\left\| e^{\mathbf{\Lambda} t}\right\|_{M,2}\leq e^{\lambda\left(\mathbf{\Lambda}\right)t}$, $t \geq 0$, where $\mathbf{\Lambda} \in \mathbb{C}^{p\times p}$ is a complex diagonal matrix and $\lambda\left(\mathbf{\Lambda}\right)$ is the real part of the largest diagonal entry. That is, $\lambda\left(\mathbf{\Lambda}\right): =\max_{i = 1, \ldots, p} \mathsf{Re}\left(\tilde{\lambda}_i\right)$ with $\tilde{\lambda}_{i}$ denoting the $i$-th diagonal entry. Refer to \cite{Lewis3800260304, meyer2023matrix} and reference therein for further details.}, we observe that
\begin{eqnarray*}
\lambda_t  &\leq& \mu+ \sum_{T_{i}<t} \left\|\mathbf{b}^{\top}\mathbf{S}\right\|_2\left\|e^{\mathbf{\Lambda}\left(t-T_i\right)}\mathbf{S}^{-1}\mathbf{e}\right\|_{2}\nonumber\\
&\leq& \mu+ \sum_{T_{i}<t} \left\|\mathbf{b}^{\top}\mathbf{S}\right\|_2\left\|\mathbf{S}^{-1}\mathbf{e}\right\|_{2}\left\|e^{\mathbf{\Lambda}\left(t-T_i\right)}\right\|_{M,2} \nonumber \\
&\leq& \mu + \sum_{T_{i}<t} \left\|\mathbf{b}^{\top}\mathbf{S}\right\|_2\left\|\mathbf{S}^{-1}\mathbf{e}\right\|_2 e^{\lambda\left(\mathbf{A}\right)\left(t-T_i\right)}.
\end{eqnarray*}
\end{proof}

\begin{remark}
The upper bound $\bar{\lambda}_t$ in \eqref{ubint} can be equivalently reformulated as
\begin{equation}
\bar{\lambda}_t = \mu + \sum_{T_{i}<t} \sqrt{\sum_{j=1}^{p}b\left(\tilde{\lambda}_j\right)^2}\sqrt{\sum_{j=1}^{p}\frac{1}{a^\prime\left(\tilde{\lambda}_j\right)^2}} e^{\lambda\left(\mathbf{A}\right)\left(t-T_i\right)},
\end{equation}
where the polynomials $a^\prime\left(\tilde{\lambda}_j\right)$ and $b\left(\tilde{\lambda}_j\right)$ are respectively defined as
\begin{equation}
a^\prime\left(\tilde{\lambda}_j\right):=p\tilde{\lambda}_j^{p-1}+a_{1}\left(p-1\right)\tilde{\lambda}_j^{p-2}+\ldots+a_{p-1}
\end{equation}
and
\begin{equation}
b\left(\tilde{\lambda}_j\right):=b_0+b_1\tilde{\lambda}_j+\ldots+b_{p-1}\tilde{\lambda}_j^{p-1}.
\end{equation}
\end{remark}
Inspired by the thinning algorithm proposed in \cite{Lewis3800260304} for inhomogeneous Poisson processes and adapted for the Hawkes process with an exponential kernel by Ogata  \citep[see][for further details]{Ogata1981}, we outline in Algorithm \ref{alg1} a novel simulation algorithm that through the result in Theorem \ref{thm1} overcomes the need for numerical solutions in the simulation of a CARMA(p,q)-Hawkes model \citep[see][supplementary material]{MERCURI20241}. 
\begin{algorithm}[h]
\scriptsize
\caption{Thinning algorithm for simulating a CARMA(p,q)-Hawkes model on $\left[0, \mathbb{T}\right]$}
\label{alg1}
\begin{algorithmic}[1]
\Statex \textbf{Input:} Orders $p$ and $q$; parameters $\{a_1, a_2, \ldots, a_p,b_0,b_1, \ldots, b_{q-1}, \mu\}$ and final time $\mathbb{T}$.
\Statex \textbf{Output:} Set of jump times $\mathsf{T}$.
\State \textbf{Set initial conditions:}
\Statex \hspace*{\algorithmicindent} Set $T_0=0$
\Statex \hspace*{\algorithmicindent} Set $\lambda_{T_0}=\mu$
\Statex \hspace*{\algorithmicindent} Set $t=0$
\Statex \hspace*{\algorithmicindent} Initialize the set of jump times $\mathsf{T}$ as $\left\{\emptyset\right\}$
\State \textbf{Generate the first jump time:}
\Statex \hspace*{\algorithmicindent} Simulate a uniform random number $u_1 \sim \mathcal{U}_{\left[0,1\right]}$
\Statex \hspace*{\algorithmicindent} Compute $T_1 = -\frac{\ln\left(u_1\right)}{\mu}$
\Statex \hspace*{\algorithmicindent} If $T_1 > \mathbb{T}$, terminate the algorithm (no jumps on the interval $\left[0,\mathbb{T}\right]$)
\Statex \hspace*{\algorithmicindent} Otherwise, add $T_1$ to $\mathsf{T}$ and set $t=T_1$ 
\State \textbf{while $t \leq \mathbb{T}$ do:}
\Statex \hspace*{\algorithmicindent} \textbf{Compute the intensity upper bound:}
\Statex \hspace*{\algorithmicindent} \hspace*{\algorithmicindent} Compute $\bar{\lambda}= \bar{\lambda}_t + \left\|\mathbf{b}^{\top}\mathbf{S}\right\|_2\left\|\mathbf{S}^{-1}\mathbf{e}\right\|_2$
\Statex \hspace*{\algorithmicindent} \textbf{Generate subsequent $t$:}
\Statex \hspace*{\algorithmicindent} \hspace*{\algorithmicindent} Generate a random number $u$ from a continuous uniform distribution on the interval $\left[0,1\right]$
\Statex \hspace*{\algorithmicindent} \hspace*{\algorithmicindent} Compute $\Delta T = -\frac{\ln\left(u\right)}{\bar{\lambda}}$
\Statex \hspace*{\algorithmicindent} \hspace*{\algorithmicindent} Set $t = t + \Delta T$
\Statex \hspace*{\algorithmicindent} \hspace*{\algorithmicindent} If $t > \mathbb{T}$, terminate the simulation algorithm and output $\mathsf{T}$
\Statex \hspace*{\algorithmicindent} \textbf{Decide if a jump occurs at time $t$:}
\Statex \hspace*{\algorithmicindent} \hspace*{\algorithmicindent} Generate $D$ from a continuous uniform distribution in the interval $\left[0,1\right]$
\Statex \hspace*{\algorithmicindent} \hspace*{\algorithmicindent} If $D\bar{\lambda} \leq \lambda_t$, add $t$ to $\mathsf{T}$
\Statex \textbf{end}
\State \textbf{return output $\mathsf{T}$}
\end{algorithmic}
\end{algorithm}

\subsection{Simulation of the  compensator}\label{sim_compensator}

To simulate the compensator term in \eqref{eq:SolutionSunderQ}, we compute the integral $\int_{t_0}^T\lambda_{t}\mbox{d}t$ given a specific sequence of time arrivals $\left\{T_i\right\}_{i \geq 1}$. Let $T_{k}$ be the last jump time before the final time $T>t_0$, we split the integral of the compensator as follows: 
\begin{equation}
\int_{t_0}^{T}\lambda_{t}\mbox{d}t=\int_{t_0}^{T_k}\lambda_t\mbox{d}t + \int_{T_k}^{T}\lambda_{t}\mbox{d}t.
\label{eq:int_tot}
\end{equation}
We consider the second component in \eqref{eq:int_tot} and we get:

\begin{equation*}
\int_{T_k}^{T}\lambda_t\mbox{d}t=\int_{T_k}^{T}\left[\mu + \mathbf{b}^{\top}X_{t}\right]\mbox{d}t.
\end{equation*}
Given a generic initial value $X_{t_0}$, the state process $\left\{X_t\right\}_{t\geq t_0}$ has the following form:
\begin{equation}
X_{t}= e^{\mathbf{A}\left(t-t_0\right)}X_{t_0}+\int_{\left[t_0, t\right)}e^{\mathbf{A}\left(t-u\right)}\mathbf{e}\mbox{d}N_{u}.
\label{eq:SolXtgivenXt0}
\end{equation}
Substituting \eqref{eq:SolXtgivenXt0} into the second integral in \eqref{eq:int_tot}, we get:
\begin{eqnarray}
\int_{T_k}^{T}\lambda_t\mbox{d}t&=& \int_{T_k}^{T}\left[\mu+\mathbf{b}^{\top}\left(e^{\mathbf{A}\left(t-t_0\right)}X_{t_0}+\int_{\left[t_0,t\right)}e^{\mathbf{A}\left(t-u\right)}\mathbf{e}\mbox{d}N_u\right)\right]\mbox{d}t\nonumber\\
&=&\mu\left(T-T_k\right)+\mathbf{b}^{\top}\left[\mathbf{A}^{-1}\left(e^{\mathbf{A}\left(T-t_0\right)}-e^{\mathbf{A}\left(T_k-T_0\right)}\right)\right]X_{t_0}+ \mathbf{b}^{\top}\left[S\left(k\right)-e^{\mathbf{A}\left(T_k-T_{k_0}\right)}S\left(k_0\right)\right]\mathbf{A}^{-1}\left(e^{\mathbf{A}\left(T-T_k\right)}-\mathbf{I}\right)\mathbf{e}\nonumber\\
\label{eq2}
\end{eqnarray}
where $k$ and $k_0$ are the numbers of jumps in the interval $\left[0,T\right]$ and $\left[0,t_0\right]$ respectively. The term $S\left(k\right)$ is defined as
\begin{equation}\label{eq:S_k}
S\left(k\right):= \sum_{i=1}^{k}e^{\mathbf{A}\left(T_k-T_i\right)}
\end{equation}
and \cite{MERCURI20241} show that $S\left(k\right)$ satisfies the following recursive equation:
\begin{equation*}
S\left(j\right) = e^{\mathbf{A}\left(T_{j}-T_{j-1}\right)}S\left(j-1\right)+\mathbf{I}
\end{equation*}
with the initial condition $S\left(1\right)=\mathbf{I}$ and $\mathbf{I}$ is a $p\times p$ identity matrix .

\noindent The first integral in the rhs of \eqref{eq:int_tot} reads:
\begin{equation}
\int_{t_0}^{T_k}\lambda_t\mbox{d}t=\int_{t_0}^{T_k}\left[\mu+\mathbf{b}^{\top}\left(e^{\mathbf{A}\left(t-t_0\right)}X_{t_0}+\int_{t_0}^t e^{\mathbf{A}\left(t-u\right)}\mathbf{e}\mbox{d}N_u\right)\right]\mbox{d}t+\mathbf{b}^{\top}\int_{t_0}^{T_k}\left[\int_{t_0}^t e^{\mathbf{A}\left(t-u\right)}\mathbf{e}\mbox{d}N_u\right]\mbox{d}t,
\label{eq:firstInt}
\end{equation}
and the component $\mathbf{b}^{\top}\int_{t_0}^{T_k}\left[\int_{t_0}^t e^{\mathbf{A}\left(t-u\right)}\mathbf{e}\mbox{d}N_u\right]\mbox{d}t$ can be written as
\begin{equation*}
\mathbf{b}^{\top}\int_{t_0}^{T_k}\left[\int_{t_0}^t e^{\mathbf{A}\left(t-u\right)}\mathbf{e}\mbox{d}N_u\right]\mbox{d}t=\mathbf{b}^{\top}\int_{t_0}^{T_k}\left[\int_{t_0}^{+\infty}\mathbbm{1}_{\left\{u<t\leq T_k\right\}}e^{\mathbf{A}\left(t-u\right)}\mathbf{e}\mbox{d}N_u\right]\mbox{d}t.
\end{equation*}
Inverting the order of integration, we obtain:
\begin{equation}
\mathbf{b}^{\top}\int_{t_0}^{T_k}\left[\int_{t_0}^{+\infty}\mathbbm{1}_{\left\{u<t\leq T_k\right\}}e^{\mathbf{A}\left(t-u\right)}\mathbf{e}\mbox{d}N_u\right]\mbox{d}t=\mathbf{b}^{\top} \mathbf{A}^{-1}\left[S\left(k\right)-e^{\mathbf{A}\left(T_k-T_{k_0}\right)}S\left(k_0\right)\right]\mathbf{e}-\mathbf{b}^{\top} \mathbf{A}^{-1}\mathbf{e}\left(k-k_0\right).
\label{eq:nn}
\end{equation}
Substituting \eqref{eq:nn} in \eqref{eq:firstInt}, we have:
\begin{eqnarray}
\int_{t_0}^{T_k}\lambda_{t}\mbox{d}t&=&\mu\left(T_k-t_0\right)+\mathbf{b}^{\top}\mathbf{A}^{-1}\left[e^{\mathbf{A}\left(T_k-t_0\right)}-\mathbf{I}\right]X_{t_0}\nonumber\\
&+&\mathbf{b}^{\top} \mathbf{A}^{-1}\left[S\left(k\right)-e^{\mathbf{A}\left(T_k-T_{k_0}\right)}S\left(k_0\right)\right]\mathbf{e}-\mathbf{b}^{\top} \mathbf{A}^{-1}\mathbf{e}\left(k-k_0\right).\nonumber\\
\label{eq1}
\end{eqnarray}
Using the result in \eqref{eq1} and \eqref{eq2}, the quantity in \eqref{eq:int_tot} becomes:
\begin{equation}
\int_{t_0}^{T}\lambda_t\mbox{d}t = \mu \left(T-t_0\right)+\mathbf{b}^{\top}A^{-1}\left(e^{\mathbf{A}\left(T-t_0\right)}-\mathbf{I}\right)X_{t_0}-\mathbf{b}^{\top}A^{-1}\mathbf{e}\left(k-k_0\right)+\mathbf{b}^{\top}A^{-1}\left[S\left(k\right)e^{\mathbf{A}\left(T-T_k\right)}-e^{\mathbf{A}\left(T_k-T_{k_0}\right)}S\left(k_0\right)\right]\mathbf{e}.
\end{equation}
\begin{remark}
Choosing $t_0=0$ and $X_{t_0}=\mathbf{0}$, the integrated intensity simplifies as follows:
\begin{equation}
\int_{t_0}^{T}\lambda_t\mbox{d}t=\mu T-\mathbf{b}^{\top}A^{-1}\mathbf{e}k+\mathbf{b}^{\top}A^{-1}S\left(k\right)e^{\mathbf{A}\left(T-T_k\right)}\mathbf{e}.
\end{equation}
\end{remark}

\section{Numerical results}\label{num_analysis}

In this section we present several numerical examples in order to verify the reliability, accuracy, and efficiency of the methodology developed in Section~\ref{GaussLaguerre}. The analysis is conducted by pricing European call options for three different models\footnote[5]{For the sake of clarity, and throughout the paper, we abbreviate the compound version of these models as Hawkes, CARMA(2,1)-Hawkes and CARMA(3,1)-Hawkes.}, namely Hawkes, CARMA(2,1)-Hawkes, and CARMA(3,1)-Hawkes, in which the jump size is normally distributed with mean $\mu_J$ and variance $\sigma^2_J$; see Table~\ref{tab: numerical_params} for the parameter setting. 

\begin{table}[h]
	\centering
	\caption{Parameter setting for Hawkes, CARMA(2,1)-Hawkes, and  CARMA(3,1)-Hawkes.}
	\label{tab: numerical_params}
	\begin{tabular}{cccc}
		\toprule
		& Hawkes & CARMA(2,1)-Hawkes & CARMA(3,1)-Hawkes  \\ \midrule
		$\mu$      & 3.00   & 3.00                & 3.00                    \\
		$b_0$      & 1.00   & 1.00                & 0.20                  \\
		$b_1$      & -      & 0.30                & 0.30                  \\
		$a_1$      & 3.00   & 3.00                & 1.30                  \\
		$a_2$      & -      & 2.00                & $0.34 + \pi^2/4$     \\
		$a_3$      & -      & -                   & $0.025 +0.025 \pi^2$ \\
		$\mu_J$    & 0.00   & 0.00                & 0.00                    \\
		$\sigma_J$ & 0.45   & 0.45                & 0.45                 \\
		$\sigma$ & 0.20   & 0.20                & 0.20                  \\
		$S_0$      & 100    & 100                 & 100                  \\
		$r$        & 0.05   & 0.05                & 0.05                 \\ \bottomrule
	\end{tabular}
\end{table}

The numerical experiments are run in R software (version 4.2.2) with an Intel Core i7-8565U CPU @ 1.80GHz and a 16GB RAM memory.

\subsection{Prices of European call options}

The European call option prices are computed by means of Proposition~\ref{theo_gl} (obtained through put-call parity) using the characteristic function of the log-price under $\mathbb{Q}$ measure proposed in \eqref{cf_general} and its time-dependent coefficients $u_{0, T}\left(t_0\right)$ and $u_{2, T}\left(t_0\right)$, satisfying the ODE's system, are solved numerically through the Euler method with a number of discretization points equal to $\bar{n} = 2\cdot 10^3$. As concerns the formula in Proposition~\ref{theo_gl}, we fix the order $m$ of Laguerre polynomials to $m = 4.5 \cdot 10^2$. We consider six levels of strike prices, $K = \{70, 80, 90, 100, 110, 120\}$, and four maturities $T = \{0.25, 0.5, 1, 3.5\}$ in years.

The obtained outcomes are then compared with the pricing results of Monte Carlo simulation, based on the exact solution \eqref{eq:SolutionSunderQ}, using $M = 10^6$ simulations and control variates technique for reducing the variance obtained from the simulation of the underlying asset price.\footnote{We use $\mathbb{E}\left[S_T|\mathcal{F}_{t_0}\right]$ as a control variable. This quantity can be determined directly from the characteristic function in \eqref{cf_general}.}
For each maturity, Figure~\ref{fig: jump_hists} shows a snapshot of the relative frequency distribution of the number of simulated jumps occurred at time $T$ with the specific aim of testing the prices obtained by our approach in different scenarios. 

Table~\ref{tab: call_prices} reports the prices of European call options computed through Proposition~\ref{theo_gl} ($C_{theo}$) and Monte Carlo simulation ($C_{MC}$) with its $95\%$ confidence interval (denoted by lower bound $LB$ and upper bound $UB$) for Hawkes, CARMA(2,1)-Hawkes, and CARMA(3,1)-Hawkes models. We observe that all the obtained pricing results are close to those of Monte Carlo Simulation and fall into the $95\%$ confidence interval, validating thus the general model and the methodology. In Figure~\ref{fig: iv} we display the Black-Scholes implied volatilities (herafter also IV) for the three models under investigation and for each maturity in a range of strike prices $K \in [70, 120]$ with a step size of two. The Hawkes, CARMA(2,1)-Hawkes, and CARMA(3,1)-Hawkes models exhibit the same behaviour, but the Hawkes model shows higher implied volatilities $T = \{0.25, 0.5, 1\}$ with respect to the other two models with the parameter setting of Table~\ref{tab: numerical_params}.

\begin{sidewaysfigure}
	\centering
	\caption{Relative frequency distribution of $N_T$ for $T = \{0.25, 0.5, 1, 3.5\}$. Parameters are summarised in Table~\ref{tab: numerical_params}.}
	\begin{subfigure}[b]{0.495\textwidth}
		\centering
		\includegraphics[width=\textwidth]{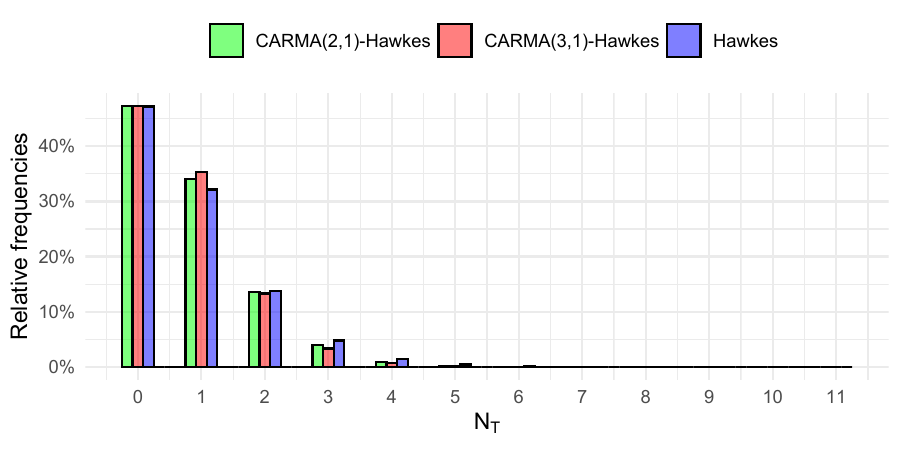} 
		\caption{Time to maturity $T = 0.25$} 
		\label{fig7:a} 
	\end{subfigure}
	\vspace{4ex}
	\begin{subfigure}[b]{0.495\textwidth}
		\centering
		\includegraphics[width=\textwidth]{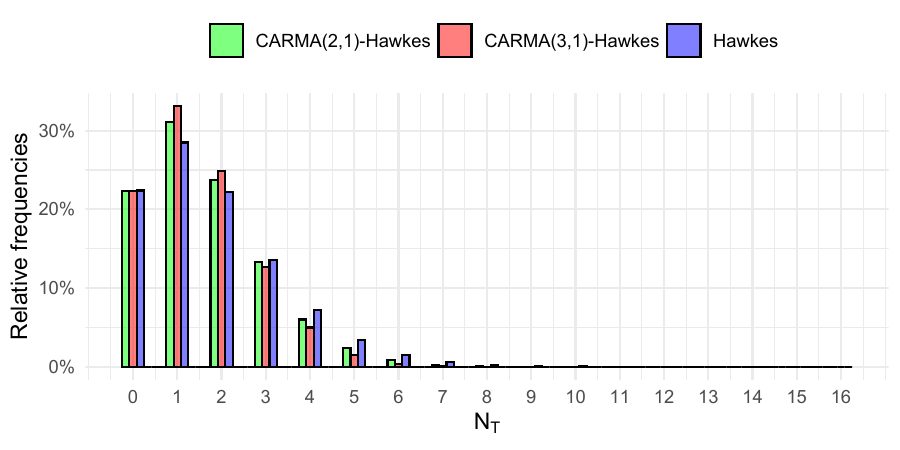} 
		\caption{Time to maturity $T = 0.5$} 
		\label{fig7:b} 
	\end{subfigure} 
	\vspace{4ex}
	\begin{subfigure}[b]{0.495\textwidth}
		\centering
		\includegraphics[width=\textwidth]{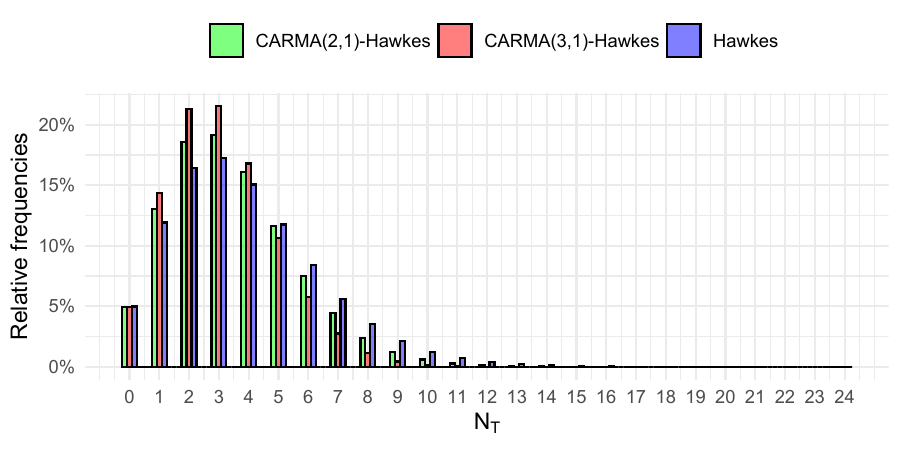} 
		\caption{Time to maturity $T = 1$} 
	\end{subfigure}
	\vspace{4ex}
	\begin{subfigure}[b]{0.495\textwidth}
		\centering
		\includegraphics[width=\textwidth]{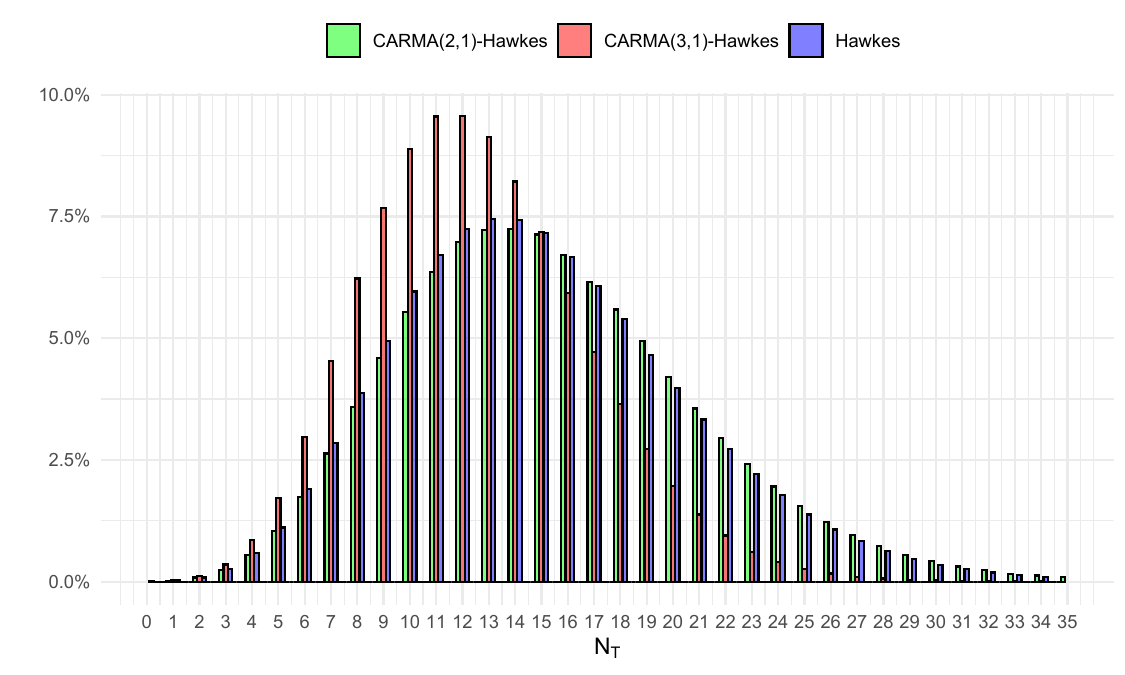} 
		\caption{Time to maturity $T = 3.5$} 
		\label{fig7: c} 
	\end{subfigure} 
	\label{fig: jump_hists} 
\end{sidewaysfigure}

\begin{sidewaystable}
	\caption{Prices of European call options for $T = \{0.25, 0.5, 1, 3.5\}$. Parameters are summarised in Table~\ref{tab: numerical_params}.}
	\begin{subtable}[t]{0.5\linewidth}
		\centering
		\begin{tabular}{ccccccc}
			\toprule
			Strikes    & 70      & 80      & 90      & 100     & 110     & 120     \\ \midrule
			\multicolumn{7}{c}{\begin{tabular}[c]{@{}c@{}}Hawkes\end{tabular}} \\ \hline
			$C_{theo}$ & 33.9953 & 26.0604 & 19.4060 & 14.9706 & 12.3186 & 10.4774 \\
			$C_{MC}$   & 33.9908 & 26.0584 & 19.4054 & 14.9722 & 12.3213 & 10.4800 \\
			LB         & 33.9747 & 26.0372 & 19.3792 & 14.9425 & 12.2895 & 10.4468 \\
			UB         & 34.0069 & 26.0797 & 19.4315 & 15.0019 & 12.3531 & 10.5132 \\ \hline
			\multicolumn{7}{c}{\begin{tabular}[c]{@{}c@{}}CARMA(2,1)-Hawkes\end{tabular}} \\ \hline
			$C_{theo}$ & 33.8009 & 25.8182 & 19.1223 & 14.6535 & 11.9762 & 10.1174 \\
			$C_{MC}$   & 33.8051 & 25.8207 & 19.1244 & 14.6561 & 11.9771 & 10.1154 \\
			LB         & 33.7897 & 25.8003 & 19.0992 & 14.6275 & 11.9466 & 10.0837 \\
			UB         & 33.8205 & 25.8411 & 19.1496 & 14.6847 & 12.0076 & 10.1471 \\ \hline
			\multicolumn{7}{c}{\begin{tabular}[c]{@{}c@{}}CARMA(3,1)-Hawkes\end{tabular}}
			\\ \hline
			$C_{theo}$ & 33.6955 & 25.6859 & 18.9667 & 14.4790 & 11.7875 & 9.9190  \\
			$C_{MC}$   & 33.6928 & 25.6829 & 18.9607 & 14.4727 & 11.7813 & 9.9132  \\
			LB         & 33.6779 & 25.6630 & 18.9361 & 14.4448 & 11.7515 & 9.8823   \\
			UB         & 33.7077 & 25.7028 & 18.9853 & 14.5007 & 11.811  & 9.9441  \\ \bottomrule
		\end{tabular}
		\caption{Time to maturity $T = 0.25$}
	\end{subtable}
	\hspace{\fill}
	\begin{subtable}[t]{0.5\linewidth}
		\centering
		\begin{tabular}{ccccccc}
			\toprule
			Strikes    & 70      & 80      & 90      & 100     & 110     & 120     \\ \midrule
			\multicolumn{7}{c}{\begin{tabular}[c]{@{}c@{}}Hawkes\end{tabular}} \\ \hline
			$C_{theo}$ & 38.7475 & 32.6015 & 27.6842 & 23.9358 & 21.0241 & 18.6542 \\
			$C_{MC}$   & 38.7529 & 32.6075 & 27.6912 & 23.9457 & 21.0369 & 18.6678 \\
			LB         & 38.7281 & 32.5768 & 27.6551 & 23.9050 & 20.9921 & 18.6195 \\
			UB         & 38.7777 & 32.6382 & 27.7273 & 23.9865 & 21.0816 & 18.7161 \\ \hline
			\multicolumn{7}{c}{\begin{tabular}[c]{@{}c@{}}CARMA(2,1)-Hawkes\end{tabular}} \\ \hline
			$C_{theo}$ & 38.2058 & 31.9557 & 26.9502 & 23.1308 & 20.1645 & 17.7547 \\
			$C_{MC}$   & 38.2066 & 31.9567 & 26.9493 & 23.1281 & 20.1633 & 17.7548 \\
			LB         & 38.1837 & 31.9282 & 26.9159 & 23.0907 & 20.1226 & 17.7112 \\
			UB         & 38.2295 & 31.9851 & 26.9827 & 23.1655 & 20.2040 & 17.7983 \\ \hline
			\multicolumn{7}{c}{\begin{tabular}[c]{@{}c@{}}CARMA(3,1)-Hawkes\end{tabular}}
			\\ \hline
			$C_{theo}$ & 37.8390 & 31.5147 & 26.4465 & 22.5768 & 19.5722 & 17.1349  \\
			$C_{MC}$   & 37.8456 & 31.5260 & 26.4611 & 22.5932 & 19.5911 & 17.1550 \\
			LB         & 37.8236 & 31.4986 & 26.4288 & 22.5570 & 19.5516 & 17.1129   \\
			UB         & 37.8675 & 31.5535 & 26.4934 & 22.6295 & 19.6305 & 17.1972  \\ \bottomrule
		\end{tabular}
		\caption{Time to maturity $T = 0.5$}
	\end{subtable}
	\hspace{\fill}
	\begin{subtable}[t]{0.5\linewidth}
		\centering
		\begin{tabular}{ccccccc}
			\toprule
			Strikes    & 70      & 80      & 90      & 100     & 110     & 120     \\ \midrule
			\multicolumn{7}{c}{\begin{tabular}[c]{@{}c@{}}Hawkes\end{tabular}} \\ \hline
			$C_{theo}$ & 47.7542 & 43.3566 & 39.6128 & 36.4029 & 33.6213 & 31.1875 \\
			$C_{MC}$   & 47.7468 & 43.3468 & 39.6027 & 36.3924 & 33.6091 & 31.1739 \\
			LB         & 47.7123 & 43.3061 & 39.5562 & 36.3406 & 33.5523 & 31.1125 \\
			UB         & 47.7813 & 43.3875 & 39.6491 & 36.4442 & 33.6659 & 31.2354 \\ \hline
			\multicolumn{7}{c}{\begin{tabular}[c]{@{}c@{}}CARMA(2,1)-Hawkes\end{tabular}} \\ \hline
			$C_{theo}$ & 46.7325 & 42.1923 & 38.3279 & 35.0179 & 32.1546 & 29.6551 \\
			$C_{MC}$   & 46.7199 & 42.1759 & 38.3058 & 34.9908 & 32.1241 & 29.6225 \\
			LB         & 46.6874 & 42.1376 & 38.2620 & 34.9420 & 32.0707 & 29.5648 \\
			UB         & 46.7523 & 42.2142 & 38.3496 & 35.0396 & 32.1775 & 29.6801 \\ \hline
			\multicolumn{7}{c}{\begin{tabular}[c]{@{}c@{}}CARMA(3,1)-Hawkes\end{tabular}}
			\\ \hline
			$C_{theo}$ & 45.6790 & 40.9844 & 36.9902 & 33.5736 & 30.6247 & 28.0579  \\
			$C_{MC}$   & 45.6878 & 40.9948 & 37.0013 & 33.5842 & 30.6321 & 28.0628 \\
			LB         & 45.6570 & 40.9582 & 36.9593 & 33.5372 & 30.5806 & 28.0072  \\
			UB         & 45.7186 & 41.0314 & 37.0433 & 33.6311 & 30.6835 & 28.1184  \\ \bottomrule
		\end{tabular}
		\caption{Time to maturity $T = 1$}
	\end{subtable}
	\hspace{\fill}
	\begin{subtable}[t]{0.5\linewidth}
		\centering
		\begin{tabular}{ccccccc}
			\toprule
			Strikes    & 70      & 80      & 90      & 100     & 110     & 120     \\ \midrule
			\multicolumn{7}{c}{\begin{tabular}[c]{@{}c@{}}Hawkes\end{tabular}} \\ \hline
			$C_{theo}$ & 72.8056 & 70.7513 & 68.8858 & 67.1788 & 65.6069 & 64.1513 \\
			$C_{MC}$   & 72.8192 & 70.7671 & 68.9032 & 67.1973 & 65.6265 & 64.1723 \\
			LB         & 72.7754 & 70.7176 & 68.8483 & 67.1372 & 65.5615 & 64.1024 \\
			UB         & 72.8631 & 70.8166 & 68.9581 & 67.2574 & 65.6916 & 64.2422 \\ \hline
			\multicolumn{7}{c}{\begin{tabular}[c]{@{}c@{}}CARMA(2,1)-Hawkes\end{tabular}} \\ \hline
			$C_{theo}$ & 73.1905 & 71.1693 & 69.3331 & 67.6524 & 66.1040 & 64.6698 \\
			$C_{MC}$   & 73.1747 & 71.1513 & 69.3120 & 67.6277 & 66.0761 & 64.6400 \\
			LB         & 73.1305 & 71.1014 & 69.2566 & 67.5671 & 66.0104 & 64.5694 \\
			UB         & 73.2190 & 71.2013 & 69.3674 & 67.6883 & 66.1418 & 64.7106 \\ \hline
			\multicolumn{7}{c}{\begin{tabular}[c]{@{}c@{}}CARMA(3,1)-Hawkes\end{tabular}}
			\\ \hline
			$C_{theo}$ & 69.8727 & 67.5474 & 65.4388 & 63.5128 & 61.7427 & 60.1071   \\
			$C_{MC}$   & 69.8817 & 67.5581 & 65.4497 & 63.5228 & 61.7521 & 60.1170 \\
			LB         & 69.8381 & 67.5086 & 65.3946 & 63.4622 & 61.6863 & 60.0462  \\
			UB         & 69.9253 & 67.6075 & 65.5048 & 63.5833 & 61.8179 & 60.1879  \\ \bottomrule
		\end{tabular}
		\caption{Time to maturity $T = 3.5$}
	\end{subtable}	
	\label{tab: call_prices}
\end{sidewaystable}

\begin{sidewaysfigure}
	\centering
	\caption{Black-Scholes implied volatilities for $T = \{0.25, 0.5, 1, 3.5\}$. Parameters are summarised in Table~\ref{tab: numerical_params}.}
	\begin{subfigure}[b]{0.495\textwidth}
		\centering
		\includegraphics[width=\textwidth]{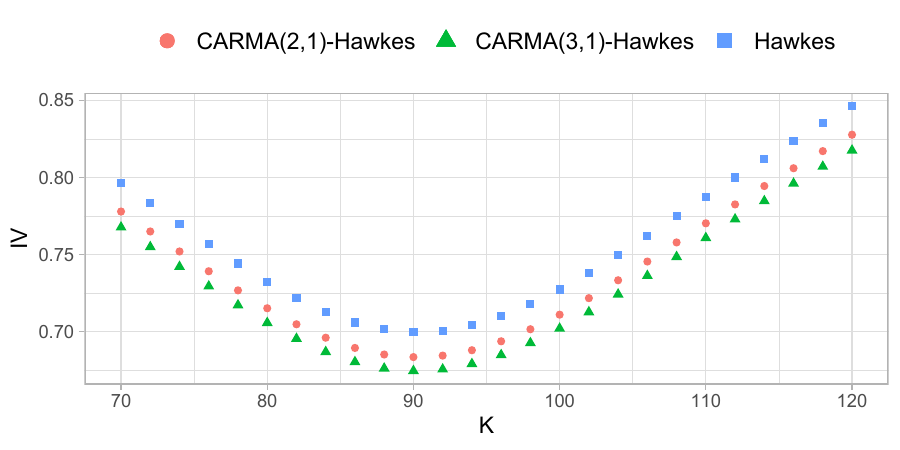} 
		\caption{Time to maturity $T = 0.25$} 
		\label{fig_iv: iv_025} 
	\end{subfigure}
	\vspace{4ex}
	\begin{subfigure}[b]{0.495\textwidth}
		\centering
		\includegraphics[width=\textwidth]{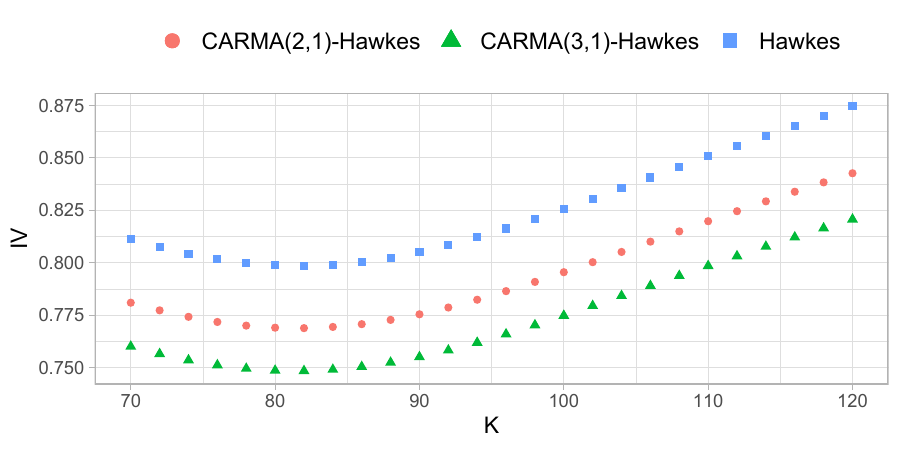} 
		\caption{Time to maturity $T = 0.5$} 
		\label{fig_iv: iv_05} 
	\end{subfigure} 
	\vspace{4ex}
	\begin{subfigure}[b]{0.495\textwidth}
		\centering
		\includegraphics[width=\textwidth]{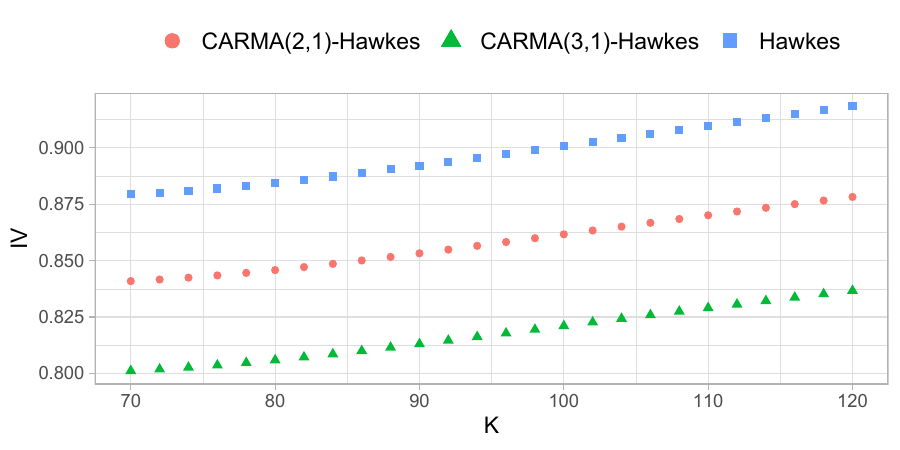} 
		\caption{Time to maturity $T = 1$} 
		\label{fig_iv: iv_1}
	\end{subfigure}
	\vspace{4ex}
	\begin{subfigure}[b]{0.495\textwidth}
		\centering
		\includegraphics[width=\textwidth]{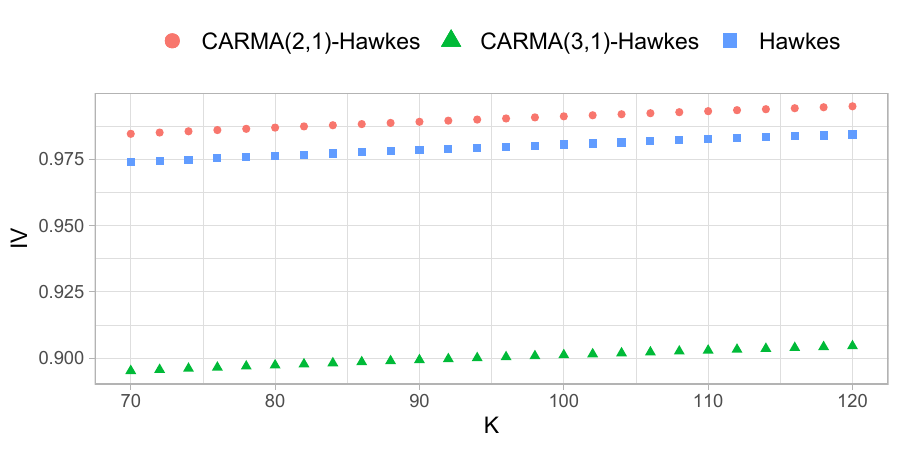} 
		\caption{Time to maturity $T = 3.5$} 
		\label{fig_iv: iv_35} 
	\end{subfigure} 
	\label{fig: iv} 
\end{sidewaysfigure}

\subsection{Sensitivity analysis}

The section presents a sensitivity analysis for the common parameters of the Hawkes, CARMA(2,1)-Hawkes, and CARMA(3,1)-Hawkes models that influence the European call option price. The comparison is reported in terms of Black-Scholes implied volatility for a maturity $T = 1$ and with parameters as stated in Table~\ref{tab: numerical_params} unless otherwise indicated. In examining the baseline intensity parameter $\mu$, the jump size parameters $(\mu_J, \sigma_J)$, and the volatility $\sigma$, the investigation is conducted as function of these parameters for at-the-money European call options (see Figure~\ref{fig: iv_sens}). In contrast, the sensitivity analysis of the implied volatility for the autoregressive and moving average parameters is undertaken as a function of the strike price (i.e., $K \in \left[40, 180\right]$ with step size equal to $10$) for varying levels of the aforementioned parameters (see Figure~\ref{fig: carma_sens}).

Figure~\ref{fig_iv_sens: mu} illustrates the behaviour of the implied volatility for the Hawkes, CARMA(2,1)-Hawkes, and CARMA(3,1)-Hawkes models when all parameters are held constant except for the baseline intensity parameter $\mu$, which varies within the interval $\left[0.3, 5.1\right]$. In accordance with expectations, the implied volatility increases with the parameter $\mu$, as this provides the base level for the conditional intensity at a given time $t$ (see \eqref{Def:Lambda}). It can be observed that for low values of $\mu$, the implied volatilities of the three models provide similar values. However, as the parameter $\mu$ increases, they tend to diverge. 

In Figure~\ref{fig_iv_sens: mu_j} we plot the values of IV for the Hawkes, CARMA(2,1)-Hawkes, and CARMA(3,1)-Hawkes models against the jump mean parameter $\mu_J$ within the specified range of $-0.5$ to $0.5$, observing a non-monotonic behaviour. For $\mu_J \leq 0$, the three models exhibit a small distance between themselves; however, as $\mu_J$ increases beyond zero, such distance tends to increase. 

Figure~\ref{fig_iv_sens: sigma_j} exhibits the influence of the jump standard deviation parameter $\sigma_J$ within the interval $\left[0, 1\right]$ on the implied volatility. In the case of $\sigma_J = 0$ (no contribution from the jump component of the jump-diffusion model), the implied volatility naturally is identical across the three models and corresponds to the volatility itself, i.e., $\sigma = 0.2$. It can be observed that for low values of $\sigma_J$, the Hawkes, CARMA(2,1)-Hawkes, and CARMA(3,1)-Hawkes models show a minimal discrepancy between themselves. Conversely, for high values of $\sigma_J$, the discrepancy tends to increase.

In Figure~\ref{fig_iv_sens: sigma_d}, for sake of completeness, we analyse the behaviour of implied volatility for the three models as a function of the parameter $\sigma$. As this parameter increases, naturally the implied volatility likewise rises.  However, it can be observed that as the value of $\sigma$ increases, the discrepancy between the three models in terms of IV diminishes, which could be attributed to a reduction in the contribution of the jump component to the overall jump-diffusion process.

The final set of parameters under investigation are the common autoregressive and moving average parameters shared by the Hawkes, CARMA(2,1)-Hawkes, and CARMA(3,1)-Hawkes models, i.e., $a_1$ and $b_0$. The objective is to determine the extent to which these factors influence the implied volatility. As illustrated in Figure~\ref{fig: carma_sens}, an increase in the parameter $a_1$ for all three models results in a downward shift in the volatility curves. The magnitude of this downward shift is most pronounced in the case of the Hawkes (Figure~\ref{fig: carma_sens_hawkes}) model, whereas it is less for the CARMA(2,1)-Hawkes model (Figure~\ref{fig: carma_carma_21}) and negligible for the CARMA(3,2)-Hawkes model (Figure~\ref{fig: carma_carma_32}). The opposite is observed with regard to the moving average parameter $b_0$ (see  Figure~\ref{fig: carma_sens}). Indeed, an increase in  $b_0$ is accompanied by an upward movement in the volatility curves.


\begin{sidewaysfigure}
	\centering
	\caption{Sensitivity analysis of the Black-Scholes implied volatility, evaluated using European call options that are at-the-money, as a function of the baseline intensity parameter $\mu$, the jump size parameters $(\mu_J, \sigma_J)$, and the volatility $\sigma$. The maturity is $T =1$ and the other parameters are summarised in Table~\ref{tab: numerical_params}.}
	\begin{subfigure}[b]{0.495\textwidth}
		\centering
		\includegraphics[width=\textwidth]{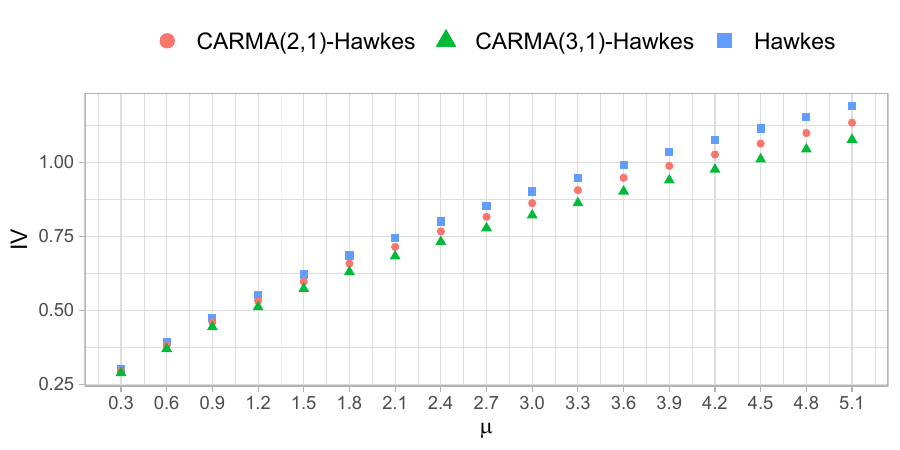} 
		\caption{baseline intensity parameter $\mu$} 
		\label{fig_iv_sens: mu} 
	\end{subfigure}
	\vspace{4ex}
	\begin{subfigure}[b]{0.495\textwidth}
		\centering
		\includegraphics[width=\textwidth]{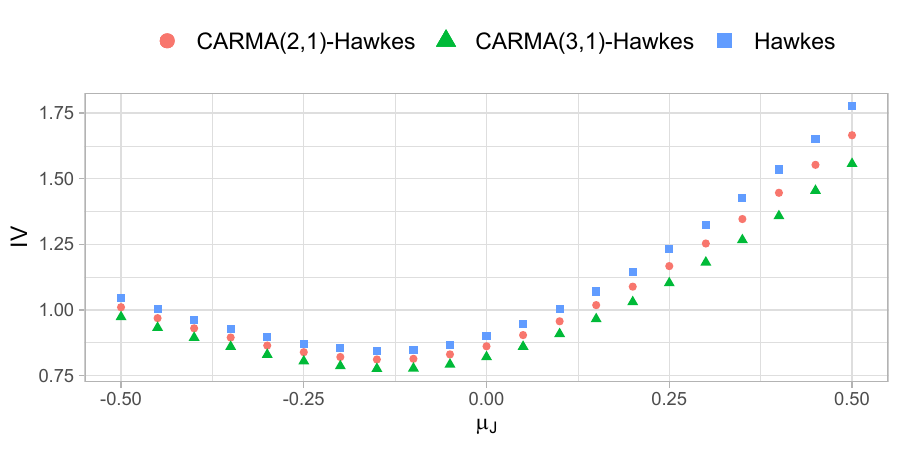} 
		\caption{jump size parameter $\mu_J$} 
		\label{fig_iv_sens: mu_j} 
	\end{subfigure} 
	\vspace{4ex}
	\begin{subfigure}[b]{0.495\textwidth}
		\centering
		\includegraphics[width=\textwidth]{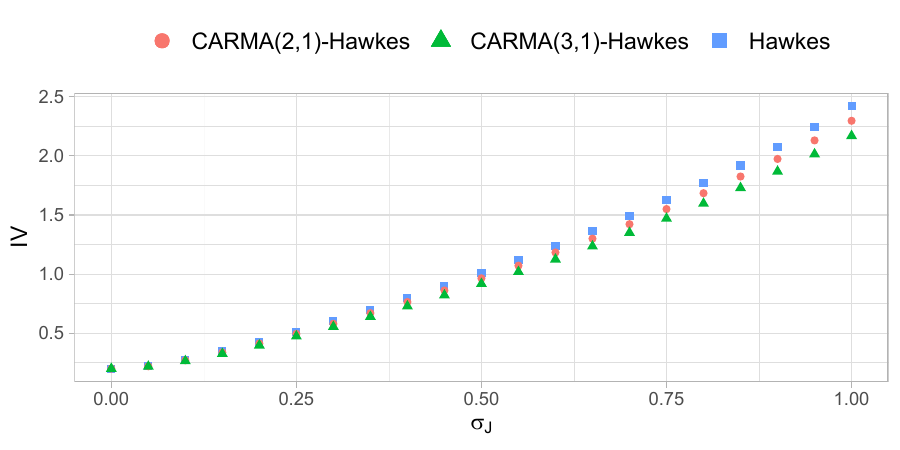} 
		\caption{jump size parameter $\sigma_J$} 
		\label{fig_iv_sens: sigma_j}
	\end{subfigure}
	\vspace{4ex}
	\begin{subfigure}[b]{0.495\textwidth}
		\centering
		\includegraphics[width=\textwidth]{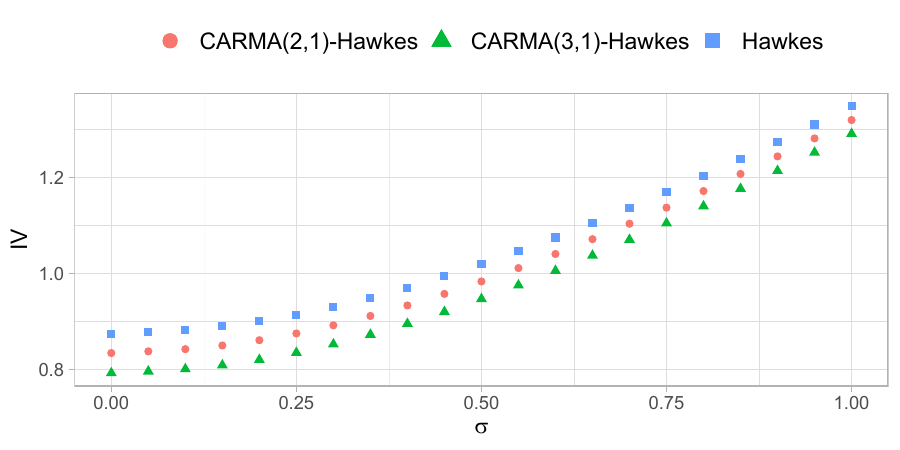} 
		\caption{volatility parameter $\sigma$} 
		\label{fig_iv_sens: sigma_d} 
	\end{subfigure} 
	\label{fig: iv_sens} 
\end{sidewaysfigure}


\begin{figure}
	\centering
	\caption{Sensitivity analysis of the Black-Scholes implied volatility evaluated from European call options as a function of the strike price $K$ for varying levels of the autoregressive and moving average parameters that are in common. The maturity is $T =1$ and the other parameters are summarised in Table~\ref{tab: numerical_params}.}
	\begin{subfigure}{0.81\textwidth}
		\includegraphics[width=\textwidth]{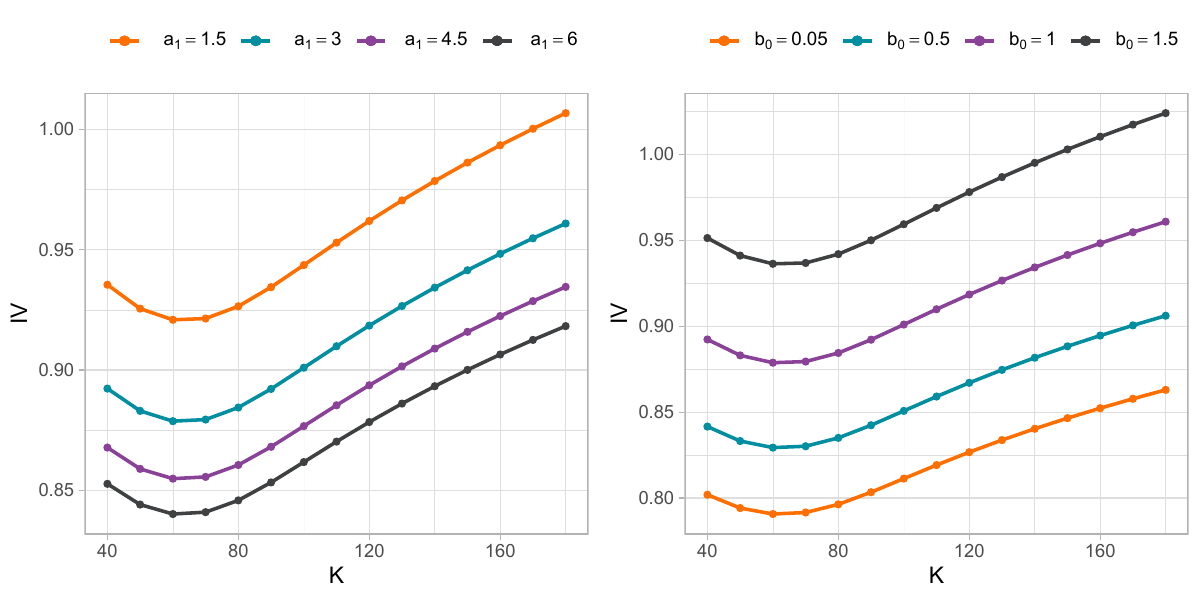}
		\caption{Hawkes model}
		\label{fig: carma_sens_hawkes}
	\end{subfigure}\hfill
	\begin{subfigure}{0.81\textwidth}
		\includegraphics[width=\textwidth]{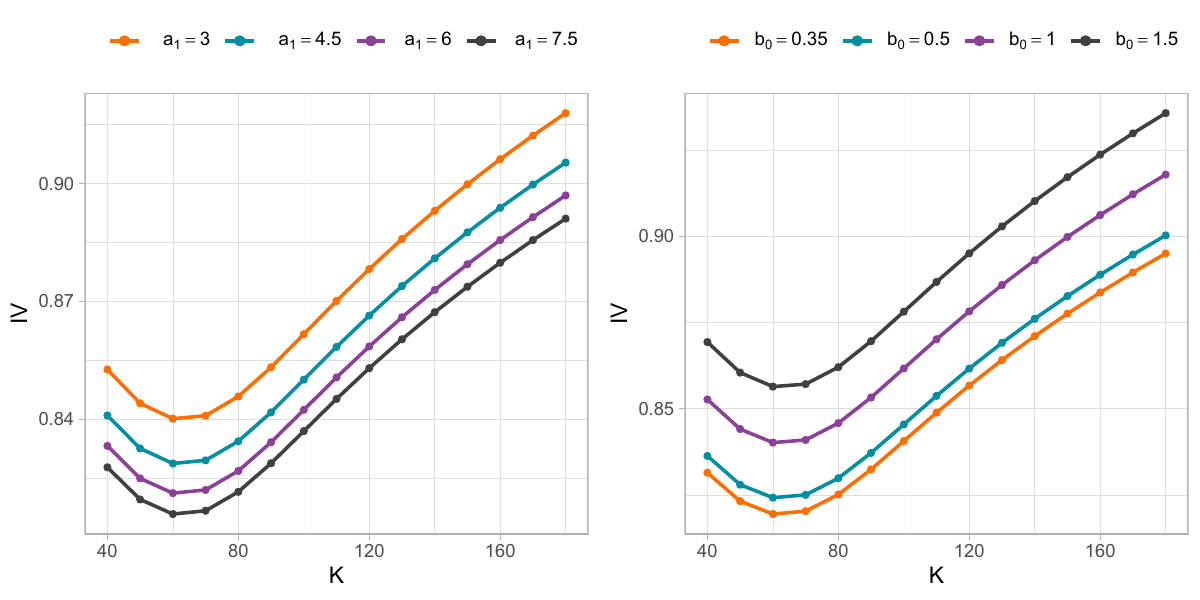}
		\caption{CARMA(2,1)-Hawkes model}
		\label{fig: carma_carma_21}
	\end{subfigure}\hfill
	\begin{subfigure}{0.81\textwidth}
		\includegraphics[width=\textwidth]{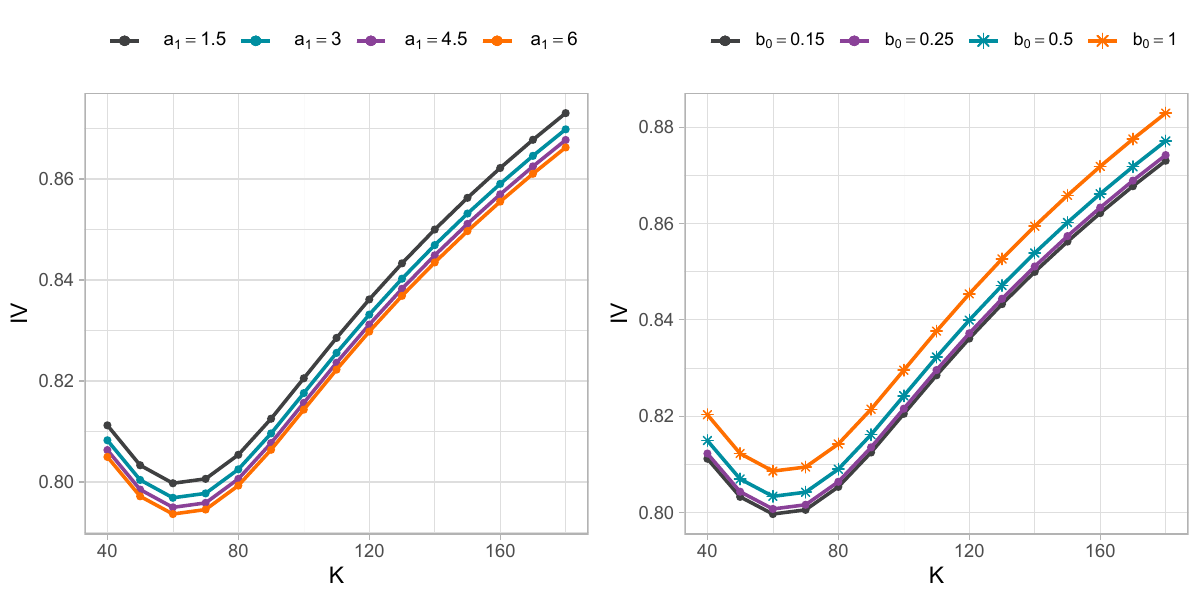}
		\caption{CARMA(3,1)-Hawkes model}
		\label{fig: carma_carma_32}
	\end{subfigure}
	\label{fig: carma_sens}
\end{figure}

\section{Empirical analysis}\label{emp_analysis}


This section presents an empirical case study in which the calibration of Hawkes (used as a benchmark for comparison purposes), CARMA(2,1)-Hawkes, and CARMA(3,2)-Hawkes models to market data is conducted with the aim of elucidating the necessity for a generalisation of the Hawkes model. Furthermore, the additional objective is to determine which of the three models provides the optimal fit to the observed data in the context of an extreme scenario (non-regular market environment). To this end, a non-conventional company is selected for the calibration exercise: \textit{GameStop} (GME). 

The decision to select the aforementioned company was made on the grounds of \textit{Roaring Kitty}'s impact on \textit{GameStop} share price, which caused significant fluctuations in the stock price (see Figure~\ref{fig: gme_stock_price}) and thus created high levels of volatility. Such financial turbulence observed in GameStop stock price can be attributed to a post (without referring directly to the firm) on social media platforms on 13 May 2024, following a three-year absence from them, by \textit{Roaring Kitty}, an account linked to a social media finance influencer who is accounted for triggering the meme stock rally in 2021 (also referred as meme stock mania and GameStop short squeeze). The post and news about the \textit{Roaring Kitty}'s return created a sentiment of a bullish speculation around GameStop as in 2021, with speculators and investors (including also the reappearance of retail ones) increasing their activity and presence in the equity and derivatives markets. 

\begin{figure}[h]
	\centering
	\caption{The price performance of GameStop shares over a 12-month period. The gray area refers to the interval between the date on which the post was published and the date on which the dataset was collected.} 
	\includegraphics[width=0.7\textwidth]{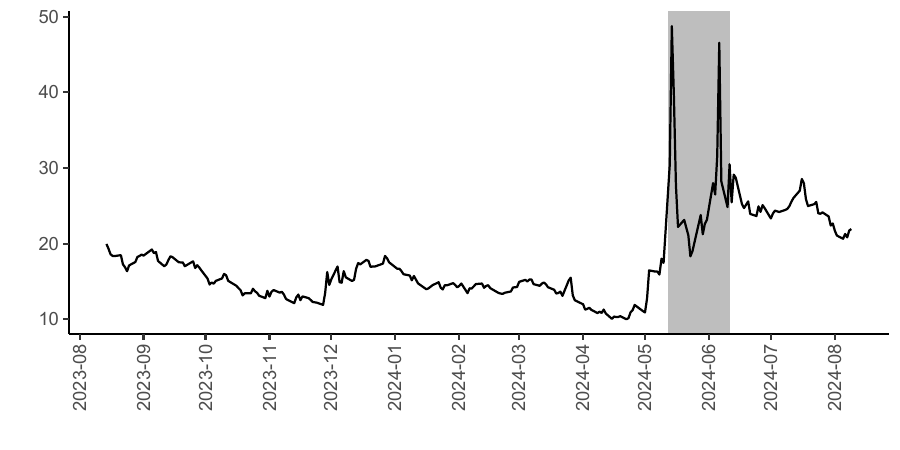} 
	\label{fig: gme_stock_price} 
\end{figure}

Having clarified this point, the data set used for the empirical study was extracted from the \textit{Chicago Board Options Exchange} (CBOE) on 11 June 2024 and consists of call options denominated in USD expiring on 16 August 2024 with a spot value ($S_0$) of $24.58$. The data set was filtered to exclude all options with open interest and volume values less than $10$, resulting in a final data set composed of 40 call options with a range of strike prices ($K$) between $12$ and $125$. In calibrating the models, we consider options that fall within the range of $-1.15$ to $0.75$ in terms of log-moneyness, defined as $k:= \ln\left(S_0/K\right)$, for a total of 31 call options with $K \in \left[12, 75\right]$. The number of in-the-money options is 10, while the number of out-of-the-money options is 21. The remaining call options, which are significantly deep out-of-the-money, are employed for an out-of-sample analysis, with a specific focus on strike prices $K \in \left\{80, 100, 125\right\}$. These strikes that are strongly above the spot value might be of particular interest to market practitioners due to the considerable high level of trading volume (respectively 148, 7748, and 1149), which justifies the analysis.




\subsection{Empirical results}\label{emp_results}

The model parameters are identified by minimising the relative root mean square error (RRMSE) between the market and the model-determined volatilities. Specifically, the minimisation problem is configured as follows. Let $\theta$ be a vector containing the baseline intensity parameter $\mu$, the autoregressive parameters $\left(a_1, \dots, a_p\right)$, and moving average parameters $\left(b_0, \dots, b_q\right)$ of a CARMA(p,q)-Hawkes model; i.e., $\theta = \left(\mu, b_0, \dots, b_q, a_1, \dots, a_p \right)$. For the sake of clarity, $\theta \in \Theta \subseteq \mathbb{R}^{p+q+2}$ where $\Theta$ denotes a compact subset of $\mathbb{R}^{p+q+2}$ in which the stationary condition is guaranteed and the kernel function is non-negative. Then, the set of parameters characterising a specific model (in this case the jump-diffusion model referenced in~\eqref{eq:STQ} with normally distributed jump sizes) is represented by $\mathbf{\Psi}:= \left(\theta, \mu_J, \sigma_J, \sigma\right)$ and the minimisation problem reads
\begin{equation}
	\mathbf{\Psi}^{\star} := \argmin_{\mathbf{\Psi} \in \mathbb{R}^{p+q+5}} \; f(\mathbf{\Psi}). 
\end{equation}
As the selected error measure is the RRMSE, it follows that the objective function $f(\mathbf{\Psi})$ writes
\begin{equation}
	f(\mathbf{\Psi}) := \sqrt{\frac{1}{n_K} \sum_{i = 1}^{n_K} \left(\frac{\text{IV}_{\text{model}} \left(K_i, \mathbf{\Psi}\right)-
			\text{IV}_{\text{market}}\left(K_i\right)}{\text{IV}_{\text{market}} \left(K_i\right)}\right)^2}, 
\end{equation}
where $n_K$ refers to the number of strike prices, $\text{IV}_{\text{market}} \left(K\right)$ is the market implied volatility with strike price $K$, and $\text{IV}_{\text{model}} \left(K, \mathbf{\Psi}\right)$ denotes the (Black-Scholes) implied volatility of a European call option with strike price $K$.


The calibration is implemented for the Hawkes, CARMA(2,1)-Hawkes, and CARMA(3,2)-Hawkes models;\footnote{In order to assess the suitability of choice $m = 450$
we evaluate prices of call options for the whole data in-the-sample analysis with an order $m$ relatively high ($m = 4000$). The absolute percentage error is less than $10^{-7}$ for all the three models. 
In light of the negligible nature of all errors, we decided to calibrate the models with $m = 450$ in order to accelerate the procedure. 
} the optimal parameters and the RRMSE $(\%)$ are presented in Table~\ref{tab: calibrated_params}.
\begin{table}[h]
	\centering
	\caption{Optimal parameters and RRMSE $(\%)$ determined through calibration for the Hawkes, CARMA(2,1)-Hawkes, and CARMA(3,2)-Hawkes models.}
	\label{tab: calibrated_params}
	\begin{tabular}{cccc}
		\toprule
		& Hawkes & CARMA(2,1)-Hawkes & CARMA(3,2)-Hawkes  \\ \midrule
		$\mu$      & 1.5325   & 1.4852                & 1.9283                    \\
		$b_0$      & 0.0007   & 0.5547                & 0.0346                  \\
		$b_1$      & ------   & 0.8823                & 0.7943                  \\
		$b_2$      & ------   & ------                & 0.9851                  \\
		$a_1$      & 11.6902  & 2.9850                & 5.1110                  \\
		$a_2$      & ------   & 1.4796                & 2.0242                   \\
		$a_3$      & ------   & ------                & 0.0348 \\
		$\mu_J$    & $0.5704$ & 0.8967                & 0.6719                   \\
		$\sigma_J$ & 0.8895   & 0.5702                & 0.6698                \\
		$\sigma$   & 0.7366   & 0.7317                & 0.6829                  \\\midrule
		RRMSE $(\%)$& 1.6647  & 0.7579                & 0.7492                \\ \bottomrule
	\end{tabular}
\end{table}

The findings of Table~\ref{tab: calibrated_params} show a clear reduction in the RRMSE for both the CARMA(2,1)-Hawkes (0.76\%) and CARMA(3,2)-Hawkes (0.75\%) models in comparison to the Hawkes model (1.66\%). In order to assess the suitability of a generalisation of the Hawkes model in terms of their ability to accommodate the observed data, two distinct analyses are conducted using the calibrated parameters in Table~\ref{tab: calibrated_params}. The former, which is called global sample analysis, involves
 the computation of the relative root mean square error for the entire data set, solely in-the-money (ITM) call options, solely out-of-the-money (OTM) call options, and a spectrum of log-moneyness values. The latter constitutes a proper out-of-sample analysis, entailing the computation of the RRMSE for the remaining data set (composed of 9 call options) and three strike prices selected for their high level of trading volume. The results are exhibited in Table~\ref{table: out_of_sample}.

\begin{table}[h]
	\centering
	\caption{Global sample and out-of-sample error (RRMSE $\%$) for the Hawkes, CARMA(2,1)-Hawkes, and CARMA(3,2)-Hawkes models. $k$ and $K$ are respectively log-moneyness and strike price.}
	\label{table: out_of_sample}
		\begin{tabular}{@{}ccccccccc@{}}
			\toprule
			\multirow{2}{*}{} & \multicolumn{4}{c}{Global sample}                             & \multicolumn{4}{c}{Out-of-sample (significantly deep OTM)} \\ \cmidrule(l){2-9} 
			& \begin{tabular}[c]{@{}c@{}}Entire\\ data set\end{tabular} & \begin{tabular}[c]{@{}c@{}}$k > 0$\end{tabular}  & \begin{tabular}[c]{@{}c@{}}$k < 0$ \end{tabular}  & \begin{tabular}[c]{@{}c@{}}$-0.35\leq k \leq 0.35$  \end{tabular} & \begin{tabular}[c]{@{}c@{}}Remaining\\ data set\end{tabular} & K = 80  & K = 100  & K = 125  \\ \midrule
			Number of calls    & 40          & 10       & 30    & 15                    & 9                & 1     & 1      & 1        \\ \midrule
			Hawkes             & 1.92        & 1.56     & 2.03  & 1.30                  & 2.62             & 1.41  & 2.70    & 3.46     \\
			CARMA(2,1)-Hawkes  & 0.93        & 0.80     & 0.97  & 0.81                  & 1.36             & 0.73  & 1.05   & 2.43   \\
			CARMA(3,2)-Hawkes  & 0.85        & 0.73     & 0.88  & 0.83                  & 1.12             & 0.94  & 0.72   & 1.97     \\ \bottomrule
		\end{tabular}
\end{table}

The results of Table~\ref{table: out_of_sample} indicate that the CARMA(3,2)-Hawkes model exhibits a lower relative root mean square error compared to the other two models in both analyses. However, there are two cases where the CARMA(2,1)-Hawkes model provides a lower RRMSE, namely in the case of  $-0.35\leq k \leq 0.35$ and when the strike price is equal to $80$. The noteworthy aspect is that both models, i.e., CARMA(2,1)-Hawkes and CARMA(3,2)-Hawkes, demonstrate a substantial enhancement in performance (in terms of RRMSE) with regard to that observed in the Hawkes model, suggesting that a generalisation of the Hawkes model is necessary in this specific instance.

Figure~\ref{fig_iv: comparison} illustrates a comparison of the market data and the calibrated implied volatilities for the Hawkes, CARMA(2,1)-Hawkes, and CARMA(3,2)-Hawkes models at varying strike prices, selected to represent the following cases: in-the-money, out-of-the-money, and deeply out-of-the-money. Figure~\ref{fig: cal_surf} shows the volatility surface for the Hawkes, CARMA(2,1)-Hawkes, and CARMA(3,2)-Hawkes models employing the optimal parameters identified through calibration.  
%
\begin{figure}[htbp!]
	\centering
	\caption{Comparison between market-implied volatilities and implied volatilities obtained with the calibrated Hawkes, CARMA(2,1)-Hawkes, and CARMA(3,2)-Hawkes models.} 
	\includegraphics[width=0.9\textwidth]{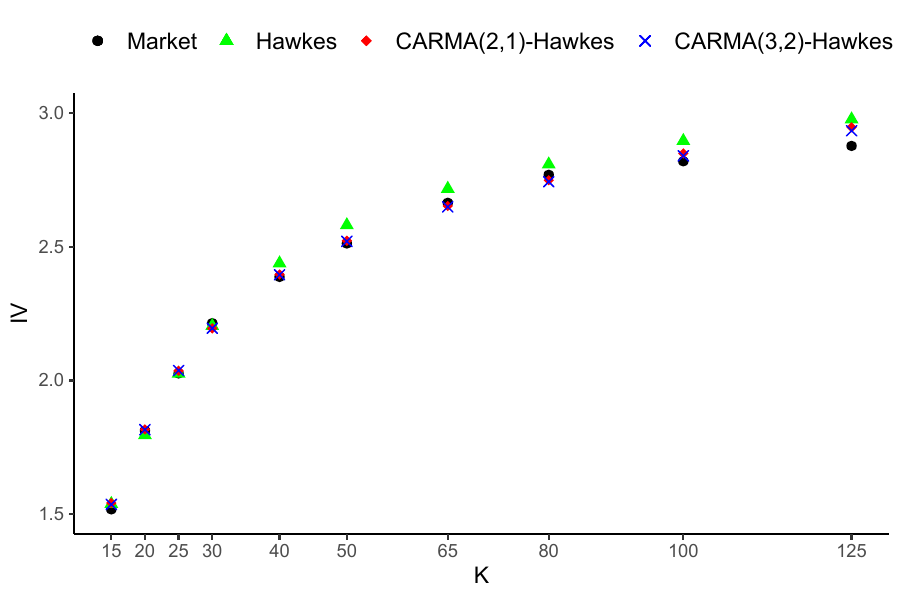} 
	\label{fig_iv: comparison} 
\end{figure}


\begin{figure}[htbp!]
	\centering
	\caption{Volatility surfaces using optimal parameters determined through calibration for the Hawkes, CARMA(2,1)-Hawkes, and CARMA(3,2)-Hawkes models.}
	\label{fig: cal_surf}
	\begin{subfigure}{0.6\textwidth}
		\includegraphics[width=\textwidth]{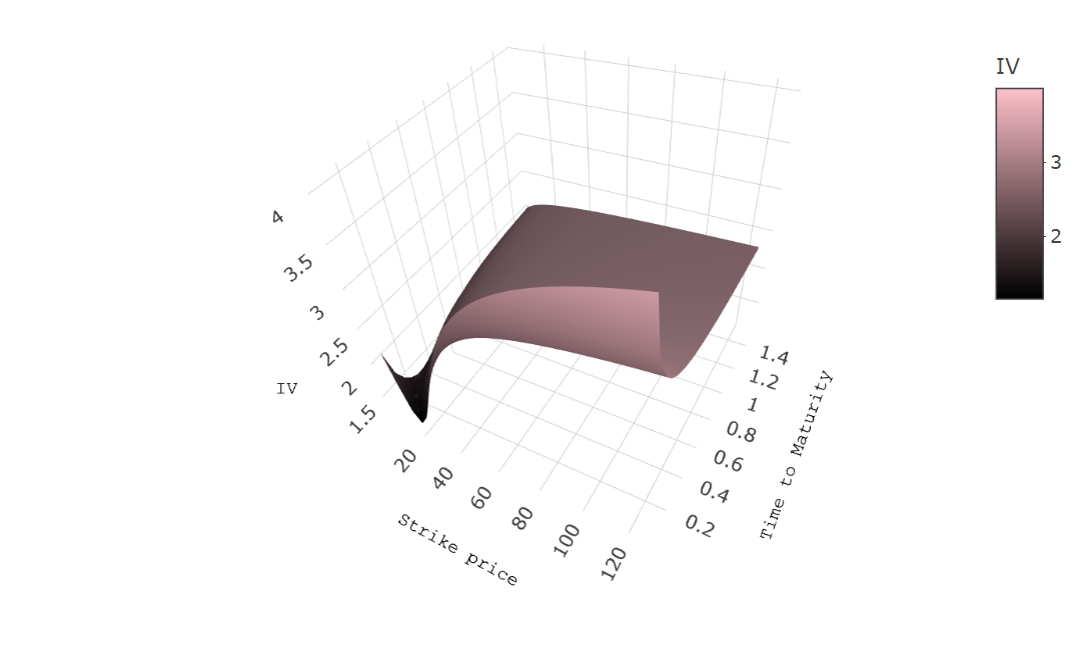}
		\caption{Hawkes model}
	\end{subfigure}\hfill
	\begin{subfigure}{0.6\textwidth}
		\includegraphics[width=\textwidth]{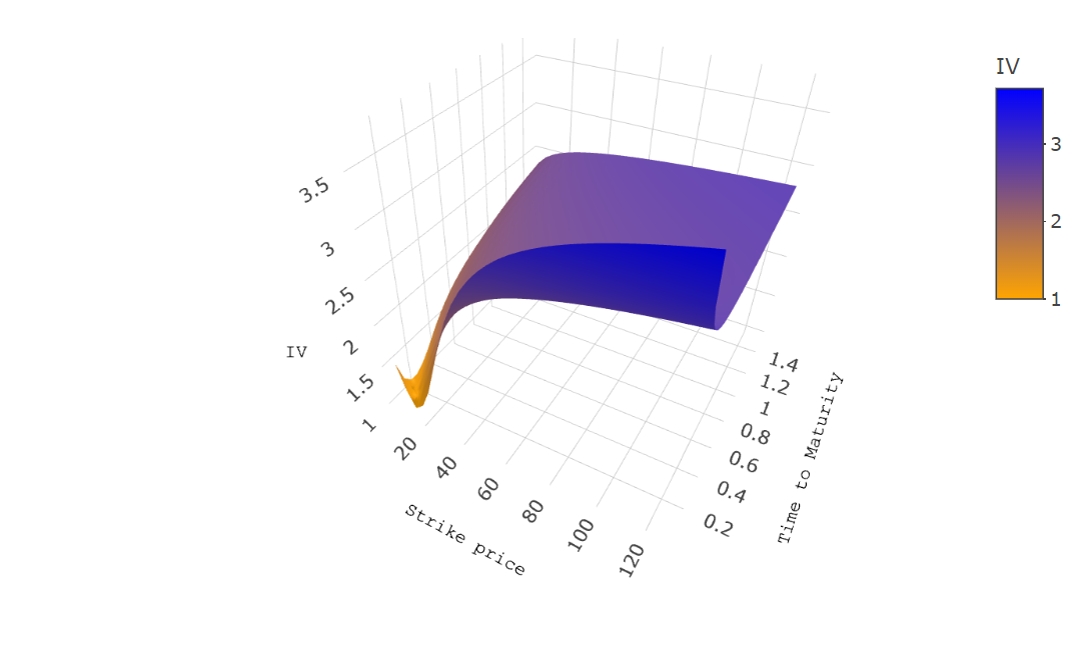}
		\caption{CARMA(2,1)-Hawkes model}
	\end{subfigure}\hfill
	\begin{subfigure}{0.6\textwidth}
		\includegraphics[width=\textwidth]{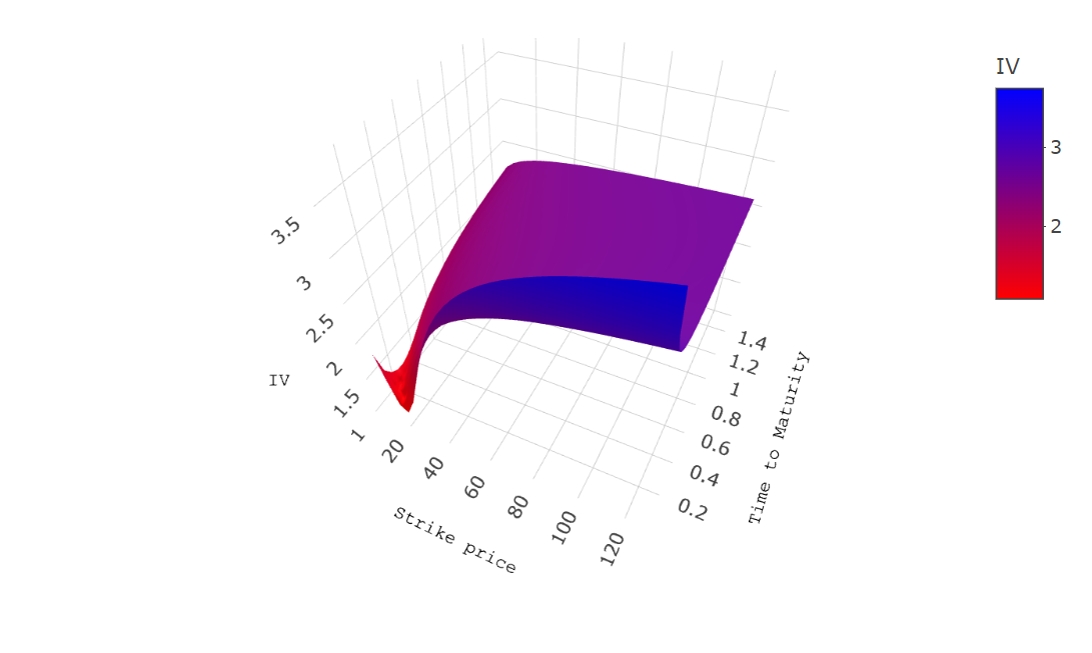}
		\caption{CARMA(3,2)-Hawkes model}
	\end{subfigure}
\end{figure}

\newpage
\section{Conclusions}\label{conclusions} 
This paper introduces a novel approach for the asset price dynamics using a compound CARMA(p,q)-Hawkes process that provides a flexible framework for capturing jump dynamics under both real-world and risk-neutral probability measures. We also provide a change of measure that distorts only the jump size distribution while the log-affine structure of the log-price process is preserved under the risk-neutral measure.
Additionally, the paper introduces a numerical method for pricing European options using the Gauss-Laguerre quadrature, which offers a good level of efficiency and stability. The accuracy of the proposed approach is supported by numerical results, where option prices are compared with those obtained through Monte Carlo simulations.
Finally, a simulation algorithm is developed  for the CARMA(p,q)-Hawkes jump-diffusion. This algorithm turns out to be computationally efficient and it can be used for pricing path-dependent derivatives.


\bmsection*{Author contributions}
All authors wrote the main manuscript text and reviewed the manuscript

\bmsection*{Acknowledgments}
This work was supported by JST CREST Grant Number JPMJCR2115, Japan and by the PRIN2022 project ``The effects of climate change in the evaluation of financial instruments'' financed by the ``Ministero dell’Università e della Ricerca'' with grant number 20225PC98R, CUP Codes: H53D23002200006 (University of Milano-Bicocca) and G53D25001960006 (University of Milan).
Lorenzo Mercuri and Andrea Perchiazzo acknowledge financial support within the ``Fund for Departments of Excellence academic funding'' provided by the Ministero dell'Università e della Ricerca (MUR), established by Stability Law, namely `Legge di Stabilità n.232/2016, 2017' - Project of the Department of Economics, Management, and Quantitative Methods, University of Milan.

%

\bmsection*{Conflict of interest}
The authors declare no potential conflict of interests.
%

\bmsection*{Supporting information}

Additional supporting information may be found in the
online version of the article at the publisher’s website.

\appendix

\section{CARMA(p,q)-Hawkes review and Toy Model}\label{app1}
\vspace*{12pt}

Here we discuss a simplified version of the model in \eqref{pureJumpwithnounitary}. We assume that, under the risk-neutral probability measure $\mathbb{Q}$, the jump size satisfies the following condition:
\begin{equation*}
\mathbb{E}^{\mathbb{Q}}\left[e^{\bar{J}^\mathbb{Q}}\right]=1.
\end{equation*}
Thus, the SDE in \eqref{sde_simplied_version} of the underlying asset price writes
\begin{equation}\label{sde_toymodel}
	\mbox{d} S_{t}=rS_{t^{-}}\mbox{d}t+S_{t^{-}}\mbox{d}\tilde{Y}^{\mathbb{Q}}_t.
\end{equation}
Given the initial condition $S_{t_0}$, the solution $S_{T}$ with $T\geq t_0$ reads
\begin{equation}
S_{T}=S_{t_0}\exp\left[r\left(T-t_0\right)+\sum_{k=N_{t_0}+1}^{N_{T}}J^\mathbb{Q}_k\right].
\label{SolutionToyMod1}
\end{equation}
Due to the Markov property of the CARMA(p,q)-Hawkes process, the solution in \eqref{SolutionToyMod1} can be rewritten in distribution as follows
\begin{equation*}
S_{T}\overset{d}{=}S_{t_0}\exp\left[r\left(T-t_0\right)+\sum_{k=1}^{N_{T}-N_{t_0}}J^\mathbb{Q}_k\right].
\end{equation*}
As discussed in \cite[Appendix D.1]{MERCURI20241}, the increments $N_{T}-N_{t_0}$ can be seen as the counting process in a CARMA(p,q)-Hawkes observed at time $T$, started at time $t_0$, where the state process has an initial value equal to $X_{t_0}$.

For the computation of European option prices, the Gauss-Laguerre quadrature discussed in Section~\ref{GaussLaguerre} is no longer required. Instead, it has a form similar to the Merton's jump-diffusion model~\cite{merton1976option}. We rewrite the option call price formula as follows:
\begin{eqnarray}
c\left(K, t_0, T\right)&=&e^{-r\left(T-t_0\right)}\mathbb{E}^{\mathbb{Q}}\left[\left(S_T-K\right)^{+}\left|\mathcal{F}_{t_0}\right.\right]\nonumber\\
&=&e^{-r\left(T-t_0\right)}\mathbb{E}^{\mathbb{Q}}\left[\mathbb{E}^{\mathbb{Q}}_{t_0}\left[\left(S_T-K\right)^{+}\left|N_{T}-N_{t_0}\right.\right]\left|\mathcal{F}_{t_0}\right.\right]\nonumber\\
&=&\sum_{n=0}^{+\infty} e^{-r\left(T-t_0\right)}\mathbb{E}^{\mathbb{Q}}_{t_0}\left[\left(S_T-K\right)^{+}\left|N_{T}-N_{t_0}=n\right.\right]\mathbb{Q}\left(N_{T}-N_{t_0}=n\left|\mathcal{F}_{t_0}\right.\right).
\label{PricingForToy}
\end{eqnarray}
Denoting with $Y^{\mathbb{Q}}_n :=\sum_{k=1}^{n}J^\mathbb{Q}_{{k}}$ for any $n\geq 1$ and $F_{Y^{\mathbb{Q}}_n}$ its cumulative distribution function, the conditional expectation in~\eqref{PricingForToy} becomes:
\begin{equation}\label{call_first}
e^{-r\left(T-t_0\right)}\mathbb{E}^{\mathbb{Q}}_{t_0}\left[\left(S_T-K\right)^{+}\left| N_{T}-N_{t_0}=n\right.\right]=S_{t_0}\int_{\ln\left(\frac{K}{S_{t_0}}\right)-r\left(T-t_0\right)}^{+\infty}e^{u}\mbox{d}F_{Y^{\mathbb{Q}}_n}\left(u\right)-Ke^{-r \left(T-t_0\right)}\int_{\ln\left(\frac{K}{S_{t_0}}\right)-r\left(T-t_0\right)}^{+\infty}\mbox{d}F_{Y^{\mathbb{Q}}_n}\left(u\right).
\end{equation}
Observing that $\mathbb{E}^{\mathbb{Q}}\left[Y^{\mathbb{Q}}_n \right]=1$, we introduce a new probability measure $\bar{F}_{Y^{\mathbb{Q}}_n}\left(y\right)$ defined as
\begin{equation}\label{cdf}
\bar{F}_{Y^{\mathbb{Q}}_n}\left(y\right):= \int_{-\infty}^{y} e^u \mbox{d}F_{Y^{\mathbb{Q}}_n}\left(u\right).
\end{equation}
Using~\eqref{cdf} in \eqref{PricingForToy}, we get
\begin{equation}
e^{-r\left(T-t_0\right)}\mathbb{E}^{\mathbb{Q}}_{t_0}\left[\left(S_T-K\right)^{+}\left|N_{T}-N_{t_0}=n\right.\right]=S_0\left[1-\bar{F}_{Y^{\mathbb{Q}}_n}\left(\bar{d}\right)\right]-Ke^{-r\left(T-t_0\right)}\left[1-F_{Y^{\mathbb{Q}}_n}\left(\bar{d}\right)\right],
\label{cond_Exp}
\end{equation}
where $\bar{d}:=\ln\left(\frac{K}{S_{t_0}}\right)-r\left(T-t_0\right)$. Substituting \eqref{cond_Exp} into \eqref{PricingForToy}, we get:
\begin{eqnarray}\label{pricing_formula_appendix}
c\left(K, t_0, T\right)&=& e^{-r\left(T-t_0\right)}\left(S_{t_0}e^{r\left(T-t_0\right)}-K\right)^{+}\mathbb{Q}\left(N_{T}-N_{t_0}=0\left|\mathcal{F}_{t_0}\right.\right)\nonumber\\
&+&\sum_{n=1}^{+\infty}\left\{S_0\left[1-\bar{F}_{Y^{\mathbb{Q}}_n}\left(\bar{d}\right)\right]-Ke^{-r\left(T-t_0\right)}\left[1-F_{Y^{\mathbb{Q}}_n}\left(\bar{d}\right)\right]\right\}\mathbb{Q}\left(N_{T}-N_{t_0}=n\left|\mathcal{F}_{t_0}\right.\right).
\end{eqnarray}

In order to apply the pricing formula in~\eqref{pricing_formula_appendix}, we need to compute the following quantities: $\mathbb{Q}\left(N_{T}-N_{t_0}=n\left|\mathcal{F}_{t_0}\right.\right)$, $\bar{F}_{Y^{\mathbb{Q}}_n }\left(\bar{d}\right)$, and $F_{Y^{\mathbb{Q}}_n }\left(\bar{d}\right)$. The last two quantities depend on the specification of the jump size distribution under the risk-neutral probability measure $\mathbb{Q}$; some possible choices are discussed at the end of the appendix. The probability $\mathbb{Q}\left(N_{T}-N_{t_0}=n\left|\mathcal{F}_{t_0}\right.\right)$ can be calculated using two approaches: the conditional probability generating function of a point process or the Fourier transform for a discrete random variable.

As the conditional probability generating function (refer to  \cite{kocherlakota2017bivariate} for further details) of $N_{T}-N_{t_0}$ can be defined as
\begin{equation}
\mathcal{G}_{T,t_0}\left(z\right):=\mathbb{E}^{\mathbb{Q}}\left(z^{N_{T}-N_{t_0}}\left|\mathcal{F}_{t_0}\right.\right),
\label{pgfCP}
\end{equation} 
the probability of events $\left\{N_{T}-N_{t_0}=n\right\}_{n \geq 0}$ writes
\begin{equation}
\mathbb{Q}\left(N_{T}-N_{t_0}=n\left|\mathcal{F}_{t_0}\right.\right)=\frac{1}{n!}\frac{\partial^n \mathcal{G}_{T,t_0}\left(z\right)}{\left(\partial z\right)^{n}}.
\end{equation}
For the second approach, the probability $\mathbb{Q}\left(N_{T}-N_{t_0}=n\left|\mathcal{F}_{t_0}\right.\right)$ can be computed in the following way: 
\begin{equation}
\mathbb{Q}\left(N_{T}-N_{t_0}=n\left|\mathcal{F}_{t_0}\right.\right) = \frac{1}{2\pi}\int_{0}^{2\pi}e^{-i n \kappa} \phi_{N_T - N_{t_0}}\left(\kappa\right)    \mbox{d}\kappa
\end{equation}
where the conditional characteristic function $\phi_{N_T - N_{t_0}}\left(\kappa\right)$  is defined as 
\begin{equation}\label{chf_simple_model}
\phi_{N_T - N_{t_0}}\left(\kappa\right) := \mathbb{E}^\mathbb{Q}\left[e^{i \kappa \left(N_T - N_{t_0}\right)}\left|\mathcal{F}_{t_0}\right.\right]. 
\end{equation}
It is worth to notice that the conditional probability generating function~\eqref{pgfCP} can be obtained from the conditional characteristic function~\eqref{chf_simple_model}; i.e., $\mathcal{G}_{T,t_0}\left(z\right) = \phi_{N_T - N_{t_0}}\left(-i \ln z\right)$.

The following lemma provides a log-affine form formula for joint conditional characteristic function of the vector $\left(X_{T},  N_{T}\right)$, which is required for the computation of \eqref{chf_simple_model}.
\begin{lemma}\label{thmJointCharNX}
Let $\phi_{X_T, N_T}\left(u, \kappa, t_0\right)$ be the CARMA(p,q)-Hawkes conditional joint characteristic function given the information at time $t_0<T$ defined as:
\begin{equation}\label{chf_lemma}
\phi_{X_T, N_T}\left(u,\kappa,t_0\right):= \mathbb{E}^{\mathbb{Q}}\left[e^{i u^{\top}X_T+i \kappa N_{T}}\left|\mathcal{F}_{t_0}\right.\right],
\end{equation}
with $u\in \mathbb{R}^p$ and $\kappa\in \mathbb{R}$.
The function $\phi_{X_T, N_T}\left(u,\kappa,t_0\right)$ satisfies the following log-affine form:
\begin{equation}
\phi_{X_T, N_T}\left(u,\kappa,t_0\right) := \exp\left[u_{0, T}\left(t_0\right)+u_{T}\left(t_0\right)^{\top}X_t+\kappa_T\left(t_0\right) N_{t_0}\right],
\label{LogLin}
\end{equation}
where the time coefficients $u_{0, T}\left(\cdot\right)$, $u_{T}\left(\cdot\right)$ and $\kappa_T\left(\cdot\right)$ are the solution of the following system of ordinary differential equations:
\begin{equation}
\left\{
\begin{array}{l}
\frac{\partial\kappa_T\left(t\right)}{\partial t}=0\\
\frac{\partial u_{0, T}\left(t\right)}{\partial t} = \mu\left(1-e^{u_{T}\left(t\right)^{\top}\mathbf{e}+   \kappa_T\left(t\right)}\right)\\
\mathbb{J}_{u_{T}\left(t\right)}^{\top} = \left(1-e^{u_{T}\left(t\right)^{\top}\mathbf{e}+\kappa_T\left(t\right)}\right)\mathbf{b}^{\top}-u_{T}\left(t\right)^{\top}\mathbf{A}
\end{array}
\label{eq:FinalRes1}
\right.,
\end{equation}
with the final conditions:
\begin{equation}
\left\{
\begin{array}{l}
u_{0,T} \left(T\right) = 0\\
u_{T} \left(T\right) = iu\\
\kappa_{T} \left(T\right) = i\kappa
\end{array}
\right. .
\label{finalcond1}
\end{equation}
\end{lemma}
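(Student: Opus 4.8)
The plan is to reproduce the martingale argument used in the proofs of Theorem~\ref{characteristicP} and Theorem~\ref{cf_Q}, but now applied to the two-dimensional Markov process $\left(X_t,N_t\right)$ instead of $\left(\ln S_t,X_t,m_t\right)$. First I would fix $u\in\mathbb{R}^p$, $\kappa\in\mathbb{R}$ and the horizon $T$, and introduce, for $t\in\left[t_0,T\right]$, the process
\[
g_t:=\mathbb{E}^{\mathbb{Q}}\left[e^{iu^{\top}X_T+i\kappa N_T}\,\middle|\,\mathcal{F}_t\right],
\]
which is a bounded (hence uniformly integrable) complex-valued $\mathbb{Q}$-martingale by the tower property. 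Because $\left(X_t,N_t\right)$ is Markov, $g_t=h_t\left(X_t,N_t\right)$ for a deterministic function $h_t$, and the zero-drift condition for $g_t$ will characterise $h_t$.

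Second, I would apply It\^o's lemma for semimartingales to $g_t=h_t\left(X_t,N_t\right)$. Recalling from \eqref{eq:state_process} that $\mbox{d}X_t=\mathbf{A}X_t\mbox{d}t+\mathbf{e}\mbox{d}N_t$ and that $N_t$ is a pure-jump process with predictable intensity $\lambda_t=\mu+\mathbf{b}^{\top}X_t$ as in \eqref{Def:Lambda}, the only continuous contribution comes from the drift $\mathbf{A}X_t$ of $X_t$, while at a jump time both $X$ jumps by $\mathbf{e}$ and $N$ jumps by $1$. Compensating the jump term by $\lambda_t\mbox{d}t$ gives
\[
\mbox{d}g_t=\left[\frac{\partial h_t}{\partial t}+\left(\nabla_X h_t\right)^{\top}\mathbf{A}X_t+\lambda_t\left(h_t\left(X_{t^-}+\mathbf{e},N_{t^-}+1\right)-h_t\left(X_{t^-},N_{t^-}\right)\right)\right]\mbox{d}t+\mbox{d}\bar{M}_t,
\]
where $\bar{M}_t$ is a local martingale; imposing that the drift bracket vanishes yields the fundamental equation for $h_t$.

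Third, I would substitute the log-affine ansatz $h_t\left(x,n\right)=\exp\left[u_{0,T}\left(t\right)+u_T\left(t\right)^{\top}x+\kappa_T\left(t\right)n\right]$, so that $\partial_t h_t=h_t\left[\dot u_{0,T}\left(t\right)+\mathbb{J}_{u_T\left(t\right)}^{\top}x+\dot\kappa_T\left(t\right)n\right]$, $\nabla_X h_t=h_t\,u_T\left(t\right)$, and $h_t\left(x+\mathbf{e},n+1\right)-h_t\left(x,n\right)=h_t\left(e^{u_T\left(t\right)^{\top}\mathbf{e}+\kappa_T\left(t\right)}-1\right)$. Dividing the drift equation by $h_t$ and inserting $\lambda_t=\mu+\mathbf{b}^{\top}x$ produces an identity that is affine in $\left(x,n\right)$: matching the coefficient of $n$ gives $\dot\kappa_T\left(t\right)=0$, matching the coefficient of $x$ gives the equation for $u_T\left(t\right)$ in \eqref{eq:FinalRes1}, and the remaining constant term gives the equation for $u_{0,T}\left(t\right)$. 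The terminal conditions \eqref{finalcond1} come from $h_T\left(x,n\right)=e^{iu^{\top}x+i\kappa n}$. Conversely, if $\left(u_{0,T},u_T,\kappa_T\right)$ solves \eqref{eq:FinalRes1}--\eqref{finalcond1} on $\left[t_0,T\right]$, then $h_t\left(X_t,N_t\right)$ has zero drift and is bounded on $\left[t_0,T\right]$ (since $\kappa_T\equiv i\kappa$ is purely imaginary and the real part of $u_T$ stays bounded on the compact interval), hence a genuine martingale equal to $g_t$; evaluating at $t_0$ gives \eqref{LogLin}.

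The step I expect to be the main obstacle is making the formal computation rigorous rather than the algebra itself: one must justify that the proposed drift really is the compensator, which requires $h_t\left(X_{t^-}+\mathbf{e},N_{t^-}+1\right)-h_t\left(X_{t^-},N_{t^-}\right)$ to be integrable against $\lambda_t\mbox{d}t$ on $\left[t_0,T\right]$ — guaranteed here by the boundedness of $h$ together with the stationarity assumption on the CARMA(p,q)-Hawkes recalled in Remark~\ref{remark1} — and one must also establish that the system \eqref{eq:FinalRes1} admits a unique solution on all of $\left[t_0,T\right]$. The latter is where care is needed: once one observes that $\kappa_T$ is constant, the $u_T$-equation is not a full matrix Riccati equation but carries only the milder scalar exponential nonlinearity $e^{u_T\left(t\right)^{\top}\mathbf{e}}$, and global existence on $\left[t_0,T\right]$ follows from the companion structure of $\mathbf{A}$ exactly as in \cite{MERCURI20241}.
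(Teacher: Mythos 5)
Your proposal is correct and follows essentially the same route as the paper: the paper likewise treats $\phi_{X_T,N_T}\left(u,\kappa,t\right)$ as a complex martingale, imposes the zero-drift condition $\partial_t\phi+\mathcal{A}\phi=0$ with the CARMA(p,q)-Hawkes infinitesimal generator (which is exactly your It\^o-plus-compensator computation), substitutes the log-affine ansatz, and matches the coefficients of $N_t$, $X_t$ and the constant term to obtain \eqref{eq:FinalRes1}--\eqref{finalcond1}. Your additional remarks on integrability of the jump term, the converse verification, and global existence of the ODE system are refinements of rigor that the paper leaves implicit, not a different argument.
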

\begin{proof}
We consider the stochastic process $\left\{\phi_{X_T, N_T}\left(u, \kappa, t\right)\right\}_{t \in \left[t_0, T\right]}$ that is a complex martingale\footnote{
Let $\left\{\phi_{X_T, N_T}\left(u, \kappa, t\right)\right\}_{t \in \left[t_0, T\right]}$ be a stochastic process defined as
\begin{equation*}
  \phi_{X_T, N_T}\left(u, \kappa, t\right):= \mathbb{E}^\mathbb{Q}\left[\exp\left(i u^{\top}X_T+i \kappa N_{T}\right)\left|\mathcal{F}_{t}\right.\right], \quad \forall t \in \left[t_0, T\right].
\end{equation*}
In order to show that the process is a martingale, we use its definition and the tower property of the conditional expected value. Let $s,t \in \left[t_0, T\right]$ with $s < t$. Then
\begin{eqnarray*}
\mathbb{E}^\mathbb{Q}\left[\phi_{X_T, N_T}\left(u,\kappa, t\right)\left|\mathcal{F}_s\right.\right] &=& \mathbb{E}^\mathbb{Q}\left[\mathbb{E}^\mathbb{Q}\left[\phi_{X_T, N_T}\left(u,\kappa, T\right)\left|\mathcal{F}_t\right.\right]\left|\mathcal{F}_s\right.\right]\\
&=& \mathbb{E}^\mathbb{Q}\left[\phi_{X_T, N_T}\left(u,\kappa, T\right)\left|\mathcal{F}_s\right.\right]\\
&=& \phi_{X_T, N_T}\left(u,\kappa, s\right).
\end{eqnarray*}
} 
with the final value at time $T$ given by:
\begin{equation*}
\phi_{X_T, N_T}\left(u,\kappa, T\right)=e^{i u^{\top} X_T+ i \kappa N_T}.
\end{equation*}
We assume that $\phi_{X_T, N_T}\left(u,\kappa,t\right)$ has a log-affine structure in the components $\left(X_t,N_t\right)$. Specifically $\forall t\in\left[t_0,T\right]$:
\begin{equation*}
\phi_{X_T, N_T}\left(u,\kappa,t\right) := \exp\left[u_{0, T}\left(t\right)+u_{T}\left(t\right)^{\top}X_t+\kappa_T\left(t\right) N_t\right], 
\end{equation*}
where the time-coefficients $u_{0, T}\left(\cdot\right),u_{T}\left(\cdot\right)$ and $\kappa_T\left(\cdot\right)$ satisfy the final conditions
\begin{equation*}
\left\{
\begin{array}{l}
u_{0,T} \left(T\right) = 0\\
u_{T} \left(T\right) = u\\
\kappa_{T} \left(T\right) = \kappa
\end{array}
\right. .
\end{equation*}
The martingale property of the process $\left\{\phi_{X_T, N_T}\left(u,\kappa,t\right)\right\}_{t\in\left[t_0, T\right]}$ implies:
\begin{equation}
\frac{\partial \phi_{X_T, N_T}\left(u,\kappa,t\right)}{\partial t}+\mathcal{A}\phi_{X_T, N_T}\left(u,\kappa,t\right)=0,
\label{eq:PDE1}
\end{equation}
with the final conditions in \eqref{finalcond1}. $\mathcal{A}\phi_{X_T, N_T}\left(u,\kappa,t\right)$ is the infinitesimal generator of the CARMA(p,q)-Hawkes \citep[see][for more details]{MERCURI20241}, that is:
\begin{eqnarray*}
\mathcal{A}\phi_{X_T, N_T}\left(u,\kappa,t\right) &=& \left(\mu+\mathbf{b}^{\top}X_t\right)\phi_{X_T, N_T}\left(u,\kappa,t\right)\left(e^{u_{T}\left(t\right)^{\top}\mathbf{e}+\kappa_T\left(t\right)}-1\right)+\phi_{X_T, N_T}\left(u, \kappa, t\right)u_{T}\left(t\right)^{\top}\mathbf{A}X_t\nonumber\\
&=& \phi_{X_T, N_T}\left(u,\kappa,t\right) \left\{\mu\left(e^{u_{T}\left(t\right)^{\top}\mathbf{e}+\kappa_T\left(t\right)}-1\right)+\left[\left(e^{u_{T}\left(t\right)^{\top}\mathbf{e}+\kappa_T\left(t\right)}-1\right)\mathbf{b}^{\top}+u_{T}\left(t\right)^{\top}\mathbf{A}\right]X_t\right\}.
\end{eqnarray*}
The partial derivative of $\phi_{X_T, N_T}\left(u,\kappa,t\right)$ with respect to time $t$ reads
\begin{equation*}
\frac{\partial \phi_{X_T, N_T}\left(u,\kappa,t\right)}{\partial t} = \phi_{X_T, N_T}\left(u,\kappa, t\right) \left[\frac{\partial u_{0, T}\left(t\right)}{\partial t}+ \mathbb{J}_{u_T\left(t\right)}^{\top}X_t+\frac{\partial\kappa_T\left(t\right)}{\partial t} N_t\right].
\end{equation*}
The equation in \eqref{eq:PDE1} becomes:
\begin{equation*}
\left[\frac{\partial u_{0, T}\left(t\right)}{\partial t}-\mu\left(1-e^{u_{T}\left(t\right)^{\top}\mathbf{e}+\kappa_T\left(t\right)}\right)\right]+ \left[\mathbb{J}_{u_T\left(t\right)}^{\top}-\left(1-e^{u_{T}\left(t\right)^{\top}\mathbf{e}+\kappa_T\left(t\right)}\right)\mathbf{b}^{\top}+u_{T}\left(t\right)^{\top}\mathbf{A}\right]X_t+\frac{\partial\kappa_T\left(t\right)}{\partial t}N_t=0,
\end{equation*}
and this result leads to the system in \eqref{eq:FinalRes1}, concluding the proof.
\end{proof}

\begin{remark}
Note that the joint characteristic function~\eqref{chf_simple_model} can be rewritten using \eqref{chf_lemma}. Specifically,
\begin{equation}\label{phi1_remark}
	\phi_{N_T - N_{t_0}}\left(\kappa\right) = \phi_{N_T - N_{t_0}}\left(\mathbf{0}, \kappa, t_0\right) e^{-i\kappa N_{t_0}}.
\end{equation}
Using Lemma~\ref{thmJointCharNX}, we have
\begin{equation}\label{phi2_remark}
	\phi_{N_T - N_{t_0}}\left(\mathbf{0}, \kappa, t_0\right) = \exp\left(u_{0, T}\left(t_0\right) +  u_{T}\left(t_0\right)^{\top} X_{t_0} + i \kappa N_{t_0}\right).
\end{equation}
Substituting~\eqref{phi2_remark} in \eqref{phi1_remark}, we finally
\begin{equation}
   \phi_{N_T - N_{t_0}}\left(\kappa\right) = \exp\left(u_{0, T}\left(t_0\right) +  u_{T}\left(t_0\right)^{\top} X_{t_0} \right).
\end{equation}
\end{remark}

\noindent In the case of normally distributed jump sizes, i.e., $\bar{J}^{\mathbb{Q}}\sim  \mathcal{N} \left(-\frac{1}{2}\sigma_J^2,\sigma_J^2\right)$, and recalling that $Y^{\mathbb{Q}}_n :=\sum_{k=1}^{n}J^\mathbb{Q}_{{k}}$ for any $n\geq 1$, the distribution of $Y^{\mathbb{Q}}_n$ is normally distributed with mean $-\frac{1}{2}\sigma_J^2  n$ and variance $\sigma_J^2  n$. \newline
The cumulative distribution functions $F_{Y^{\mathbb{Q}}_n }\left(\bar{d}\right)$ in \eqref{call_first} and $\bar{F}_{Y^{\mathbb{Q}}_n }\left(\bar{d}\right)$ defined in \eqref{cdf} can be written in terms of a standard normal cdf $\Phi\left(\cdot\right)$; that is:
\begin{equation}\label{cdf_normal}
	F_{Y^{\mathbb{Q}}_n }\left(\bar{d}\right) = \Phi \left(\frac{\bar{d} +\frac12 \sigma_J^2 n}{\sigma_J \sqrt{n}}\right) \quad \text{and} \quad \bar{F}_{Y^{\mathbb{Q}}_n }\left(\bar{d}\right) = \Phi \left(\frac{\bar{d} - \frac12 \sigma_J^2 n}{\sigma_J \sqrt{n}}\right).
\end{equation}
\begin{remark}
 Generalizing the model in \eqref{sde_toymodel} by adding a diffusive component on the price dynamics, the SDE writes  
\begin{equation*}
	\mbox{d} S_{t}=rS_{t^{-}}\mbox{d}t+ \sigma S_{t^{-}}\mbox{d}W_t + S_{t^{-}}\mbox{d}\tilde{Y}^{\mathbb{Q}}_t,
\end{equation*}
and its solution $S_T$, given the initial condition $S_{t_0}$, with $t_0 \geq T$ is
\begin{equation*}
	S_{T} = S_{t_0}\exp\left\{\left(r - \frac{1}{2} \sigma \right)\left(T-t_0\right) + \sigma W_T  + \sum_{k=1}^{N_{T}-N_{t_0}}J^\mathbb{Q}_{k}\right\}.
\end{equation*}
The pricing formula in \eqref{pricing_formula_appendix} can still be used where the quantities $\left[1- \bar{F}_{Y^{\mathbb{Q}}_n }\left(\bar{d}\right)\right]$ and $\left[1 - F_{Y^{\mathbb{Q}}_n }\left(\bar{d}\right)\right]$ are substituted respectively by $\text{N}\left(d_1\right)$ and $\text{N}\left(d_2\right)$ in the Black-Scholes formula with volatility $\sigma\left(n\right):= \sqrt{\sigma^2+n\sigma_J^2\frac{1}{T-t_0}}$.
\end{remark}

Now we present the case of shifted gamma distributed jump sizes. 
The condition $\mathbb{E}^\mathbb{Q}\left[\bar{J}^\mathbb{Q}\right] = 1$ implies that the jump size $\bar{J}^\mathbb{Q}$ is defined as $\bar{J}^\mathbb{Q}:= \Gamma_{\alpha, \beta} + \alpha \ln\left(1 - \frac{1}{\beta}\right)$ where the random variable $\Gamma_{\alpha, \beta}$ is gamma distributed with shape $\alpha>0$ and rate $\beta>1$. The quantity $Y^{\mathbb{Q}}_n$ becomes
\begin{equation*}	
	Y^{\mathbb{Q}}_n  = \Gamma_{\alpha n , \beta} + \alpha n \ln\left(1 - \frac{1}{\beta}\right),
\end{equation*}
which is the convolution of $n$ i.i.d. gamma random variables with the same distribution $\Gamma\left(\alpha, \beta\right)$. \newline 
The cumulative distribution function $F_{Y^{\mathbb{Q}}_n }\left(\bar{d}\right)$ in~\eqref{pricing_formula_appendix} can be written as
\begin{equation*}
F_{Y^{\mathbb{Q}}_n }\left(\bar{d}\right) =   \mathbb{Q}\left(\Gamma_{\alpha n , \beta}  \leq \bar{d} -\alpha n \ln\left(1 - \frac{1}{\beta}\right)\right).
\end{equation*}
In order to compute $\bar{F}_{Y^{\mathbb{Q}}_n }\left(\bar{d}\right)$, we consider the following expression:
\begin{eqnarray*}	
	\int_{\alpha n \ln\left(1 - \frac{1}{\beta}\right)}^{\infty} e^{cy} \mbox{d}\bar{F}_{Y^{\mathbb{Q}}_n}\left(y\right) =  \int_{\alpha n \ln\left(1 - \frac{1}{\beta}\right) }^{\infty} e^{cy} e^{y}\mbox{d} F_{Y^{\mathbb{Q}}_n}\left(y\right) 
\end{eqnarray*}
where the rhs is obtained using~\eqref{cdf}. We observe that the rhs is the moment generating function of $Y^{\mathbb{Q}}_n$ at $c+1$, that is:
\begin{eqnarray*}	
\int_{\alpha n \ln\left(1 - \frac{1}{\beta}\right)}^{\infty} e^{cy} e^{y}\mbox{d} F_{Y^{\mathbb{Q}}_n}\left(y\right) = \mathbb{E}^\mathbb{Q} \left[e^{\left(c+1\right)Y^{\mathbb{Q}}_n}  \right].
\end{eqnarray*}
The moment generating function of $Y^{\mathbb{Q}}_n$ for shifted gamma jump sizes is 
\begin{eqnarray*}
	\mathbb{E}^\mathbb{Q} \left[e^{\left(c+1\right)Y^{\mathbb{Q}}_n}  \right] 
	=\exp\left\{-\alpha n \ln\left(1 - \frac{c}{\beta - 1}\right) + \alpha c  n \ln\left(1 - \frac{1}{\beta}\right) \right\},
\end{eqnarray*}
that coincides with that of a shifted gamma with shape $\alpha n$ and new rate $\beta - 1$. 
Thus, $\bar{F}_{Y^{\mathbb{Q}}_n }\left(\bar{d}\right)$ becomes 
\begin{eqnarray*}
	\bar{F}_{Y^{\mathbb{Q}}_n} (\bar{d}) = 
     \mathbb{Q}\left( \Gamma_{\alpha n, \beta - 1}  \leq \bar{d} -\alpha n \ln\left(1 - \frac{1}{\beta}\right)\right).
\end{eqnarray*}

\end{document}